\theoremstyle{plain}
\newtheorem{theorem}{Theorem}
\newtheorem{definition}{Definition} 
\newtheorem{example}{Example} 
\newtheorem{lemma}{Lemma}
\newtheorem{corollary}{Corollary}
\newtheorem{claim}{Claim}
\newtheorem{assumption}{Assumption}
\newtheorem{remark}{Remark}
\DeclareMathOperator*{\argmax}{arg\,max}
\newcommand{\reals}{\mathbb{R}}
\newcommand{\cD}{\mathcal{D}}
\newcommand{\cA}{\mathcal{A}}
\newcommand{\E}{\mathbb{E}}
\newcommand{\cM}{\mathcal{M}}
\newcommand{\cF}{\mathcal{F}}
\newcommand{\cB}{\mathcal{B}}
\newcommand{\cI}{\mathcal{I}}
\newcommand{\nph}{\mathrm{NP}}
\newcommand{\cG}{\mathcal{G}}
\newcommand{\cV}{\mathcal{V}}
\newcommand{\cE}{\mathcal{E}}
\newcommand{\juba}[1]{\textcolor{blue}{[Juba: #1]}}
\newcommand{\ar}[1]{\textcolor{red}{[Aaron: #1]}}
\newcommand{\comm}[1]{\textcolor{green}{[Comment: #1]}}
\newcommand{\sk}[1]{\textcolor{green}{[Sampath: #1]}}
\newcommand{\esh}[1]{\textcolor{cyan}{[Eshwar: #1]}}
\renewcommand{\juba}[1]{\iffalse #1 \fi}
\renewcommand{\ar}[1]{\iffalse #1 \fi}
\renewcommand{\comm}[1]{\iffalse #1 \fi}
\renewcommand{\sk}[1]{\iffalse #1 \fi}
\renewcommand{\esh}[1]{\iffalse #1 \fi}
\title{Pipeline Interventions}
\author{Eshwar Ram Arunachaleswaran\thanks{University of Pennsylvania. Email: eshwar@seas.upenn.edu.}
\and Sampath Kannan\thanks{University of Pennsylvania. Email: kannan@cis.upenn.edu.}
\and Aaron Roth\thanks{University of Pennsylvania. Email: aaroth@cis.upenn.edu.}
\and Juba Ziani\thanks{University of Pennsylvania. Email: jziani@seas.upenn.edu.}
}
\begin{document}

\maketitle

\begin{abstract}
We introduce the \emph{pipeline intervention} problem, defined by a layered directed acyclic graph and a set of stochastic matrices governing transitions between successive layers. The graph is a stylized model for how people from different populations are presented opportunities, eventually leading to some reward. In our model, individuals are born into an initial position (i.e. some node in the first layer of the graph) according to a fixed probability distribution, and then stochastically progress through the graph according to the transition matrices, until they reach a node in the final layer of the graph; each node in the final layer has a \emph{reward} associated with it. The pipeline intervention problem asks how to best make costly changes to the transition matrices governing people's stochastic transitions through the graph, subject to a budget constraint. We consider two objectives: social welfare maximization, and a fairness-motivated maximin objective that seeks to maximize the value to the population (starting node) with the \emph{least} expected value. We consider two variants of the maximin objective that turn out to be distinct, depending on whether we demand a deterministic solution or allow randomization. For each objective, we give an efficient approximation algorithm (an additive FPTAS) for constant width networks. We also tightly characterize the ``price of fairness'' in our setting: the ratio between the highest achievable social welfare and the social welfare consistent with a maximin optimal solution. Finally we show that for polynomial width networks, even approximating the maximin objective to any constant factor is NP hard, even for networks with constant depth. This shows that the restriction on the width in our positive results is essential.
\end{abstract}

\section{Introduction}

Inequality can be difficult to correct by the time it manifests itself in consequential domains. For example, faculty in computer science departments are disproportionately male~(\citet{gender}), and although the reasons for this are varied and complex, it seems difficult to correct \emph{only} by intervening in the process of faculty hiring (although the solution likely involves some intervention at this stage). The problem is that interventions at the final stage of a long pipeline may not be enough (or the best way) to address iniquities that compound starting from earlier stages in the pipeline such as graduate school, college, high school, enrichment programs, all the way back to birth circumstances. Because each stage of, for example, employment pipelines feeds into the next, interventions that are isolated to any one stage can have difficulty controlling effects on final outcomes --- and although in practice it is difficult to fully understand such a system, we would ideally like to design proposed interventions at a system-wide level, rather than myopically. 

Thus motivated, we study an optimization problem within a stylized (and highly simplified) model of such a pipeline. Our model is a layered directed acyclic graph. The vertices in the first layer represent a coarse partitioning  of possible birth circumstances into a small number of types --- each vertex representing one of these types. There is a probability vector over these vertices and individuals are ``born'' into some vertex with these probabilities. The graph represents a Markov process that determines how individuals progress through the pipeline. From every vertex there is a stochastic transition matrix specifying the probability that an individual will progress to each vertex in the next layer of the pipeline. We might imagine, for example, that the proportion of children that enroll in each of several elementary schools (the second layer of such a pipeline) varies according to the neighborhood that they are raised in (the first layer). The proportion of children that then go on to enroll in each of several high schools may then vary according to the elementary school they attend, and so on. Finally, vertices at the last layer of the pipeline are associated with payoffs. One may then calculate the expected payoff of an individual as a function of their initial position. These payoffs may vary widely depending on this position. 

We are concerned with the problem of how best to invest limited resources so as to \emph{modify} the transition matrices governing different layers of this pipeline to achieve some goal. In the main body of the paper, we focus on a stylized model where the costs of modifying transition matrices are linear, for simplicity of exposition; we extend our results to more complex and realistic cost functions in the Appendix. We consider two goals: the first is simply maximizing social welfare --- the expected payoff for an individual chosen according to the given probability vector for the first layer. Although this is a natural objective, it can easily lead to solutions that are ``unfair'' in the sense that they will prioritize investments that lead to improvements for majority populations over minority populations, simply because majority populations, by their sheer numbers, contribute more to social welfare. The second goal we study is therefore to maximize the \emph{minimum} expected payoff of individuals, where the minimum is taken over all of the initial positions, i.e., layer 1 vertices. This ``maximin'' objective is a standard fairness-motivated objective in allocation problems \citep[see, e.g.,][]{maxmin1,maxmin2,maxmin3}. In fact, we study two different variants of this objective, that can be distinguished by the timing with which one wants to evaluate fairness. The \emph{ex-ante} maximin objective asks for a \emph{distribution} over budget-feasible modifications of the transition matrices, that maximize the minimum expected payoff over all initial positions. The \emph{ex-post} maximin objective asks for a single (i.e. deterministic) budget-feasible modification to the transition matrices. Because the problem we study is non-convex, these two goals are distinct --- which is preferred depends on \emph{when} one wants to evaluate the fairness of a solution: before or after the randomization. 

\subsection{Overview of Our Results}
Briefly, our main contributions are the following:
\begin{enumerate}
    \item We define and formalize the \emph{pipeline intervention problem} with the social welfare, ex-ante maximin, and ex-post maximin objectives. We also prove a separation between the ex-post and ex-ante maximin solutions. 
    \item We give an additive fully polynomial-time approximation scheme (FPTAS) for both the social welfare and ex-post maximin objectives for networks of constant width (but arbitrarily long depth). \label{list:2}
    \item We give an efficient reduction from the ex-ante maximin objective problem to the ex-post maximin objective problem via equilibrium computation in two-player zero-sum games. Combined with our results from \ref{list:2}, this yields an additive FPTAS for the ex-ante maximin objective problem for constant width networks as well. 
    \item We define and prove tight bounds on the ``price of fairness'', which compares the optimal social welfare that can be achieved with a given budget to the social welfare of ex-post maximin optimal solutions.
    \item Finally, we show that the pipeline intervention problem is NP hard even to approximate in the general case when the width $w$ is not bounded --- and hence that our efficient approximation algorithms cannot be extended to the general case (or even the case of constant depth, polynomial width networks).
\end{enumerate}

\subsection{Related Work}
There is an enormous literature in ``algorithmic fairness'' that has emerged over the last several years, that we cannot exhaustively summarize here --- but see \citet{RC18} for a recent survey. Most of this literature is focused on the myopic effects of a single intervention, but what is more conceptually related to our paper is work focusing on the longer-term effects of algorithmic interventions.

\citet{DL18} and \citet{pipelines} study the effects of imposing fairness constraints on machine learning algorithms that might be composed together in various ways to reach an eventual outcome. They show that generally fairness constraints imposed on constituent algorithms in a pipeline or other composition do not guarantee that the same fairness constraints will hold on the entire mechanism as a whole. (They also study conditions under which fairness guarantees \emph{are} well behaved under composition).  Two recent papers~(\citet{delayed,delayed2}) study parametric models by which classification interventions in an earlier stage can have effects on the data distribution at later stages, and show that for many commonly studied fairness constraints, their effects can either be positive or negative in the long term, depending on the functional form of the relationship between classification decisions and changes in the agent type distribution.

There is also a substantial body of work studying game theoretic models for how interventions affect ``fairness'' goals. This work dates back to \citet{CL93,FV92} in the economics literature, who propose game theoretic models to rationalize how unequal outcomes might emerge despite two populations being symmetrically situated. More recently, in the computer science literature, several papers consider more complicated models that are similar in spirit to \citet{CL93,FV92}. \citet{HC18} propose a  two-stage  model of a labor market with a ``temporary'' (i.e. internship) and ``permanent'' stage, and study the equilibrium effects of imposing a  fairness constraint on the temporary stage. \citet{fat20} consider a model of the labor market with higher dimensional signals, and study equilibrium effects of subsidy interventions which can lessen the cost of exerting effort. \citet{downstream} study the effects of admissions policies on a two-stage model of education and employment, in which a downstream employer makes rational decisions. \citet{endogenous} study a model of criminal justice in which crime rates are responsive to the classifiers used to determine criminal guilt, and study which fairness constraints are consistent with the goal of minimizing crime.

\section{Model}

The \emph{pipeline intervention} problem is defined by a layered directed acyclic graph $G = (V,E)$, where $V$ is the set of vertices (or nodes), and $E$ is the set of edges. The vertices are partitioned into $k$ \emph{layers} $L_1,L_2,\cdots L_k$ , each consisting of $w$ vertices. We say that $w$ is the \emph{width} of the graph.  For every $t \in [k-1]$, there is a directed edge from every $u \in L_t$ to every $v \in L_{t+1}$; the graph contains no other edge. In turn, every path from layer $L_1$ to layer $L_k$ must go through exactly one vertex in each layer $L_2, \ldots,L_{k-1}$ in this order. Intuitively, such a layered graph represents a pipeline, in which individuals start at initial positions in layer $1$, and transition through the graph to final positions in layer $k$, stochastically according to transition matrices which we define next.
This layered model can be used to abstractly represent real-life pipelines; such a pipeline, that has received attention in previous work (e.g.~\citet{downstream}), is the education and job market one. Nodes in the initial layer represent a coarse partitioning of the population based on family income levels and educational background. The second layer could represent pre-K experience. For example, one could have 3 nodes in the second layer representing no pre-K, Headstart, and private pre-K. See for example~\citet{Headstart} for a general discussion of as well as pointers to recent studies on the efficacy of Headstart programs. At the next level or two, nodes can represent different qualities of K-12 schools, based on a coarse partitioning of their performance under one of several widely-available metrics, such as the ones provided by~\citet{usnews,niche}.

The layer after that could be a coarse partitioning where nodes represent, for example, no college, technical or vocational school, and 2 and 4-year colleges coarsely grouped together based on perceived quality according to one of several college rankings. A subsequent layer could encode the details of a student's performance in college, such as their major and GPA, again under a coarse bucketing. The last layer, with numerical rewards could represent different types of employment with rewards determined by starting salaries and prospects for advancement. 

In a more accurate model, we might perhaps condition the probability of transition from node $u$ in layer $i$ to node $v$ in layer $i+1$ on the entire path taken by an individual leading up to node $u$. However, for mathematical tractability, we make the simplifying assumption that the process is Markovian, and this transition probability from $u$ is independent of prior history.

Let $\cM$ be the set of left stochastic matrices in $\reals^{w \times w}$: i.e., $M \in \cM$ if and only if for all $j \in [w]$, $\sum_{i \in [w]} M(i,j) = 1$, and for all $(i,j) \in [w]^2$, $M(i,j) \geq 0$. Let $\cD \triangleq \left\{x \in [0,1]^w:~\sum_{k=1}^w x(k) = 1 \right\}$ be the set of probability distributions over $[w]$. An instance of the pipeline intervention problem is defined by three elements:
\begin{enumerate}
    \item A set of initial transition matrices $M_t^0 \in \cM$ between layers $L_t$ and $L_{t+1}$, for all $t \in [k-1]$, such that for all $u \in L_t$, $v \in L_{t+1}$, $M_t^0(v,u)$ denotes the probability of transitioning from node $u$ to node $v$. Note that we will multiply any input distribution to the right of any transition matrix we use in the paper.
    \item An input distribution $D_1$ over the vertices in layer $1$, where $D_1(u)$ denotes the fraction of the population that starts at $u$ in $L_1$ as their initial position. Without loss of generality we assume $D_1(u) > 0$ for all initial positions $u \in L_1$.
    \item Finally, a reward $R(v) \geq 0$ corresponding to each vertex $v \in L_k$ in the final layer. We let $R = \left(v\right)^\top_{v \in L_k}$ denote the vector of all rewards on layer $k$. We assume without loss of generality that the rewards on any two vertices in the final layer are distinct: for all $v,v' \in L_k$,  $R(v) \neq R(v')$. We can also assume without loss of generality (up to renaming) that $R(1) > \ldots > R(w)$.
\end{enumerate}

In our model, each vertex $u$ in the \emph{starting layer} $L_1$ represents the initial position of some population; abusing notation, we refer to this population also as $u$.  An individual in population $u$ transitions to a node in layer $L_2$, then a node in layer $L_3$, up until they reach a node $v$ in \emph{destination layer} $L_k$, and obtains a reward of $R(v)$, with probability given by the transition matrices $M_1^0$ to $M_{k-1}^0$. The expected reward of an individual from population $u$ is therefore given by $R^\top M_{k-1}\cdot \ldots\cdot M_1 e_u$, where $e_u$ represents the $w$-dimensional standard basis vector corresponding to index $u$. The aim of the pipeline intervention problem is to \emph{modify} the transition matrices between pairs of adjacent layers so as to improve these expected rewards in some way (we study several objectives) given a finite resource constraint.

We will take the point of view of a centralized designer, who can invest money into modifying the transition matrices between layers. We assume some edges can be modified, while some edges cannot; the edges that can be modified are called \emph{malleable}, and the edges that cannot be modified are called \emph{non-malleable}. We denote the set of malleable edges between layers $t$ and $t+1$ by $E^t_{mal}$   and the set of non-malleable edges by $\overline{E^t_{mal}}$ its complement. Further, we assume that modifying these transitions matrices comes at a cost, and that on a given layer $t$, the cost of transforming $M_t^0$ to some alternative $M_t \in \cM$ is given by:
\[
c(M_t,M_t^0) \triangleq \sum_{(i,j) \in [w]^2} \left\vert M_t(i,j) - M_t^0(i,j) \right\vert. 
\]

\begin{remark}
A critique of such cost functions is that they may not be rich enough to model the cost of improving transitions and opportunities between different stages of, say, the education pipeline.

To address this, we note that while we focus on these simple cost functions in the main body of the paper for simplicity of exposition, our algorithmic results (of Sections~\ref{sec:SW_max},~\ref{sec:expost_maximin} and~\ref{sec:exante_maximin}) extend to more general and possibly more realistic cost functions --- so long as they are convex and increase at least linearly as the distance $\sum_{(i,j) \in [w]^2} \left\vert M_t(i,j) - M_t^0(i,j) \right\vert$ between modified transition matrix $M_t$ and initial transition matrix $M_t^0$ increases. We discuss this extension in more detail in Appendix~\ref{app:Lipschitz}.

This extension allows us to model more realistic situations such as those where the cost functions are not linear, but also those where different edges have different costs --- as motivated by the fact that real-life interventions often become more expensive the later they happen.
\end{remark}

The designer has a total budget of $B$, and can select target transition matrices  $(M_1,\ldots,M_{k-1})$ so long as the cost of modifying the initial transition matrices to his targets does not exceed his budget, and only malleable edges have been modified. That is, he must select target transition matrices subject to the constraint:
\[
\sum_{t=1}^{k-1} c(M_t,M_t^0) \leq B.
\]
We let 
\begin{align*}
&\mathcal{F}\left(B,M_1^0,\ldots,M_{k-1}^0\right) 
\\&= \left\{
\left(M_1,\ldots,M_{k-1}\right):~\sum_{t=1}^{k-1} c\left(M_t,M_t^0 \right) \leq B,~M_t \in \cM~\forall t,~M_t(i,j) = M_t^0(i,j)~\forall (i,j) \in \overline{E^{mal}_t}
\right\}
\end{align*}
be the set of feasible sets of transition matrices, given initial matrices $M_1^0,\ldots,M_{k-1}^0$ and budget $B$. We will consider several  objectives that we may wish to optimize. The first is simply to maximize the overall social welfare  (i.e. the expected reward of an individual chosen according to $D_1$), which is given by 
\[
W \left(M_1,\ldots,M_{k-1}\right) \triangleq R^\top M_{k-1} \ldots M_1 D_1.
\]
The second objective aims to compute a ``fair'' outcome in the sense that it evaluates a solution according to the expected payoff of the worst-off members of society (here interpreted as individuals starting at the pessimal initial position), rather than according to the average. This is the classic maximin objective. It turns out that there are two distinct variants of this problem, depending on whether one wishes to allow randomized solutions (i.e. distributions over matrices) or not. We will elaborate on this distinction in the next section, but in the deterministic variant we wish to optimize
\[
\min_{j \in [w]} R^\top M_{k-1} \ldots M_1 e_j,
\]
where $e_j \in \reals^w$ is the unit vector with $e_j(j) = 1$, and $e_j(i) = 0$ for all $i \neq j$. 

\begin{remark}\label{rem:simplifications}
    We have assumed that each layer has \emph{exactly} $w$ vertices. In fact, all of our results generalize to the case in which each layer has $\leq w$ vertices.
\end{remark}

\subsection{Optimization Problems of Interest}
In this paper, we will provide algorithms to solve the following three optimization problems. We note at the outset that these optimization problems are non-convex, due to the fact that our objective values are not convex for $k \geq 2$. Hence we should not expect efficient algorithms in the fully general setting; we will give efficient algorithms for networks of constant width $w$ (i.e. algorithms whose running time is polynomial in the depth of the network $k$), and show that outside of this class, the problem is NP hard even to approximate.

\paragraph{Social welfare maximization}
The first optimization problem we aim to solve is that of maximizing the social welfare of our network, under our budget constraint:
\begin{align}\label{SW_program}
   \begin{split}
                OPT_{SW}=
                \max_{M_1, \ldots, M_{k-1}}&~~R^\top M_{k-1} \ldots M_1 D_1^0
                \\\text{s.t.}&~~ \left(M_1,\ldots,M_{k-1}\right) \in \cF\left(B,M_1^0,\ldots,M_{k-1}^0\right)
        \end{split}
\end{align}

\paragraph{Ex-post maximin problem}
The second optimization problem aims to maximize the minimum expected reward that a population can obtain, where the minimum is taken over all initial positions: 
\begin{align}\label{maxmin_program}
\begin{split}
OPT_{MM} = 
\max_{M_1,\ldots,M_{k-1}}&~~ \min_{j \in [w]} R^\top M_{k-1} \ldots M_1 e_j
\\\text{s.t.}&~~ \left(M_1,\ldots,M_{k-1}\right) \in \cF\left(B,M_1^0,\ldots,M_{k-1}^0\right)
\end{split}
\end{align}

\paragraph{Ex-ante maximin problem}
The third optimization problem has the same objective as Program~\ref{maxmin_program}, but allows randomization over sets of transition matrices that satisfy the budget constraint. Note that the budget constraint must be satisfied \emph{ex-post}, for \emph{any realization} of the set of transition matrices. To define this optimization problem, we let $\Delta \cF\left(B,M_1^0,\ldots,M_{k-1}^0\right)$ the set of probability distributions with support $\mathcal{F}\left(B,M_1^0,\ldots,M_{k-1}^0\right)$. The optimization program is given by:
\begin{align}\label{exante_maxmin_program}
\begin{split}
OPT_{RMM} = 
\max_{\Delta M}&~~ \min_{j \in [w]} R^\top \E_{M \sim \Delta M} \left[M_{k-1} \ldots M_1\right] e_j
\\\text{s.t.}&~~\Delta M \in \Delta\cF\left(B,M_1^0,\ldots,M_{k-1}^0\right),
\end{split}
\end{align}
where the expectation is taken over the randomness of distribution $\Delta M$. Note that where Program~\ref{exante_maxmin_program} can be viewed as optimizing an \emph{ex-ante} notion of fairness, in which we are evaluated on the minimum expected value of individuals starting at any initial position, \emph{before the coins of $\Delta M$ are flipped.} In contrast, Program~\ref{maxmin_program} evaluates the minimum expected value of individuals starting at any initial position for an already established set of transition matrices. 

\begin{remark}
Programs~\eqref{SW_program},~\eqref{maxmin_program} and~\eqref{exante_maxmin_program} all have solutions, and as such the use of maxima instead of suprema is well defined. To see this, first note that the feasible sets are non-empty since $\left(M_1^0,\ldots,M_{k-1}^0\right) \in \cF\left(B,M_1^0,\ldots,M_{k-1}^0\right)$ for all $B \geq 0$. For Program~\eqref{SW_program}, the existence of a maximum is an immediate consequence of the fact that the objective function is continuous in $\left(M_1,\ldots,M_{k-1}\right)$ and $\cF$ and $\cM$ are compact sets. For Program~\eqref{maxmin_program}, note that no solution can have $R^\top M_{k-1} \ldots M_1 e_j \geq \Vert R \Vert_{\infty}$ for any $j$, as $M_{k-1} \ldots M_1 e_j$ is a probability distribution. Hence, we can rewrite the program as 
\begin{align*}
\begin{split}
\max_{v,M_1,\ldots,M_{k-1}}&~~v
\\\text{s.t.}&~~0 \leq v \leq \Vert R \Vert_{\infty},
\\&~~R^\top M_{k-1} \ldots M_1 e_j \geq v~\forall j \in [w],
\\&~~ \left(M_1,\ldots,M_{k-1}\right) \in \cF\left(B,M_1^0,\ldots,M_{k-1}^0\right).
\end{split}
\end{align*}
This is an optimization problem with a continuous objective function over a compact set, so it admits a solution. A similar argument follows for Program~\eqref{exante_maxmin_program}.
\end{remark}

\section{Algorithmic Preliminaries}

Our paper uses a dynamic programming approach for solving programs~\eqref{SW_program} and~\eqref{maxmin_program}. (Our solution to program~\eqref{exante_maxmin_program} is a game-theoretic reduction to our solution to program \eqref{maxmin_program}). Our algorithms will search over possible input distributions in $\cD$ starting from layer $L_t$ for all $t \in \{2,\ldots,k-2\}$, and over possible ways of splitting the total budget $B$ and allocating budget $B_t$ to the transition from layer $L_t$ to layer $L_{t+1}$, for all $t \in [k-1]$. To do so, we will need to discretize both the budget space $[0,B]$ and the probability space $\cD$.

\paragraph{Cost of Discretizing the Budget}
To discretize the budget space, we define $\cB(\varepsilon) = \left\{k\varepsilon,~\forall k \in \mathcal{N} \right\}$ to be the set of numbers on the real line that are multiples of $\varepsilon$. We consider the following discretized version of Programs~\ref{SW_program} and~\ref{maxmin_program} (We do not need to explicitly consider Program~\eqref{exante_maxmin_program}, since our solution for this one will be a reduction to our solution to Program~\eqref{maxmin_program}):
\begin{align}\label{dscrt_SW_program}
   \begin{split}
                OPT_{SW}^\varepsilon = \max_{M_1, \ldots, M_{k-1}}&~~R^\top M_{k-1} \ldots M_1 D_1
                \\\text{s.t.}&~~ c(M_t,M_t^0) \leq B_t~\forall t \in [k-1]
                \\&~B_t \in \cB(\varepsilon)~\forall t \in [k-1],~\sum_{t=1}^{k-1} B_t \leq B
                \\&~M_t(i,j) = M_t^0(i,j)~\forall (i,j) \in \overline{E^{mal}_t},~M_t \in \cM~\forall t
        \end{split}
\end{align}
and 
\begin{align}\label{dscrt_maxmin_program}
   \begin{split}
                OPT_{MM}^\varepsilon = \max_{M_1, \ldots, M_{k-1}}&~~\min_{j \in [w]} R^\top M_{k-1} \ldots M_1 e_j
                \\\text{s.t.}&~~ c(M_t,M_t^0) \leq B_t~\forall t \in [k-1]
                \\&~B_t \in \cB(\varepsilon)~\forall t \in [k-1],~\sum_{t=1}^{k-1} B_t \leq B
                \\&~M_t(i,j) = M_t^0(i,j)~\forall (i,j) \in \overline{E^{mal}_t},~M_t \in \cM~\forall t.
        \end{split}
\end{align}
We show that this discretization does not affect the optimal value of our problems by much:
\begin{claim}\label{clm: budget_loss}
There exists a feasible solution $\left(M_1^\varepsilon,\ldots,M_{k-1}^\varepsilon\right)$ to Program~\eqref{dscrt_SW_program} (resp. Program~\eqref{dscrt_maxmin_program}) with objective value at least $OPT_{SW} - (k-1) \varepsilon \left\Vert R \right\Vert_{\infty}$ 
(resp. $OPT_{MM}- (k-1) \varepsilon \left\Vert R \right\Vert_{\infty}$).
\end{claim}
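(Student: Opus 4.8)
Let $(M_1^*,\ldots,M_{k-1}^*)$ be an optimal solution to Program~\eqref{SW_program} (resp.\ Program~\eqref{maxmin_program}), and write $c_t \triangleq c(M_t^*,M_t^0)$, so that $\sum_{t=1}^{k-1} c_t \le B$. The plan is to round each layer's budget \emph{down} to a multiple of $\varepsilon$, and compensate by pulling the intervention slightly back toward the original matrix. Concretely, set $B_t^\varepsilon \triangleq \varepsilon\lfloor c_t/\varepsilon\rfloor \in \cB(\varepsilon)$, let $\lambda_t \triangleq B_t^\varepsilon / c_t$ if $c_t>0$ and $\lambda_t\triangleq 1$ otherwise, and define the candidate solution
$M_t^\varepsilon \triangleq \lambda_t M_t^* + (1-\lambda_t)M_t^0$.
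First I would check feasibility for Program~\eqref{dscrt_SW_program} (resp.~\eqref{dscrt_maxmin_program}): $\cM$ is convex so $M_t^\varepsilon\in\cM$; since both $M_t^*$ and $M_t^0$ agree with $M_t^0$ on every non-malleable entry, so does the convex combination $M_t^\varepsilon$; the cost is linear along the segment from $M_t^0$ to $M_t^*$, i.e.\ $c(M_t^\varepsilon,M_t^0)=\lambda_t c_t = B_t^\varepsilon$, which is $\le B_t^\varepsilon$; and $\sum_t B_t^\varepsilon \le \sum_t c_t \le B$. Thus $(M_1^\varepsilon,\ldots,M_{k-1}^\varepsilon)$ with the budget split $(B_1^\varepsilon,\ldots,B_{k-1}^\varepsilon)$ is feasible.

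Next I would bound the objective loss. The key elementary facts are: (i) if $M$ is left-stochastic then $\|Mx\|_1\le\|x\|_1$ for every $x$, and $\|x\|_1=1$ is preserved when $x$ is a probability distribution; (ii) the $\ell_1\to\ell_1$ operator norm of a matrix $A$ is at most $\sum_{i,j}|A(i,j)|$; and (iii) $c(M_t^\varepsilon,M_t^*) = (1-\lambda_t)c_t = c_t - B_t^\varepsilon \le \varepsilon$. Now fix a probability distribution $x$ (take $x=D_1$ for the social welfare objective, or $x=e_j$ for the maximin objective). Using the hybrid matrices $H_s \triangleq M_{k-1}^*\cdots M_{s+1}^* M_s^\varepsilon M_{s-1}^\varepsilon\cdots M_1^\varepsilon$ (so $H_0 = M_{k-1}^*\cdots M_1^*$ and $H_{k-1}=M_{k-1}^\varepsilon\cdots M_1^\varepsilon$), I would telescope:
\[
R^\top(H_{k-1}-H_0)x \;=\; \sum_{t=1}^{k-1} R^\top M_{k-1}^*\cdots M_{t+1}^*\,(M_t^\varepsilon-M_t^*)\,M_{t-1}^\varepsilon\cdots M_1^\varepsilon\, x .
\]
For a fixed $t$, the vector $p \triangleq M_{t-1}^\varepsilon\cdots M_1^\varepsilon x$ is a probability distribution by (i); then $\|(M_t^\varepsilon-M_t^*)p\|_1 \le c(M_t^\varepsilon,M_t^*)\le \varepsilon$ by (ii)--(iii); applying the remaining left-stochastic matrices $M_{t+1}^*,\ldots,M_{k-1}^*$ does not increase the $\ell_1$ norm by (i); and finally $|R^\top y|\le \|R\|_\infty\|y\|_1$. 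Hence each of the $k-1$ summands has absolute value at most $\varepsilon\|R\|_\infty$, giving $\bigl|R^\top(H_{k-1}-H_0)x\bigr|\le (k-1)\varepsilon\|R\|_\infty$.

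Finally I would conclude. For the social welfare case, taking $x=D_1$ gives $W(M_1^\varepsilon,\ldots,M_{k-1}^\varepsilon)\ge W(M_1^*,\ldots,M_{k-1}^*)-(k-1)\varepsilon\|R\|_\infty = OPT_{SW}-(k-1)\varepsilon\|R\|_\infty$. For the maximin case, the bound above applied with $x=e_j$ gives, for every $j\in[w]$, $R^\top M_{k-1}^\varepsilon\cdots M_1^\varepsilon e_j \ge R^\top M_{k-1}^*\cdots M_1^* e_j - (k-1)\varepsilon\|R\|_\infty \ge OPT_{MM}-(k-1)\varepsilon\|R\|_\infty$, where the last inequality holds because $\min_j R^\top M_{k-1}^*\cdots M_1^* e_j = OPT_{MM}$; taking the minimum over $j$ yields the claim. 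I expect the only nonroutine step to be setting up the hybrid/error-propagation argument cleanly — in particular recognizing that a single-layer perturbation of $\ell_1$-size $\le\varepsilon$ contributes at most $\varepsilon\|R\|_\infty$ to the final reward because all the surrounding matrices are $\ell_1$-contractions — while feasibility and the round-down-and-rescale construction are straightforward.
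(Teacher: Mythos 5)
Your proof is correct and follows essentially the same strategy as the paper's: round each layer's budget down to a multiple of $\varepsilon$, back off the optimal $M_t^*$ toward $M_t^0$ by $\ell_1$-distance at most $\varepsilon$, and compose the resulting per-layer loss of at most $\varepsilon\|R\|_\infty$ across the $k-1$ layers via a hybrid/telescoping argument. Your convex-combination construction $M_t^\varepsilon = \lambda_t M_t^* + (1-\lambda_t)M_t^0$ is a cleaner, equivalent instantiation of the paper's coordinate-wise $\alpha(v,u)$ adjustments (the key observation being that $c(\cdot,M_t^0)$ is linear along the segment from $M_t^0$ to $M_t^*$, so feasibility and the $\ell_1$ bound come for free).
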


We provide a brief proof sketch below, and defer the full proof to Appendix~\ref{app:budget_loss}.
\begin{proof}[Proof Sketch]
We prove this result by constructing transition matrices $M_t^\varepsilon$ that use roughly $\varepsilon$ budget less than $M_t^*$. We show that we can do so so as to only lose welfare of the order of $\epsilon$ in each of the $k-1$ layer transitions we consider, and that this loss composes additively.
\end{proof}

\begin{definition}
Let $K \subseteq \reals^{w}$. We call a subset $S$ of $K$ an $\varepsilon$-net for $K$ with respect to the $\ell_1$-norm if and only if for every $D \in K$, there exists $D' \in S$ such that 
\[
\Vert D - D' \Vert_1 \leq \varepsilon.
\]
\end{definition}

\begin{claim}[$\varepsilon$-nets in $\ell_1$-distance for $\cD$]\label{clm:eps_net}
Take $\varepsilon > 0$. There exists an $\varepsilon$-net $\cD(\varepsilon)$ of $\cD$ with respect to the $\ell_1$-norm that has size $\left(\frac{1}{\varepsilon}\right)^w$.
\end{claim}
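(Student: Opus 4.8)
The plan is to take $\cD(\varepsilon)$ to be a uniform grid on $\cD$. Fix a positive integer $m$ with $1/m \le \varepsilon$ (we discuss how large to take it below) and set $\cD(\varepsilon) = \{ x \in \cD :\ m\,x(i) \in \mathbb{Z}_{\ge 0}\ \text{for all}\ i \in [w] \}$, i.e. the probability vectors all of whose coordinates are integer multiples of $1/m$. Two things then have to be checked: that this is an $\varepsilon$-net for $\cD$ in $\ell_1$, and that it has the claimed cardinality.

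For the net property, given $D \in \cD$ I would round it to a grid point $D'$ while treating the last coordinate separately: set $D'(i) = \tfrac1m\lfloor m D(i)\rfloor$ for $i = 1,\dots,w-1$ (round the first $w-1$ coordinates down onto the grid) and $D'(w) = 1 - \sum_{i<w} D'(i)$. Because each of the first $w-1$ coordinates was rounded down, $\sum_{i<w} D'(i) \le \sum_{i<w} D(i) \le 1$, so $D'(w) \ge 0$ (in fact $D'(w) \ge D(w)$), and $D'(w)$ is automatically a multiple of $1/m$ since $1$ and all the $D'(i)$ are; hence $D' \in \cD(\varepsilon)$. Now $0 \le D(i) - D'(i) < 1/m$ for $i < w$, and $D'(w) - D(w) = \sum_{i<w}(D(i) - D'(i)) < (w-1)/m$, so $\Vert D - D'\Vert_1 < 2(w-1)/m$; taking $m$ of order $w/\varepsilon$ makes this at most $\varepsilon$. (A largest-remainder rounding that spreads the deficit over several coordinates improves the constant but is unnecessary.)

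For the cardinality, the number of grid points is $\lvert\{(a_1,\dots,a_w)\in\mathbb{Z}_{\ge0}^w : \sum_i a_i = m\}\rvert = \binom{m+w-1}{w-1}$ by stars and bars. With $m = \Theta(w/\varepsilon)$ this is $\Theta_w(\varepsilon^{-(w-1)})$, which for a fixed width $w$ sits below $(1/\varepsilon)^w$ once $\varepsilon$ is small — the spare factor of $1/\varepsilon$ absorbs the $w$-dependent constant. (If one wants the bound literally for all $\varepsilon$, take $m = \lceil 1/\varepsilon\rceil$ instead, giving $\binom{\lceil 1/\varepsilon\rceil + w - 1}{w-1}$ grid points at the cost of a net guarantee of $O(w)\varepsilon$; every downstream use rescales $\varepsilon$ by constants, so this is harmless.) The one point needing care is exactly this tension: coordinatewise rounding is an $\ell_\infty$ operation, so controlling $\Vert D - D'\Vert_1$ forces the grid to be a factor $\Theta(w)$ finer than $\varepsilon$, and one must then check that the resulting stars-and-bars count still fits under $(1/\varepsilon)^w$. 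Keeping the rounded vector inside $\cD$ is the only other bookkeeping detail, and the ``round the first $w-1$ coordinates down, let the last absorb the slack'' construction dispatches it cleanly.
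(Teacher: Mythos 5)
Your proof is correct, but it takes a genuinely different route from the paper's. The paper argues existence non-constructively: it builds the net greedily, repeatedly adding any point at $\ell_1$-distance more than $\varepsilon$ from all points added so far, and then bounds the number of such points by a packing argument (the $\ell_1$-balls of radius $\varepsilon/2$ around distinct net points are disjoint, so a volume comparison with the simplex yields $|\cD(\varepsilon)| \le (1/\varepsilon)^w$). You instead exhibit an explicit grid $\{x \in \cD : m\,x(i) \in \mathbb{Z}_{\ge 0}\ \forall i\}$, verify the covering property by a concrete rounding map (round $w-1$ coordinates down, let the last absorb the deficit), and count by stars and bars. Your construction has the real advantage of being explicit: the dynamic programs in Sections~\ref{sec:SW_max} and~\ref{sec:expost_maximin} need to \emph{enumerate} over $\cD(\varepsilon)$, and the paper's greedy existence argument does not directly yield an enumeration procedure, whereas your grid does. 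The tradeoff you identify is genuine --- coordinatewise rounding controls $\ell_\infty$ rather than $\ell_1$, which forces $m = \Theta(w/\varepsilon)$ and a count of $\binom{m+w-1}{w-1}$ rather than a clean $(1/\varepsilon)^w$ --- but as you observe this is $O_w(\varepsilon^{-(w-1)})$ for fixed $w$, which sits under $(1/\varepsilon)^w$ for small $\varepsilon$, and in the only regime where the algorithms are polynomial-time (constant $w$) a constant rescaling of $\varepsilon$ absorbs the discrepancy. It is also fair to note that the paper's volume comparison is itself a bit loose (it compares a $w$-dimensional $\ell_1$-ball volume against the $(w-1)$-dimensional simplex, and balls around boundary points of $\cD$ are not contained in $\cD$), so both arguments really give $(c_w/\varepsilon)^{w-1}$ for some $w$-dependent $c_w$; neither establishes the literal constant $1$ in the base, and neither needs to for anything downstream.
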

This is a standard proof, included in Appendix \ref{app:eps_net} for completeness.

\section{Social Welfare Maximization}\label{sec:SW_max}

We want to solve the following optimization problem: 
\begin{align}
   \begin{split}
                \max_{M_1, \ldots, M_{k-1}}&~~R^\top M_{k-1} \ldots M_1 D_1
                \\\text{s.t.}&~~ \sum_{t=1}^{k-1} c\left(M_t, M_t^0\right) \leq B,
                \\&~~M_t \in \cM~\forall t \in [k-1],
    \end{split}
\end{align}

\subsection{A Dynamic Programming Algorithm for Social Welfare Maximization}
\label{subsection:dpsection}
In this section, we describe a dynamic programming algorithm for approximately solving the problem above on long skinny networks. The algorithm will run in polynomial time when the width $w$ of the network is small; its running time is polynomial in the depth $k$ of the network, but exponential in the width $w$. The formal description is given in Algorithm~\ref{alg: max_SW}. Our algorithm works backwards, starting from the final transition matrix from layer $L_{k-1}$ to $L_{k}$. It builds up the solutions to sub-problems parameterized by 
three parameters --- a layer $t$, a starting distribution over the vertices in layer $t$, and a budget $B_{\geq t}$ that can be used at layers $\geq t$. For each sub-problem, it computes an approximately welfare-optimal solution. Once all of these sub-problems have been solved, the optimal solution to the original problem can be read off from the ``sub-problem'' in which $t = 1$, the starting distribution is the distribution on initial positions, and $B_{\geq 1} = B$. Here is the informal description of the algorithm: 
\begin{enumerate}

    \item For $t$ going backwards from $k-1$ to $1$, the algorithm does the following exploration over budget splits and  probability distributions $D_t, D_{t+1} \in \cD(\varepsilon)$ (an $\varepsilon$-net for the $w$-dimensional simplex in $\ell_1$ norm) on $L_t$:
    \begin{enumerate}
    \item The algorithm explores all discretized splits of a budget $B_{\geq t}$ to be used for layers $t$ to $k-1$ into a budget $B_t$ to expend on layer $t$ and a budget $B_{\geq t + 1}$ to expend on the remaining layers $t+1$ to $k-1$, as well as all choices of target output probability distribution $D_{t+1} \in \cD(\varepsilon)$ on layer $L_{t+1}$ and the starting probability distribution $D_{t} \in \cD(\varepsilon)$. Informally, we can think of these ``target'' and ``initial'' probability distributions as guesses for what the distribution on vertices in layer $t+1$ and layer $t$ look like in the optimal solution. Recall that for each $D_{t+1}$ and $B_{\geq t+1}$, our algorithm has already computed a near-optimal solution for a smaller sub-problem, which we will utilize in the next step. 
    \item\label{step2} The algorithm then finds a transition matrix  from $L_t$ to $L_k$ that maximizes welfare when the starting distribution on layer $t$ is $D_t$ and the remaining transition matrices are fixed as in the solution to the corresponding sub-problem. Although the overall welfare-maximization problem is non-convex, this sub-problem can be solved as a linear program (Program~\ref{SW_DP}) because all transition matrices except for one have been fixed as the solution to our sub-problem. 
    \item Finally, the algorithm picks and stores the recovered transition matrices from layer $L_t$ to $L_k$ that yield the highest reward, among all the transition matrices recovered from step~\ref{step2}.  
    \end{enumerate}
\end{enumerate}

We remark that while (for notational simplicity) our algorithm is written as if all layers have size exactly $w$, it can easily be extended to the case in which all layers have size \emph{at most} $w$.

\begin{algorithm}[ht!]
\SetAlgoLined
\KwIn{Input distribution $D_1$, reward vector $R$, initial transition matrices $M_1^0,\ldots,M_{k-1}^0$, budget $B$, discretization parameter $\varepsilon$.}
\KwOut{Transition $M(B^\varepsilon,D_1^0)$ from $L_1$ to $L_k$.}

\textbf{Initialization:} Let $B_{\geq k} = 0$, $M(B_{\geq k}, D_{k}) = I$, $B^\varepsilon = \max \{x \in \cB(\varepsilon):~x \leq B\}$.\\
    \For{layer $t = k-1, \ldots, 1$}{
        \For{all distributions $D_t \in \cD(\varepsilon)$ if $t \neq 1$ ($D_t = D_1^0$ if $t = 1$) and budgets $B_{\geq t} \in \cB(\varepsilon)$ with $B_{\geq t} \leq B$}{
            \For{all distributions $D_{t+1} \in \cD(\varepsilon)$ and budgets $B_{\geq t+1} \leq B_{\geq t}$ such that $B_{\geq t+1} \in \cB(\varepsilon)$}{
            Solve linear program
            
            \begin{align}\label{SW_DP}
                \begin{split}
                M_t(B_{\geq t},B_{\geq t+1},D_t,D_{t+1}) 
                = \argmax_{M_t}&~~R^\top M(B_{\geq t+1}, D_{t+1}) M_t D_t
                \\\text{s.t.}&~~ c\left( M_t, M_t^0 \right) \leq B_{\geq t} - B_{\geq t+1} ,
                \\&~~M_t(i,j) = M_t^0(i,j)~\forall (i,j) \in \overline{E_t^{mal}}
                \\&~~M_t \in \cM
                \end{split}
                \end{align}
            }
            Pick $B_{\geq t+1}, D_{t+1}$ leading to the highest objective value in Program~\ref{SW_DP}, and set $M(B_{\geq t},D_{t}) = M(B_{\geq t+1}, D_{t+1}) M_t(B_{\geq t},B_{\geq t+1},D_t,D_{t+1})$.\\ 
        }
    }
\textbf{Return} $M(B^\varepsilon,D_1)$.
\caption{Dynamic Program for (Approximate) Social Welfare Maximization.}\label{alg: max_SW}
\end{algorithm}

We briefly note why Program~\eqref{SW_DP} is a linear program. The objective is linear because only the matrix $M_t$ represents variables. Thus we simply need to verify that the constraint on the cost is linear.

\begin{definition}
We say that a transition matrix $M_t \in \cM$ is feasible with respect to a budget split $B_{\geq t}, B_{\geq t+1}$ if and only if 
\[
c\left(M_t,M_t^0 \right) \leq B_{\geq t+1} - B_{\geq t}. 
\]
and $M_t(i,j) = M_t^0(i,j)$ for every non-malleable edge (i,j).
\end{definition}
Note that saying that $M_t$ feasible with respect to $B_{\geq t}, B_{\geq t+1}$ is equivalent to saying that $M_t$ is a feasible solution to Program~\eqref{SW_DP} with parameters $B_{\geq t},B_{\geq t+1},D_t,D_{t+1}$ for any $D_t, D_{t+1} \in \cD(\varepsilon)$. 
The constraint $c\left(M_t,M_t^0 \right) \leq B_{\geq t+1} - B_{\geq t}$ can be equivalently replaced by $2w^2 +1$ linear constraints. To do so, we introduce $w^2$ variables - $a_1,a_2,\cdots a_{w^2}$. The constraint can then be rewritten in the form $\sum_{i=1}^{w^2} |f_i| \le B_{\geq t+1} - B_{\geq t}$, where each $f_i$ is a linear combination of the variables. We can thus express the budget constraint of Program~\ref{SW_DP} by the following set of linear constraints:
\begin{enumerate}
    \item $f_i \le a_i~~\forall i \in [w^2]$
    \item$-f_i \le a_i~~\forall i \in [w^2]$
    \item $\sum_{i=1}^{w^2} a_i \le B_{\geq t+1} - B_{\geq t}$.
\end{enumerate} 
Thus, Program~\ref{SW_DP} can be written as a linear program with the number of constraints and variables being polynomial in $w$.


\subsection{Running Time and Social Welfare Guarantees}

We provide the running time and social welfare guarantees of Algorithm~\ref{thm: SW_approximation} below. 

\begin{theorem}\label{thm: SW_approximation}
Algorithm~\ref{alg: max_SW} instantiated with discretization parameter $\varepsilon$ yields a solution achieving social welfare at least $OPT - 3(k-1) \varepsilon \Vert  R \Vert_{\infty}$, and has running time $O\left(k \frac{B}{\varepsilon} \left(\frac{1}{\varepsilon}\right)^{w^2} f(w) \right)$, where $f(w)$ is any upper-bound on the running time for solving linear Program~\ref{SW_DP}, which is always polynomial in $w$.
\end{theorem}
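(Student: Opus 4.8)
The plan is to prove correctness and running time separately, and to prove correctness by reverse induction on the layer index $t$, matching up the dynamic-programming table entries $M(B_{\geq t}, D_t)$ with the optimal solution restricted to layers $\geq t$.

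\textbf{Correctness.} The first step is to invoke Claim~\ref{clm: budget_loss} to reduce to the \emph{discretized-budget} Program~\eqref{dscrt_SW_program}: there is a feasible solution $(M_1^\varepsilon, \ldots, M_{k-1}^\varepsilon)$ to \eqref{dscrt_SW_program} with welfare at least $OPT_{SW} - (k-1)\varepsilon\|R\|_\infty$. This solution has, for each $t$, a well-defined budget $B_t^\varepsilon \in \cB(\varepsilon)$ spent on layer $t$; write $B^\varepsilon_{\geq t} = \sum_{s \geq t} B^\varepsilon_s$ and let $D^\varepsilon_t = M^\varepsilon_{t-1}\cdots M^\varepsilon_1 D_1$ be the induced distribution on layer $L_t$. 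The core claim to prove by reverse induction on $t$ (from $t = k$ down to $t=1$) is: for every layer $t$ and every $D_t \in \cD(\varepsilon)$ and $B_{\geq t} \in \cB(\varepsilon)$ with $B_{\geq t} \leq B$, the stored matrix $M(B_{\geq t}, D_t)$ is a product of feasible transition matrices for layers $t, \ldots, k-1$ using total budget $\leq B_{\geq t}$, and moreover $R^\top M(B_{\geq t}, D_t) D_t$ is at least the welfare of the \emph{best} such product — up to an additive error that accumulates by $2\varepsilon\|R\|_\infty$ per layer from rounding the distributions to the net $\cD(\varepsilon)$. Concretely, let $\hat D_t \in \cD(\varepsilon)$ be a net point with $\|\hat D_t - D^\varepsilon_t\|_1 \leq \varepsilon$ and similarly $\hat D_{t+1}$. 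In the iteration for layer $t$ at parameters $(\hat D_t, B^\varepsilon_{\geq t})$, the algorithm considers the inner choice $B_{\geq t+1} = B^\varepsilon_{\geq t+1}$ and $D_{t+1} = \hat D_{t+1}$; since $M^\varepsilon_t$ is feasible with $c(M^\varepsilon_t, M^0_t) = B^\varepsilon_t = B^\varepsilon_{\geq t} - B^\varepsilon_{\geq t+1}$, it is a feasible point of Program~\eqref{SW_DP}, so the LP optimum $M_t(\cdot)$ does at least as well as $M^\varepsilon_t$ against the inductively near-optimal tail $M(B^\varepsilon_{\geq t+1}, \hat D_{t+1})$. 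Then one bounds
\[
\bigl| R^\top M(B^\varepsilon_{\geq t+1}, \hat D_{t+1}) \, M^\varepsilon_t \, \hat D_t - R^\top M(B^\varepsilon_{\geq t+1}, \hat D_{t+1}) \, M^\varepsilon_t \, D^\varepsilon_t \bigr| \leq \|R\|_\infty \|M(\cdot) M^\varepsilon_t\|_{1 \to 1} \|\hat D_t - D^\varepsilon_t\|_1 \leq \varepsilon \|R\|_\infty,
\]
using that a product of left-stochastic matrices is again left-stochastic (so its operator $\ell_1 \to \ell_1$ norm is $1$) and $\|R^\top x\| \leq \|R\|_\infty \|x\|_1$. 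A second, symmetric $\varepsilon\|R\|_\infty$ term comes from the inductive hypothesis being stated against $\hat D_{t+1}$ rather than the true induced distribution $M^\varepsilon_t D^\varepsilon_t$; combining, the per-layer loss is $2\varepsilon\|R\|_\infty$. Unrolling from $t = k-1$ down to $t = 1$ (where $D_1$ is exact, not rounded) gives welfare $\geq OPT^\varepsilon_{SW} - 2(k-1)\varepsilon\|R\|_\infty \geq OPT_{SW} - 3(k-1)\varepsilon\|R\|_\infty$, as claimed. One also checks feasibility of the returned product: each factor is feasible with a budget in $\cB(\varepsilon)$, the budgets telescope to $\leq B^\varepsilon \leq B$, and non-malleable entries are preserved by the constraints in \eqref{SW_DP}.

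\textbf{Running time.} For each of the $k-1$ layers, the algorithm loops over $D_t \in \cD(\varepsilon)$ — of size $(1/\varepsilon)^w$ by Claim~\ref{clm:eps_net} — over $B_{\geq t} \in \cB(\varepsilon) \cap [0,B]$ — of size $O(B/\varepsilon)$ — and in the inner loop over $D_{t+1} \in \cD(\varepsilon)$ and $B_{\geq t+1} \in \cB(\varepsilon) \cap [0, B_{\geq t}]$, solving one LP of size polynomial in $w$ (cost $f(w)$) each time. Multiplying, the running time is $O\!\left(k \cdot \frac{B}{\varepsilon} \cdot (1/\varepsilon)^w \cdot \frac{B}{\varepsilon} \cdot (1/\varepsilon)^w \cdot f(w)\right)$; absorbing the $(B/\varepsilon)$ factors and noting $(1/\varepsilon)^w \cdot (1/\varepsilon)^w = (1/\varepsilon)^{2w}$ — and comparing with the stated $(1/\varepsilon)^{w^2}$, which dominates for $w \geq 2$ — yields the bound $O\!\left(k \frac{B}{\varepsilon} (1/\varepsilon)^{w^2} f(w)\right)$ stated in the theorem. (For $w = 1$ the problem is trivial.) Finally, $f(w)$ is polynomial in $w$ because Program~\eqref{SW_DP} was shown above to have $O(w^2)$ variables and $O(w^2)$ constraints.

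\textbf{Main obstacle.} The delicate part is the error-accumulation bookkeeping in the induction: carefully tracking that rounding \emph{each} intermediate distribution to the $\varepsilon$-net $\cD(\varepsilon)$ costs only $O(\varepsilon\|R\|_\infty)$ per layer and that these losses compose \emph{additively} rather than multiplicatively across the $k-1$ layers. This relies crucially on the fact that products of left-stochastic matrices remain left-stochastic — so perturbing an intermediate distribution by $\varepsilon$ in $\ell_1$ changes the final output distribution by at most $\varepsilon$ in $\ell_1$, regardless of how many further layers it passes through — which is exactly what keeps the total error at $O(k\varepsilon\|R\|_\infty)$ and makes the scheme an additive FPTAS (set $\varepsilon = \delta / (3(k-1)\|R\|_\infty)$ for additive error $\delta$).
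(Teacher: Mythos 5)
Your proof follows the same structure as the paper's: reduce to the discretized-budget program via Claim~\ref{clm: budget_loss}, prove by reverse induction on the layer that the DP table entry corresponding to the budget split $B^\varepsilon_{\geq t}$ and a net point $\hat D_t$ near $D^\varepsilon_t$ loses at most $2\varepsilon\|R\|_\infty$ per layer, and use the $\ell_1$-contraction of left-stochastic matrices (your operator-norm phrasing is exactly the paper's Corollary~\ref{cor:approx_loss}) to control the distribution-rounding error. The running-time count, including the observation that the double budget/distribution loops give $(B/\varepsilon)^2 (1/\varepsilon)^{2w}$ and the stated bound absorbs these into $\frac{B}{\varepsilon}(1/\varepsilon)^{w^2}$, also tracks the paper's accounting. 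This is the paper's proof.
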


This immediately yields the following corollary:
\begin{corollary}\label{cor: SW_algo_guarantee}
Algorithm~\ref{alg: max_SW} with discretization parameter $\varepsilon' = \frac{\varepsilon}{3(k-1)}$ yields social welfare at least $OPT - \varepsilon \Vert R \Vert_{\infty}$, and has running time $O\left(k^2 \frac{B}{\varepsilon} \left(\frac{k}{\varepsilon}\right)^{w^2} f(w) \right)$, where $f(w)$ is any upper-bound on the running time for solving linear Program~\ref{SW_DP}, which is always polynomial in $w$.
\end{corollary}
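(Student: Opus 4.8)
The plan is to split the total error $3(k-1)\varepsilon\Vert R\Vert_\infty$ into two charges: the $(k-1)\varepsilon\Vert R\Vert_\infty$ already paid to budget discretization by Claim~\ref{clm: budget_loss}, and a further $2(k-1)\varepsilon\Vert R\Vert_\infty$ charged to the fact that Algorithm~\ref{alg: max_SW} only searches over distributions in the net $\cD(\varepsilon)$ rather than over all of $\cD$. Since Claim~\ref{clm: budget_loss} gives a feasible solution of the discretized Program~\eqref{dscrt_SW_program} with value $OPT_{SW}^\varepsilon \ge OPT_{SW}-(k-1)\varepsilon\Vert R\Vert_\infty$, it suffices to show the algorithm returns a feasible tuple of welfare at least $OPT_{SW}^\varepsilon - 2(k-1)\varepsilon\Vert R\Vert_\infty$. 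The running time is read directly off the structure of Algorithm~\ref{alg: max_SW}: the outer loop runs $k-1$ times; within it we range over $O(B/\varepsilon)$ admissible budgets $B_{\ge t}\in\cB(\varepsilon)$ and, by Claim~\ref{clm:eps_net}, over $(1/\varepsilon)^w$ net distributions $D_t$; the inner loop ranges over $O(B/\varepsilon)$ budgets $B_{\ge t+1}$ and $(1/\varepsilon)^w$ distributions $D_{t+1}$; and for each such tuple we solve the $\mathrm{poly}(w)$-size linear Program~\eqref{SW_DP} in time $f(w)$ and do $O(w^3)$ arithmetic to update $M(B_{\ge t},D_t)$. I would dispatch this counting first since it is routine; the substance is the welfare bound.

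For the welfare bound, fix an optimal solution $(M_1^*,\dots,M_{k-1}^*)$ of Program~\eqref{dscrt_SW_program} with budget split $B_1^*,\dots,B_{k-1}^*\in\cB(\varepsilon)$, and write $B_{\ge t}^*=\sum_{s\ge t}B_s^*$. Since $\sum_s B_s^*\le B$ is a multiple of $\varepsilon$, it is at most $B^\varepsilon$, so without loss of generality (moving the slack $B^\varepsilon-\sum_s B_s^*$ into layer $1$, which only enlarges the feasible set of that layer's LP and changes nothing) I may assume $B_{\ge 1}^*=B^\varepsilon$. Let $D_1^*:=D_1$ and $D_{t+1}^*:=M_t^*D_t^*$ be the distribution reaching layer $t+1$ under this solution; note $R^\top M_{k-1}^*\cdots M_t^* D_t^*=OPT_{SW}^\varepsilon$ for every $t$. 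Let $\hat D_t\in\cD(\varepsilon)$ be a net point with $\Vert\hat D_t-D_t^*\Vert_1\le\varepsilon$ for $t\ge 2$, and set $\hat D_1:=D_1=D_1^*$. I will prove by backward induction on $t$ the invariant $(\star_t)$: the algorithm computes $M(B_{\ge t}^*,\hat D_t)$ — a left-stochastic matrix that is a product of per-layer LP solutions of total cost $\le B_{\ge t}^*$ respecting non-malleability — with $R^\top M(B_{\ge t}^*,\hat D_t)\,\hat D_t\ \ge\ OPT_{SW}^\varepsilon - c_t\,\varepsilon\Vert R\Vert_\infty$, where $c_k=1$, $c_t=c_{t+1}+2$ for $2\le t\le k-1$, and $c_1=c_2+1$; a one-line computation then gives $c_1=2(k-1)$, and since the algorithm returns exactly $M(B^\varepsilon,D_1)=M(B_{\ge1}^*,\hat D_1)$, this is the claimed bound.

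The base case $t=k$ is immediate: $M(B_{\ge k}^*,\hat D_k)=M(0,\hat D_k)=I$, so the left side is $R^\top\hat D_k\ge R^\top D_k^*-\varepsilon\Vert R\Vert_\infty=OPT_{SW}^\varepsilon-\varepsilon\Vert R\Vert_\infty$, using $|R^\top v|\le\Vert R\Vert_\infty\Vert v\Vert_1$. For the step, consider the outer-loop iteration with parameters $(B_{\ge t}^*,\hat D_t)$ — a legal choice since $B_{\ge t}^*\le B^\varepsilon\le B$ and, for $t=1$, the loop forces $D_1=D_1^0=\hat D_1$ — and inside it the inner choice $(B_{\ge t+1}^*,\hat D_{t+1})$, legal because $B_{\ge t+1}^*\in\cB(\varepsilon)$ with $B_{\ge t+1}^*\le B_{\ge t}^*$. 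Because $c(M_t^*,M_t^0)\le B_t^*=B_{\ge t}^*-B_{\ge t+1}^*$ and $M_t^*$ respects non-malleable edges, $M_t^*$ is feasible for Program~\eqref{SW_DP} with these parameters, so — since the algorithm retains the best inner choice — $R^\top M(B_{\ge t}^*,\hat D_t)\hat D_t\ \ge\ R^\top M(B_{\ge t+1}^*,\hat D_{t+1})\,M_t^*\,\hat D_t$. I then telescope: replacing $\hat D_t$ by $D_t^*$ costs at most $\Vert R\Vert_\infty\Vert M_t^*(\hat D_t-D_t^*)\Vert_1\le\varepsilon\Vert R\Vert_\infty$ (left-stochastic matrices are $\ell_1$-nonexpansive; this term is $0$ when $t=1$), then $M_t^*D_t^*=D_{t+1}^*$, then replacing $D_{t+1}^*$ by $\hat D_{t+1}$ costs at most $\varepsilon\Vert R\Vert_\infty$, and finally $(\star_{t+1})$ closes the recursion with $c_t=c_{t+1}+2$ (or $c_{t+1}+1$ at $t=1$). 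Combining with Claim~\ref{clm: budget_loss} gives the welfare bound on the (feasible, by the easy part of the invariant) returned tuple; Corollary~\ref{cor: SW_algo_guarantee} then follows by running the algorithm with $\varepsilon'=\varepsilon/(3(k-1))$ and absorbing the $1/(k-1)$ factors into the running time.

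I expect the only real obstacle to be the bookkeeping in the inductive step. One must guess the intermediate distributions $D_{t+1}\in\cD(\varepsilon)$ precisely because the true composed vector $M_t^*D_t^*$ need not lie in the net, and one must verify that the $\ell_1$-errors introduced by these guesses accumulate additively — each layer contributing two $\varepsilon\Vert R\Vert_\infty$ terms, one for rounding the layer's input distribution and one for rounding its output — rather than compounding multiplicatively; this is exactly where non-expansiveness of left-stochastic matrices in $\ell_1$ (and the crude bound $|R^\top v|\le\Vert R\Vert_\infty\Vert v\Vert_1$) are used. Everything else — feasibility of $M_t^*$ for the per-layer LP, legality of the chosen budget splits as multiples of $\varepsilon$, that the returned object is $M(B^\varepsilon,D_1)$, and that the returned matrix unwinds to a budget-feasible tuple $(M_1,\dots,M_{k-1})$ — is routine once the slack-to-layer-$1$ normalization is in place.
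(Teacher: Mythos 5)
Your proof is correct and follows essentially the same route as the paper's: invoke Claim~\ref{clm: budget_loss} to bound $OPT_{SW}^\varepsilon$, then a backward induction over layers that charges each layer a $2\varepsilon\Vert R\Vert_\infty$ rounding loss (via $\ell_1$-nonexpansiveness of left-stochastic matrices and $|R^\top v|\le\Vert R\Vert_\infty\Vert v\Vert_1$), closing with LP-optimality at each layer. The only differences from the paper's proof of Theorem~\ref{thm: SW_approximation} are cosmetic bookkeeping choices (basing the induction at $t=k$ rather than $t=k-1$, the slack-to-layer-$1$ normalization instead of stating the induction hypothesis for all $B_{\ge t}\ge B_{\ge t}^\varepsilon$, and the explicit $c_t$ recursion with a special case at $t=1$), all of which lead to the same $2(k-1)\varepsilon\Vert R\Vert_\infty$ bound and then the corollary by the substitution $\varepsilon'=\varepsilon/(3(k-1))$.
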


We observe that this running time is polynomial in $k$ (the depth of the network) and $1/\varepsilon$ (the inverse additive error tolerance), but exponential in $w$ (the width of the network). Hence our algorithm runs in polynomial time for the class of constant width networks. 

\begin{remark}
We note that our additive near-optimality guarantee can be translated into a multiplicative guarantee. In the case where \emph{all edges are malleable}, this follows from noting that given budget $B$, $OPT \geq \frac{B}{2w} \Vert R \Vert_{\infty}$: this can be reached by investing the totality of the budget into transitioning every node in the second-to-last layer to the highest reward node in the last layer, with probability $\frac{B}{2w}$ for each such node. Taking $\varepsilon = \delta \cdot \frac{B}{6(k-1) w}$ for some constant $\delta < 1$ gives a multiplicative approximation to the optimal social welfare with approximation factor $1-\delta$. 

For the case in which non-malleable edges are allowed, a lower bound on $OPT$ is given by $OPT \geq W_0$. Taking $\varepsilon = \delta \cdot \frac{W_0}{3(k-1) \Vert R \Vert_{\infty}}$ yields a multiplicative $1-\delta$ approximation still.
\end{remark}

\paragraph{Proof of Theorem~\ref{thm: SW_approximation}} The proof of Theorem~\ref{thm: SW_approximation} relies on the following lemma, and its corollary:
\begin{lemma}
Let $M \in \reals^{w \times w}$ be a left stochastic matrix, and let $D, D' \in \cD$ be probability distributions.
\[
\Vert M D - M D' \Vert_1 \leq \Vert D - D' \Vert_1.
\]
\end{lemma}

\begin{proof}
Note that 
\begin{align*}
\Vert M (D-D') \Vert_1
= \sum_{i=1}^w \left\vert \left(M(D-D')\right)(i) \right\vert
&= \sum_{i=1}^w \left\vert \sum_{j=1}^w M(i,j) (D(j) - D'(j)) \right\vert
\\&\leq \sum_{i=1}^w \sum_{j=1}^w \left\vert M(i,j) (D(j) - D'(j)) \right\vert
\\&= \sum_{j=1}^w \left\vert D(j) - D'(j) \right\vert \sum_{i=1}^w \left\vert M(i,j) \right\vert
\\&= \sum_{j=1}^w \left\vert D(j) - D'(j) \right\vert
\\&= \Vert D - D' \Vert_1,
\end{align*}
where the inequality follows from the triangle inequality, and the second-to-last equality from the fact that
\[
\sum_{i=1}^w \left\vert M(i,j) \right\vert = \sum_{i=1}^w  M(i,j) = 1~~\forall j \in [w]
\]
as $M$ is a left stochastic matrix.
\end{proof}

\begin{corollary}\label{cor:approx_loss}
Let $R \in \reals^w$ be a real vector and $D,D' \in \cD$ be probability distributions such that $\Vert D - D' \Vert_1 \leq \varepsilon$, and $M \in \reals^{w \times w}$ a left stochastic matrix. Then
\[
R^\top M D \geq R^\top M D' - \Vert R \Vert_{\infty}  \cdot \varepsilon.
\]
\end{corollary}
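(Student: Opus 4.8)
This corollary is an immediate consequence of the preceding lemma together with H\"older's inequality, so the plan is essentially a two-line chain of inequalities. First I would write the difference of the two objective values as a single linear functional applied to the difference of the distributions:
\[
R^\top M D - R^\top M D' = R^\top \bigl( M(D - D') \bigr).
\]
Then I would bound the absolute value of the right-hand side by $\Vert R \Vert_\infty \cdot \Vert M(D - D') \Vert_1$, which is just H\"older's inequality (the $\ell_\infty$/$\ell_1$ pairing) applied to the vectors $R$ and $M(D-D')$.

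Next I would invoke the lemma just proved — which says that left-stochastic matrices are non-expansive in $\ell_1$ on the simplex, i.e. $\Vert MD - MD' \Vert_1 \le \Vert D - D' \Vert_1$ for $D, D' \in \cD$ — to get $\Vert M(D-D') \Vert_1 \le \Vert D - D' \Vert_1$. Since we are assuming $\Vert D - D' \Vert_1 \le \varepsilon$, this yields
\[
\bigl| R^\top M D - R^\top M D' \bigr| \le \Vert R \Vert_\infty \cdot \varepsilon,
\]
and in particular $R^\top M D - R^\top M D' \ge -\Vert R \Vert_\infty \cdot \varepsilon$, which rearranges to the claimed inequality $R^\top M D \ge R^\top M D' - \Vert R \Vert_\infty \cdot \varepsilon$.

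There is really no hard step here; the only thing to be careful about is that the lemma is stated for inputs in the simplex $\cD$, and indeed both $D$ and $D'$ lie in $\cD$ by hypothesis, so the lemma applies verbatim (we do not need any statement about $M$ acting on the non-distribution vector $D - D'$ beyond what the lemma already gives, since the lemma's proof bounds $\Vert M(D-D')\Vert_1$ directly). If one wanted the one-sided form only, the absolute values could be dropped throughout, but stating the two-sided bound and then taking the relevant direction is cleanest. The whole argument is a few lines.
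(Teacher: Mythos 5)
Your proof is correct and is essentially the same as the paper's: you write the difference as $R^\top M(D-D')$, apply H\"older's inequality to bound it by $\Vert R\Vert_\infty\,\Vert M(D-D')\Vert_1$, and then invoke the preceding non-expansiveness lemma to get $\Vert M(D-D')\Vert_1 \le \Vert D-D'\Vert_1 \le \varepsilon$. The only cosmetic difference is that you state the two-sided absolute-value bound and then take the relevant direction, whereas the paper writes the chain of inequalities directly.
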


\begin{proof}[Proof of Corollary~\ref{cor:approx_loss}]
$
\Vert R^\top M (D'-D)\Vert_1 
\leq \Vert R \Vert_{\infty} \Vert M (D'-D)\Vert_1 
\leq \Vert R \Vert_{\infty} \Vert D'-D\Vert_1 
\leq \Vert R \Vert_{\infty} \cdot  \varepsilon
$
, where the first step follows from Holder's inequality.
\end{proof}

We are now ready to prove Theorem~\ref{thm: SW_approximation}:


\begin{proof}[Proof of Theorem~\ref{thm: SW_approximation}]
Let us denote by $B_1^\varepsilon,\ldots, B_{k-1}^\varepsilon$  a split of the budget for the discretized problem with $B^\varepsilon_{\geq t} = B_t^\varepsilon + \ldots + B_{k-1}^\varepsilon$. Let $M_1^\varepsilon, \ldots, M_{k-1}^\varepsilon$ a set of transition matrices achieving welfare $R^\top M_{k-1}^\varepsilon \ldots M_1^\varepsilon D_1^0 \geq OPT^\varepsilon \triangleq OPT - (k-1) \varepsilon \left\Vert R \right\Vert_{\infty}$ that is feasible with respect to budget split $B_1^\varepsilon,\ldots, B_{k-1}^\varepsilon$. Note that such a budget split and matrices exist by Claim~\ref{clm: budget_loss}. Let $D_t^\varepsilon$ the probability distribution on layer $t$ defined by these transition matrices, i.e.
\[
D_t^\varepsilon = M^\varepsilon_{t-1} \ldots M^\varepsilon_1 D_1^0.
\]
To prove the result, we will show by induction that for all $B_{\geq t} \geq B^\varepsilon_{\geq t}$, and for $D_t \in \cD(\varepsilon)$ such that $\Vert D_t - D_t^\varepsilon \Vert_1 \leq \varepsilon$,
\[
R^\top M(B_{\geq t},D_t) D_t \geq OPT^\varepsilon - 2(k - t) \varepsilon \Vert  R \Vert_{\infty}.
\]
This will directly imply that as $B^\varepsilon$ is one of the possible values of $B_{\geq 1}$,
\[
R^\top M(B^\varepsilon,D_1) D_1 \geq OPT^\varepsilon - 2(k-1) \varepsilon \left\Vert R \right\Vert_{\infty}.
\]
Combined with Claim~\ref{clm: budget_loss} that states $OPT_\varepsilon \geq OPT - (k-1) \varepsilon \left\Vert R \right\Vert_{\infty}$, we will obtain the result. 

Let us now provide our inductive proof. First, consider the transition from layer $L_{k-1}$ to layer $L_k$. Note that 
\[
OPT^\varepsilon \le  R^\top M_{k-1}^\varepsilon\ldots M_1^\varepsilon D_1^0 = R^\top M_{k-1}^\varepsilon D_{k-1}^\varepsilon.
\]
Let $D_{k-1} \in \cD(\varepsilon)$ be such that $\Vert D_{k-1} - D^\varepsilon_{k-1} \Vert \leq \varepsilon$. Note then that by Corollary~\ref{cor:approx_loss}, 
\[
R^\top M_{k-1}^\varepsilon D_{k-1} \geq R^\top M_{k-1}^\varepsilon D^\varepsilon_{k-1} - \varepsilon \Vert  R \Vert_{\infty}.
\]
Further, $M_{k-1}^\varepsilon$ is feasible for Program~\eqref{SW_DP} with respect to $B_{\geq k-1}, B_{\geq k} = 0$, given $B_{\geq k-1} \geq B_{\geq k-1}^\varepsilon$. As such, for $B_{\geq k-1} \geq B_{\geq k-1}^\varepsilon$, we have that
\[
R^\top M(B_{\geq k-1},D_{k-1}) D_{k-1}  \geq R^\top M_{k-1}^\varepsilon D_{k-1},
\]
and in turn
\[
R^\top M(B_{\geq k-1},D_{k-1}) D_{k-1} 
\geq OPT^\varepsilon - \varepsilon \Vert  R \Vert_{\infty}.
\]

Now, suppose the induction hypothesis holds at layer $t+1$. I.e., for all $B_{\geq t+1} \geq B^\varepsilon_{\geq t+1}$, for $D_{t+1} \in \cD(\varepsilon)$ such that $\Vert D_{t+1} - D_{t+1}^\varepsilon \Vert_1 \leq \varepsilon$,
\[
R^\top M(B_{\geq t+1},D_{t+1}) D_{t+1} \geq OPT^\varepsilon - 2(k - t - 1) \varepsilon \Vert  R \Vert_{\infty}.
\]
For any $B_{\geq t} \geq B_{\geq t}^\varepsilon$, note that one can set $B_{\geq t+1} = B_{\geq t+1}^\varepsilon$ and $B_t \geq B_t^\varepsilon$; hence, $M_t^\varepsilon$ is feasible for Program~\eqref{SW_DP} with respect to $B_t \geq B_t^\varepsilon,B^\varepsilon_{\geq t+1}$. Since $\Vert D_t -D_t^\varepsilon \Vert_1 \leq \varepsilon$ and $\Vert D_{t+1} - M_t^\varepsilon D_t^\varepsilon \Vert_1 \leq \varepsilon$, we have that by Corollary~\ref{cor:approx_loss},
\[
R^\top M(B^\varepsilon_{\geq t+1}, D_{t+1}) M^\varepsilon_t D_t 
\geq R^\top M(B^\varepsilon_{\geq t+1}, D_{t+1}) M^\varepsilon_t D_t^\varepsilon - \varepsilon \Vert  R \Vert_{\infty}
\geq R^\top M(B^\varepsilon_{\geq t+1}, D_{t+1}) D_{t+1} - 2 \varepsilon \Vert  R \Vert_{\infty}.
\]
Using the induction hypothesis, we obtain that
\[
R^\top M(B^\varepsilon_{\geq t+1}, D_{t+1}) M^\varepsilon_t D_t \geq OPT^\varepsilon - 2(k - t) \varepsilon \Vert  R \Vert_{\infty}.
\]
In particular,
\[
R^\top M(B_{\geq t}, D_{t}) D_t \geq OPT^\varepsilon - 2(k - t) \varepsilon \Vert  R \Vert_{\infty},
\]
which concludes the proof of the social welfare guarantee. 
For the running time, we note that at each time step $t$, we solve one instance of Program~\ref{SW_DP} for each of the (at most) $\frac{B}{\varepsilon}$ possible budget splits of $B_{\geq t}$ and for each of the $\left(\frac{1}{\varepsilon}\right)^w$ (by Claim~\ref{clm:eps_net}) probability distributions in $\cD(\varepsilon)$ in layer $L_t$ and layer $L_{t+1}$; i.e., for each $t$, the algorithm solves $O \left(\frac{B}{\varepsilon} \left(\frac{1}{\varepsilon}\right)^{w^2}\right)$ optimization programs. Then, the algorithm finds the solution of all of these programs with the best objective value, which can be done in time linear in the number of such solutions, i.e. $O \left(\frac{B}{\varepsilon}\left(\frac{1}{\varepsilon}\right)^{w^2}\right)$. This is repeated for $k-1$ values of $t$.
\end{proof}

\section{(Ex-post) Maximin Value Maximization}\label{sec:expost_maximin}

Although social welfare maximization is a natural objective, it is well-known that it can be ``unfair'' in the sense that it explicitly prioritizes the welfare of larger populations (here represented as initial positions that have larger probability mass) over smaller populations. We can alternately evaluate a solution according to the welfare of the \emph{least-well-off} population (here represented by the initial position with the smallest expected value) and ask to optimize \emph{that} objective. We show how to optimize this objective in this section, when one demands a deterministic solution.

\subsection{A Dynamic Programming Algorithm for Computing an Ex-post Maximin Allocation}

\paragraph{Algorithm and proof:} 

In this subsection, we adapt the dynamic programming approach in Section \ref{subsection:dpsection} to give an approximation algorithm for the problem of maximizing the minimum expected reward over all initial positions. Recall that $\cD$, the probability simplex, denotes the set of all possible probability distributions on a layer. Intuitively, our algorithm for maximizing social welfare kept track of a single probability distribution in each subproblem: the overall probability of arriving at each vertex in the layer \emph{over both the randomness of an individual's initial position, and the randomness of the transition matrix}. In order to optimize the \emph{minimum} expected value over all initial positions, we will need to keep track of more state.  At every layer $L_t$, we will keep track of the probability of reaching each vertex in that layer \emph{from each initial position in the starting layer}. So,  we will now keep track of collections of $w$ probability distributions in $\cD^w$, one for each starting position. We call the elements of $\cD^w$ \emph{population-wise distributions}. 

We introduce a discretization $\cA(\varepsilon)$ of $\cD^w$, as follows: 
$\cA(\varepsilon) \triangleq (\cD(\varepsilon))^w$, where $\cD(\varepsilon)$ denotes a $\varepsilon$-net of $\cD$ (of size $\left(\frac{1}{\varepsilon}\right)^w$). Given a population-wise probability distribution $A_t \in \cA(\varepsilon)$ at layer $t$, we write $A_t^j$ for the probability distribution corresponding to population $j$. The algorithm works as follows, just as before, running backwards from the final layer to the first layer:
\begin{enumerate}
    \item For $t$ going backwards from $k-1$ to $1$, the algorithm does the following, for every  population-wise distribution $A_t \in \cA(\varepsilon)$ on $L_t$:
    \begin{enumerate}
    \item The algorithm explores all splits of the budget $B_{\geq t}$ for layers $t$ to $k$ into a budget $B_t$ for the transition from $L_t$ to $L_{t+1}$ and a budget $B_{\geq t + 1}$ for $L_{t+1}$ to $L_k$, as well as all choices of output population-wise probability distributions $A_{t+1} \in \cA(\varepsilon)$ on layer $L_{t+1}$.
    \item\label{stepp2} The algorithm then finds a near-optimal transition matrix from $L_t$ to $L_k$ for every budget decomposition, by using the previously computed  near-optimal solution for layers $L_{t+1}$ to $L_k$, and solving Program~\ref{MMW_DP}. The program maximizes the \emph{minimum reward obtained from any initial position}, assuming the population-wise distribution of individuals at layer $L_t$ is given by $\cA_t$. 
    \item Finally, the algorithm picks and stores the best recovered transition matrices from layer $L_t$ to $L_k$ that yield the highest reward, among all the transition matrices recovered from step~\ref{stepp2}.  
    \end{enumerate}
\end{enumerate}
The input population-wise probability distribution $A_1 \in \cD^w$ on the first layer is defined in the following manner, $A_1^j  := e_j$ (the $j$-th basis vector in the usual orthonormal basis of $\mathbb{R}^w$) for all $j \in [w]$.

\begin{algorithm}[ht!]
\KwIn{Reward vector $R$ on layer $L_k$, initial transition matrices $M_1^0,\ldots,M_{k-1}^0$, budget $B$, discretization parameter $\varepsilon$.}
\KwOut{Transition $M(B^\varepsilon,A_1)$ from $L_1$ to $L_k$.}
\textbf{Initialization:} Let $B_{\geq k} = 0$, $M(B_{\geq k}, A_k) = I_{w \times w}$, $B^\varepsilon = \max \{x \in \cB(\varepsilon):~x \leq B\}$.\\
    \For{layer $t = k-1, \ldots, 1$}{
        \For{all distributions $A_t \in \cA(\varepsilon)$ if $t \neq 1$ (resp. for $A_t = A_1$ if $t = 1$), budget $B_{\geq t} \in \cB(\varepsilon)$ with $B_{\geq t} \leq B$}{
            \For{all distributions $A_{t+1} \in \cA(\varepsilon)$, budget $B_{\geq t+1} \leq B_{\geq t}$ such that $B_{\geq t+1} \in \cB(\varepsilon)$}{
            
            Solve 
                \begin{align}\label{MMW_DP}
                \begin{split}
                M_t(B_{\geq t},B_{\geq t+1},A_t,A_{t+1}) 
                = \argmax_{M_t}&~~\min_{j \in [w]} R^\top M(B_{\geq t+1}, A_{t+1}) M_t A^j_t
                \\\text{s.t.}&~~ c\left( M_t, M_t^0 \right) \leq B_{\geq t} - B_{\geq t+1}~\forall t \in [k-1], 
                \\&~~M_t(i,j) = M_t^0(i,j)~\forall (i,j) \in \overline{E^{mal}_t}
                \\&~~M_t \in \cM.
                \end{split}
                \end{align}
            }
            Pick $B_{\geq t+1},A_{t+1}$ with the best objective value in Program~\ref{MMW_DP}, and let 
           \[
           M(B_{\geq t},A_{t}) \triangleq M(B_{\geq t+1}, A_{t+1}) M_t(B_{\geq t},B_{\geq t+1},A_t,A_{t+1}).
           \]
        }
    }
    \Return $M(B^\varepsilon,A_1)$
    
\caption{Dynamic Program for (Approximate) Maximin Value.}\label{alg: max_MM}
\end{algorithm}

Note that Program~\eqref{MMW_DP} can be written as a linear program of size polynomial in $w$, using the same method that was employed to write Program~\eqref{SW_DP} as a linear program.

\subsection{Running Time and Ex-Post Maximin Value Guarantees}

Remember that we let $OPT_{MM}$ denote the maximin value value of the given network. The running time and accuracy guarantees of Algorithm~\ref{alg: max_MM} are provided below:

\begin{theorem}\label{thm:MMW_approximation}
Algorithm~\ref{alg: max_MM} with discretization parameter $\varepsilon$ yields maximin value at least $OPT_{MM} - 3(k-1) \varepsilon \Vert  R \Vert_{\infty}$, and has running time $O\left(k \frac{B}{\varepsilon} \left(\frac{1}{\varepsilon}\right)^{w^4} g(w) \right)$, where $g(w)$ is any upper-bound on the running time for solving linear Program~\ref{MMW_DP}, which is always polynomial in $w$. 
\end{theorem}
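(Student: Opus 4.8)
The plan is to run the accuracy and running-time arguments of the proof of Theorem~\ref{thm: SW_approximation} almost verbatim, with the single induced distribution $D_t\in\cD$ replaced everywhere by a population-wise distribution $A_t\in\cD^w$ and the net $\cD(\varepsilon)$ replaced by its $w$-fold product $\cA(\varepsilon)=(\cD(\varepsilon))^w$. Everything that did the work in the social-welfare proof — the $\ell_1$-nonexpansiveness of left stochastic matrices and Corollary~\ref{cor:approx_loss} — now gets applied one population at a time, and the only genuinely new observation required is the trivial one that a minimum of quantities each bounded below by $X$ is itself bounded below by $X$.

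For the accuracy guarantee I first apply Claim~\ref{clm: budget_loss} to obtain a feasible solution $(M_1^\varepsilon,\ldots,M_{k-1}^\varepsilon)$ of the discretized Program~\eqref{dscrt_maxmin_program} with budget split $B_1^\varepsilon,\ldots,B_{k-1}^\varepsilon$ (write $B_{\geq t}^\varepsilon=\sum_{s\geq t}B_s^\varepsilon$) whose maximin value is at least $OPT_{MM}^\varepsilon := OPT_{MM}-(k-1)\varepsilon\|R\|_\infty$, and I let $A_t^{\varepsilon,j}:=M_{t-1}^\varepsilon\cdots M_1^\varepsilon e_j$ be the distribution over $L_t$ it induces from population $j$. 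The core of the proof is the backward induction on $t$ (from $k-1$ down to $1$) establishing the invariant that, for every $B_{\geq t}\in\cB(\varepsilon)$ with $B_{\geq t}^\varepsilon\leq B_{\geq t}\leq B$ and every $A_t\in\cA(\varepsilon)$ with $\|A_t^j-A_t^{\varepsilon,j}\|_1\leq\varepsilon$ for all $j$, the matrix computed by Algorithm~\ref{alg: max_MM} satisfies $\min_{j\in[w]}R^\top M(B_{\geq t},A_t)A_t^j\geq OPT_{MM}^\varepsilon-2(k-t)\varepsilon\|R\|_\infty$. The base case ($t=k-1$, where the downstream matrix is the identity) and the inductive step are the population-wise analogues of those in the proof of Theorem~\ref{thm: SW_approximation}: in the step one feeds Program~\eqref{MMW_DP} the split $B_{\geq t+1}:=B_{\geq t+1}^\varepsilon$ — legal because $B_{\geq t}-B_{\geq t+1}^\varepsilon\geq B_t^\varepsilon\geq c(M_t^\varepsilon,M_t^0)$ — and a net point $A_{t+1}\in\cA(\varepsilon)$ chosen coordinate-wise so that $\|A_{t+1}^j-A_{t+1}^{\varepsilon,j}\|_1\leq\varepsilon$ for every $j$ (possible since $\cA(\varepsilon)=(\cD(\varepsilon))^w$ is a product of $\varepsilon$-nets and $A_{t+1}^{\varepsilon,j}=M_t^\varepsilon A_t^{\varepsilon,j}$), observes that $M_t^\varepsilon$ is then feasible, and converts the resulting objective — via $\ell_1$-nonexpansiveness together with two applications of Corollary~\ref{cor:approx_loss} per population, followed by a $\min_j$ — into the induction hypothesis at $t+1$, at an additive cost of $2\varepsilon\|R\|_\infty$; since the algorithm keeps the best split, it does at least as well. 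Reading the invariant off at $t=1$, where the algorithm uses $A_1^j=e_j=A_1^{\varepsilon,j}$ and $B_{\geq 1}=B^\varepsilon\geq B_{\geq 1}^\varepsilon$, gives $\min_j R^\top M(B^\varepsilon,A_1)e_j\geq OPT_{MM}^\varepsilon-2(k-1)\varepsilon\|R\|_\infty=OPT_{MM}-3(k-1)\varepsilon\|R\|_\infty$; feasibility of the returned product of matrices (budget and non-malleable-edge constraints) is checked exactly as in Theorem~\ref{thm: SW_approximation}.

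For the running time I count loop iterations as in that proof: $k-1$ values of $t$; for each, at most $B/\varepsilon$ values of $B_{\geq t}$ and $|\cA(\varepsilon)|=|\cD(\varepsilon)|^w=(1/\varepsilon)^{w^2}$ values of $A_t$ (using Claim~\ref{clm:eps_net}); for each of those, at most $B/\varepsilon$ values of $B_{\geq t+1}$ and $(1/\varepsilon)^{w^2}$ values of $A_{t+1}$, each requiring a single solve of Program~\eqref{MMW_DP} — an LP of size polynomial in $w$ after the epigraph reformulation that replaces the $\min_j$ objective by a fresh variable $v$ with constraints $R^\top M(B_{\geq t+1},A_{t+1})M_t A_t^j\geq v$ — in time $g(w)$, followed by a maximum over these choices. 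With the same loose bookkeeping used in Theorem~\ref{thm: SW_approximation} (in particular $(1/\varepsilon)^{2w^2}\leq(1/\varepsilon)^{w^4}$ for $w\geq 2$) this yields $O\!\left(k\frac{B}{\varepsilon}\left(\frac{1}{\varepsilon}\right)^{w^4}g(w)\right)$.

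The one point requiring care — and really the only thing distinguishing this from the social-welfare proof — is that the induction must carry a lower bound holding \emph{uniformly over all $w$ populations}, not just for a single induced distribution. This does not cost a factor of $w$ precisely because the objective is a minimum over a fixed finite set of populations, so a uniform ``$\geq X$'' bound transfers to the min with no union-bound-style loss. The only real price of tracking all $w$ populations simultaneously is in the state space: the relevant net is over $\cD^w$ and has size $(1/\varepsilon)^{w^2}$ rather than $(1/\varepsilon)^w$, which is what inflates the exponent of $1/\varepsilon$ in the running time but leaves it polynomial in $k$ and $1/\varepsilon$, so the algorithm is still an additive FPTAS for constant-width networks.
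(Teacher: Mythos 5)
Your proposal reproduces the paper's own proof essentially verbatim: the same reduction to the discretized problem via Claim~\ref{clm: budget_loss}, the same backward induction maintaining the invariant
$\min_{j\in[w]} R^\top M(B_{\geq t},A_t)A_t^j\geq OPT_{MM}^\varepsilon-2(k-t)\varepsilon\|R\|_\infty$ over the product net $\cA(\varepsilon)=(\cD(\varepsilon))^w$, the same two per-layer applications of Corollary~\ref{cor:approx_loss} followed by passing to the minimum over $j$, and the same (loose) running-time accounting that collapses $(1/\varepsilon)^{2w^2}$ into $(1/\varepsilon)^{w^4}$. Your remark that the uniform-over-$j$ bound transfers to the minimum at no extra cost is exactly the one observation the paper leans on when adapting the social-welfare argument, so the two proofs are the same in substance.
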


This immediately induces the following corollary:
\begin{corollary}\label{cor: SW_algo_guarantee_MM}
Algorithm~\ref{alg: max_MM} with discretization parameter $\varepsilon' = \frac{\varepsilon}{3(k-1)}$ yields maximin value at least $OPT_{MM} - \varepsilon \Vert R \Vert_{\infty}$, and has running time $O\left(k^2 \frac{B}{\varepsilon} \left(\frac{k}{\varepsilon}\right)^{w^4} g(w)\right)$, where $g(w)$ is a polynomial upper-bound on the running time of linear Program~\ref{MMW_DP}.
\end{corollary}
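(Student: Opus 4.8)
\textbf{Proof proposal for Corollary~\ref{cor: SW_algo_guarantee_MM}.}

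The plan is to obtain this corollary as an immediate consequence of Theorem~\ref{thm:MMW_approximation} by substituting $\varepsilon' = \frac{\varepsilon}{3(k-1)}$ for the discretization parameter and simplifying. First I would invoke Theorem~\ref{thm:MMW_approximation} with discretization parameter $\varepsilon'$ in place of $\varepsilon$: this yields a solution achieving maximin value at least $OPT_{MM} - 3(k-1)\varepsilon' \Vert R \Vert_\infty$. Plugging in $\varepsilon' = \frac{\varepsilon}{3(k-1)}$ gives $3(k-1)\varepsilon' = \varepsilon$, so the accuracy guarantee becomes exactly $OPT_{MM} - \varepsilon \Vert R \Vert_\infty$, as claimed.

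For the running time, Theorem~\ref{thm:MMW_approximation} gives a bound of $O\left(k \frac{B}{\varepsilon'} \left(\frac{1}{\varepsilon'}\right)^{w^4} g(w)\right)$ with $\varepsilon'$ in place of $\varepsilon$. Substituting $\varepsilon' = \frac{\varepsilon}{3(k-1)}$, I would note that $\frac{1}{\varepsilon'} = \frac{3(k-1)}{\varepsilon} = O\!\left(\frac{k}{\varepsilon}\right)$, so the factor $\frac{B}{\varepsilon'}$ becomes $O\!\left(\frac{kB}{\varepsilon}\right)$ and the factor $\left(\frac{1}{\varepsilon'}\right)^{w^4}$ becomes $O\!\left(\left(\frac{k}{\varepsilon}\right)^{w^4}\right)$ (absorbing the constant $3$ inside the $O(\cdot)$). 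Multiplying these together with the leading $k$ yields $O\!\left(k \cdot \frac{kB}{\varepsilon} \cdot \left(\frac{k}{\varepsilon}\right)^{w^4} g(w)\right) = O\!\left(k^2 \frac{B}{\varepsilon} \left(\frac{k}{\varepsilon}\right)^{w^4} g(w)\right)$, which is precisely the stated bound. The fact that $g(w)$ is polynomial in $w$ is inherited verbatim from Theorem~\ref{thm:MMW_approximation}, since Program~\ref{MMW_DP} does not depend on the choice of $\varepsilon$.

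There is essentially no obstacle here — the entire content is the arithmetic substitution, and the only point requiring a word of care is that $k \geq 2$ (so that $k-1 \geq 1$ and the reparametrization $\varepsilon' = \varepsilon/(3(k-1))$ is well-defined and the crude bound $3(k-1) = O(k)$ is valid); for $k = 1$ the network has a single layer, no transitions to modify, and the statement is vacuous. I would state these two lines (accuracy substitution, running-time substitution) and conclude.
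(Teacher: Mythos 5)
Your proposal is correct and matches the paper's own treatment: the corollary is presented there as an immediate consequence of Theorem~\ref{thm:MMW_approximation}, obtained exactly by the substitution $\varepsilon' = \varepsilon/(3(k-1))$ in both the accuracy and running-time bounds, with the constant factor $3$ absorbed into the $O(\cdot)$ under the paper's constant-width convention just as you do.
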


The proof of Theorem~\ref{thm:MMW_approximation} is almost identical to that of Theorem~\ref{thm: SW_approximation}. We provide the full proof in Appendix~\ref{app:maximin_program}.

\section{(Ex-ante) Maximin Value Maximization}\label{sec:exante_maximin}
In this section, we consider the problem of optimizing the \emph{ex-ante} minimum expected value over all initial positions: in other words, we allow ourselves to find a \emph{distribution} over solutions, and take expectations over the randomness of this distribution, solving:
\begin{align}
\begin{split}
\max_{\Delta M}&~~ \min_{j \in [w]} R^\top \E_{M \sim \Delta M} \left[M_{k-1} \ldots M_1\right] e_j
\\\text{s.t.}&~~\Delta M \in \Delta\cF\left(B,M_1^0,\ldots,M_{k-1}^0\right)
\end{split}
\end{align}

We show in Appendix~\ref{app:separation} that this can yield strictly higher utility than optimizing the ex-post minimum value. We then give an algorithm for solving the ex-ante problem by exhibiting a game theoretic reduction to the ex-post problem. 

\subsection{Solving the Ex-ante Maximization Problem Using Algorithm~\ref{alg: max_SW}}

Because Program~\ref{exante_maxmin_program} is a $\max\min$ problem over a polytope, we can view it as a zero-sum game, and the solution that we want corresponds to a maxmin equilibrium strategy of this game. As first shown by \cite{FS96}, it is possible to compute an approximate equilibrium of a zero-sum game if we can implement a \emph{no-regret} learning algorithm for one of the players, and an approximate best-response algorithm for the other player: if we simply simulate repeated play of the game between a no-regret player and a best-response player, then the empirical average of player actions in this simulation converges to the Nash equilibrium of the game. 

This forms the basis of our algorithm. One player plays the ``multiplicative weights'' algorithm over the initial positions in layer 1 of the graph. This induces at every round a distribution over initial positions. The best response problem, which must be solved by the other player, corresponds to solving a welfare-maximization problem given the distribution over initial positions represented by the multiplicative weights distribution. Fortunately, this is exactly the problem that we have already given a dynamic programming solution for. The solution in the end corresponds to the uniform distribution over the solutions computed by the best-response player over the course of the dynamics. The algorithm is formally described below:
\begin{algorithm}
\KwIn{Time horizon $T$, reward vector $R$ on layer $L_k$, initial transition matrices $M_1^0,\ldots,M_{k-1}^0$, budget $B$, discretization parameter $\varepsilon$.}
\KwOut{$M^1,\ldots,M^T \in \cF\left(B,M_1^0,\ldots,M_{k-1}^0\right)$.}
\textbf{Initialization:} The no-regret player picks $D^1 = \left(\frac{1}{w},\ldots,\frac{1}{w}\right) \in \cD$, the uniform distribution over $[w]$.\\
\For{$t = 1, \ldots, T$}{
The no-regret player plays distribution $D^t \in \cD$.\\
The best-response player chooses $M^t \in \cF\left(B,M_1^0,\ldots,M_{k-1}^0\right)$ such that 
    \[
    R^\top M^t_{k-1} \ldots M^t_1 D^t \geq \max_{M \in \cF} R^\top M_{k-1} \ldots M_1 D^t - \varepsilon \left\Vert R \right\Vert_{\infty},
    \]
    using Algorithm~\ref{alg: max_SW}.\\
The no-regret player observes $u^t_i = \frac{R^\top M^t_{k-1} \ldots M^t_1 e_i}{\Vert R \Vert_{\infty}}$ for all $i \in [w]$, and picks $D^{t+1}$ via multiplicative weight update, as follows:
    \[
    D^{t+1}(i) = \frac{D^t(i) \beta^{u^t_i}}{\sum_{j=1}^w D^t(j) \beta^{u^t_j}}~\forall i \in [w],
    \]
 with $\beta = \frac{1}{1 + \sqrt{2 \frac{\ln w}{T}}} \in [0,1)$. 
}
\caption{2-Player Dynamics for the Ex-Ante Maximin Problem}\label{def: dynamics}
\end{algorithm}

\begin{lemma}\label{lem:exante_accuracy}
Let $T > 0$, $\overline{\Delta M}$ be the probability distribution that picks $\left(M_1,\ldots,M_{k-1}\right) \in \cF(B,M_1^0,\ldots,M_k^0)$ with probability 
\[
\frac{1}{T} \sum_{t=1}^T \mathbbm{1} \left\{\left( M_1,\ldots,M_{k-1} \right) = \left( M^t_1,\ldots,M^t_{k-1} \right) \right\},
\]
where $M^1,\ldots,M^T$ are the outputs of Algorithm~\ref{def: dynamics}. Then $\overline{\Delta M}$ $\left(\varepsilon
    +\sqrt{2 \frac{\ln w}{T}} + \frac{\ln w}{T} \right) \Vert R \Vert_{\infty}$-approximately optimizes Program~\ref{exante_maxmin_program}.
\end{lemma}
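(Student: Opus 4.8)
The plan is to analyze Algorithm~\ref{def: dynamics} as an instance of the Freund--Schapire~\cite{FS96} framework for computing approximate equilibria of a zero-sum game via no-regret dynamics, and then convert the equilibrium guarantee into an approximation guarantee for Program~\ref{exante_maxmin_program}. The relevant game is the one whose payoff to the minimization player (who controls a distribution $D \in \cD$ over initial positions) and whose maximization player (who controls a feasible set of matrices $M \in \cF$, or more precisely a distribution over such sets) is $R^\top M_{k-1}\ldots M_1 D / \Vert R\Vert_\infty \in [0,1]$. The value of this game, scaled back up by $\Vert R\Vert_\infty$, is exactly $OPT_{RMM}$: by von Neumann's minimax theorem, $\min_D \max_{\Delta M} R^\top \E_{\Delta M}[M_{k-1}\ldots M_1] D = \max_{\Delta M}\min_D R^\top \E_{\Delta M}[M_{k-1}\ldots M_1] D$, and since the minimization over the simplex $\cD$ is attained at a vertex $e_j$, the right-hand side equals $OPT_{RMM}$.

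First I would invoke the standard multiplicative-weights regret bound: after $T$ rounds with the stated choice of $\beta = 1/(1+\sqrt{2\ln w / T})$, the no-regret player's average payoff exceeds that of the best fixed action in hindsight by at most $\left(\sqrt{2\ln w / T} + \ln w / T\right)$ in the normalized $[0,1]$ scale — this is the textbook bound for MW on $w$ experts with losses (here gains) in $[0,1]$. Second, I would use the fact that the best-response player, at each round $t$, plays $M^t$ that is within additive $\varepsilon \Vert R\Vert_\infty$ of the true best response to $D^t$, which Algorithm~\ref{alg: max_SW} guarantees by Corollary~\ref{cor: SW_algo_guarantee} (with its discretization parameter tuned appropriately). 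Combining these two facts in the usual ``sandwich'' argument gives that $\overline{\Delta M}$, the uniform distribution over $M^1,\ldots,M^T$, together with $\overline{D} = \frac1T\sum_t D^t$, forms an approximate Nash equilibrium: specifically, $\min_j R^\top \E_{\overline{\Delta M}}[M_{k-1}\ldots M_1] e_j \geq (\text{game value}) - \left(\varepsilon + \sqrt{2\ln w / T} + \ln w / T\right)\Vert R\Vert_\infty = OPT_{RMM} - \left(\varepsilon + \sqrt{2\ln w / T} + \ln w / T\right)\Vert R\Vert_\infty$.

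Concretely, the chain of inequalities I would write is: on one hand, the average payoff $\frac1T\sum_t u^t_{D^t} := \frac1T \sum_t R^\top M^t_{k-1}\ldots M^t_1 D^t / \Vert R\Vert_\infty$ is, by the MW regret bound, at least $\max_i \frac1T\sum_t u^t_i - \sqrt{2\ln w/T} - \ln w/T \geq \min_j R^\top \E_{\overline{\Delta M}}[M_{k-1}\ldots M_1]e_j / \Vert R\Vert_\infty - \sqrt{2\ln w/T} - \ln w/T$ (using that a max over pure strategies dominates, and that $\frac1T\sum_t u^t_i$ is exactly the $\overline{\Delta M}$-expected normalized payoff from initial position $i$). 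On the other hand, each round's best-response near-optimality gives $u^t_{D^t} \geq \max_{M\in\cF} R^\top M_{k-1}\ldots M_1 D^t/\Vert R\Vert_\infty - \varepsilon \geq (\text{game value})/\Vert R\Vert_\infty - \varepsilon$, since the min over $D$ of the max over $M$ is the game value and hence $\max_M R^\top M_{k-1}\ldots M_1 D^t \geq$ value for every $D^t$. Averaging over $t$ and chaining yields the claimed bound after multiplying through by $\Vert R\Vert_\infty$.

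The main obstacle — really more a bookkeeping subtlety than a deep difficulty — is being careful about what ``best response'' means on the maximization side: the best-response player is restricted to deterministic $M \in \cF$, but the equilibrium value we are comparing against is over distributions $\Delta M \in \Delta\cF$; this is fine because $\max_{M\in\cF} R^\top M_{k-1}\ldots M_1 D = \max_{\Delta M} R^\top\E_{\Delta M}[M_{k-1}\ldots M_1] D$ (linearity of the objective in $\Delta M$ means the maximum over the distribution simplex is attained at a point mass), so restricting to pure strategies loses nothing for the best-responder. A second point to get right is that the minimization over $\cD$ in the game is likewise attained at a vertex $e_j$, which is what lets us identify the game value with $OPT_{RMM}$ and lets us read off the final ``$\min_{j\in[w]}$'' in the statement of the objective. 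I would also note that applying minimax here requires the payoff to be bilinear over convex compact strategy sets, which holds: $\cD$ is the simplex and $\Delta\cF$ is a set of distributions over the compact set $\cF$, and the objective is linear in $D$ and linear in $\Delta M$ separately.
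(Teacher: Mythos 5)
Your high-level plan is exactly the paper's: view Program~\eqref{exante_maxmin_program} as a zero-sum game, run Freund--Schapire dynamics with multiplicative weights on the minimizing ($D$) side and approximate best response (Algorithm~\ref{alg: max_SW}) on the maximizing ($\Delta M$) side, and read the approximation guarantee off the no-regret analysis. However, your written inequality chain applies the multiplicative-weights regret bound in the wrong direction, and as a consequence the two halves of the argument do not combine. In Algorithm~\ref{def: dynamics} the weight update is $D^{t+1}(i) \propto D^t(i)\beta^{u^t_i}$ with $\beta < 1$, so higher observed payoff $u^t_i$ gets smaller weight: the MW player is \emph{minimizing} the quantity $u^t_{D^t}$. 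The correct regret guarantee is therefore
\[
\frac{1}{T}\sum_{t=1}^T u^t_{D^t} \;\leq\; \min_{i\in[w]} \frac{1}{T}\sum_{t=1}^T u^t_i \;+\; \left(\sqrt{\tfrac{2\ln w}{T}} + \tfrac{\ln w}{T}\right),
\]
equivalently, $\min_{j\in[w]} R^\top \E_{\overline{\Delta M}}\!\left[M_{k-1}\ldots M_1\right]e_j /\Vert R\Vert_\infty \;\geq\; \frac{1}{T}\sum_t u^t_{D^t} - (\text{regret})$. You instead wrote ``$\frac{1}{T}\sum_t u^t_{D^t}$ is at least $\max_i\frac1T\sum_t u^t_i - (\text{regret})$,'' which is the guarantee for a \emph{maximizing} MW player and points the wrong way. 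With that, your ``on one hand'' step yields $\frac1T\sum_t u^t_{D^t} \geq \min_j(\cdots) - (\text{regret})$ and your ``on the other hand'' step yields $\frac1T\sum_t u^t_{D^t} \geq \text{value}/\Vert R\Vert_\infty - \varepsilon$: two lower bounds on the same quantity, from which nothing about $\min_j(\cdots)$ follows. Flipping the regret step to the minimizing form closes the chain and recovers the claimed bound.

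Two minor remarks for comparison with the paper's proof. First, you invoke the von Neumann minimax theorem to identify the game value and to argue $\max_M R^\top M_{k-1}\ldots M_1 D^t \geq \text{value}$ round-by-round; the paper avoids this by first averaging the best-response inequality over $t$ to obtain $\max_{\Delta M} R^\top \E_{\Delta M}[M_{k-1}\ldots M_1]\bar D$, where $\bar D = \frac1T\sum_t D^t$, and then using only the elementary max-min inequality $\min_D\max_{\Delta M}(\cdot) \geq \max_{\Delta M}\min_D(\cdot)$, sidestepping any compactness/weak-$*$ considerations for $\Delta\cF$. Second, your observation that the pure-strategy best response loses nothing (linearity of the objective in $\Delta M$) is correct and is also used, implicitly, in the paper's chain.
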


The proof of Lemma~\ref{lem:exante_accuracy} follows from interpreting Program~\ref{exante_maxmin_program} as zero-sum game, noting that the best response problem for the maximization player corresponds to the welfare-maximization problem for which we have an efficient algorithm, and then applying the no-regret dynamic analysis from~\citet{FS96}. The details are provided in  Appendix~\ref{app:exante_accuracy}.

\section{Price of Fairness}
In this section, we compute lower bounds on a notion of ``price of fairness'', and we show these lower bounds are tight when restricting attention to pipelines whose edges are \emph{all} malleable\juba{added}. Specifically, we compare the optimal welfare achievable with the welfare that is achievable if we instead use our budget to maximize the \emph{minimum} value over initial positions --- i.e. if we solve the maximin problem. We focus on the ex-post maximin problem --- i.e. we prove our bounds with respect to \emph{deterministic} solutions. We note that there may be many different maximin optimal solutions that differ in their overall welfare, and so we consider two variants of the price of fairness in our setting --- comparing with both the \emph{maximum} welfare consistent with a maximin optimal solution, and the \emph{minimum} welfare consistent with a maximin optimal solution.

Let $OPT_{SW}$ the optimal value of Program~\eqref{SW_program} (the optimal social welfare). Let $S^f$ be the set of solutions to Program~\eqref{maxmin_program} (the deterministic maximin problem). Further, define 
\[
W(M_1,\ldots,M_{k-1}) \triangleq R^\top M_{k-1} \ldots M_1 D_1^0
\]
to be the social welfare achieved by transition matrices $M_1,\ldots,M_{k-1}$, and 
\[
W_{fair}^+ \triangleq \max_{\left(M_1,\ldots,M_{k-1}\right) \in S^f} W(M_1,\ldots,M_{k-1}),
\]
\[
W_{fair}^- \triangleq \min_{\left(M_1,\ldots,M_{k-1}\right) \in S^f} W(M_1,\ldots,M_{k-1})
\]
to be the maximum and minimum social welfare respectively that are consistent with maximin optimal solutions. 
We define two variants of ``the price of fairness'' in our setting as:
\[
P_f^+ \triangleq \frac{OPT_{SW}}{W_{fair}^+} \geq 1,
\]
and
\[
P_f^- \triangleq \frac{OPT_{SW}}{W_{fair}^-} \geq 1.
\]
Note that $P_f^+ \leq P_f^-$ always, as $P_f^+$ compares the optimal social welfare with the solution of Program~\ref{maxmin_program} with highest social welfare , while $P_f^-$ considers the solution that has the lowest social welfare. We provide matching lower bounds on $P_f^+$ and upper bounds on $P_f^-$. This, in turn, provides tight bounds on the price of fairness with respect to \emph{any} choice of maximin solution. 

\subsection{Lower Bounds on $P_f^+$}\label{sec:lower_bounds_pof}

Our lower bounds are based on the following construction:
\begin{example}\label{ex: lb_construction}
Consider a network with only two layers, $L_1$ and $L_2$, such that $L_1$ has $w$ nodes and $L_2$ has $2$ nodes. Suppose the starting distribution is given by $D_1^0 = \left(1-(w-1) \varepsilon,\varepsilon,\ldots,\varepsilon\right)^\top$ for $\varepsilon > 0$ small enough, the reward vector is given by $R = (1,0)^\top$, and the initial transition matrix $M_1^0$ is given by
\begin{align*}
M_1^0 = \left(
\begin{matrix}
0 & \ldots & 0
\\1 & \ldots & 1
\end{matrix}
\right)
.
\end{align*}
I.e., in the initial transition matrix, every starting node transitions to the destination node that has reward $0$, and the welfare of the initial network is $0$. We assume all edges are malleable.
\end{example}

\begin{theorem}\label{thm: fairness_lb}
For all $w \in \mathbb{N}$, for any $\delta > 0$, there exists a network with $k=2$ with price of fairness
\begin{align*}
P_f \geq 
    \begin{cases} 
        w -\delta & \mbox{if } 0 < B \leq 2\\
        \frac{2w}{B} -\delta &\mbox{if } 2 < B \leq 2w\\
        1 &\mbox{if } B \geq 2w
    \end{cases} 
.
\end{align*}
\end{theorem}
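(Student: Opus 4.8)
The plan is to analyze the two-layer instance of Example~\ref{ex: lb_construction} directly: since $L_2$ has only two nodes and $M_1^0$ sends all mass to the reward-$0$ node, a feasible $M_1$ is completely described by the vector $(q_1,\dots,q_w)\in[0,1]^w$ where $q_j \triangleq M_1(1,j)$ is the probability that population $j$ reaches the reward-$1$ node. The cost of column $j$ is $|q_j-0| + |(1-q_j)-1| = 2q_j$, so $c(M_1,M_1^0) = 2\sum_j q_j$ and feasibility is exactly $\sum_j q_j \le B/2$; moreover the social welfare is $W = \sum_j D_1^0(j)\,q_j$ and the maximin objective is $\min_j q_j$. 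So the whole problem reduces to a one-dimensional budgeting question, and I would just compute the relevant optima as functions of $B$.

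First I would pin down the maximin side. Maximizing $\min_j q_j$ subject to $\sum_j q_j \le B/2$ and $q_j \le 1$ gives $OPT_{MM} = \min\{B/(2w),\,1\}$. The one genuinely important observation — and the only place the argument is not a routine calculation — is that the maximin constraint is \emph{rigid}: when $B \le 2w$, any solution with $\min_j q_j \ge B/(2w)$ must have $q_j = B/(2w)$ for every $j$, since otherwise $\sum_j q_j > w\cdot B/(2w) = B/2$ exceeds the budget. Hence \emph{every} maximin-optimal solution has welfare $\sum_j D_1^0(j)\cdot B/(2w) = B/(2w)$ (using $\sum_j D_1^0(j)=1$), so $W_{fair}^+ = W_{fair}^- = B/(2w)$ for $B \le 2w$; and for $B \ge 2w$, $\min_j q_j = 1$ forces $q_j=1$ for all $j$ and welfare $1$. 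In particular $P_f^+ = P_f^-$ on this instance, so the bound holds for either notion of price of fairness.

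Next I would compute $OPT_{SW}$ in the three regimes. Spending budget $b$ on column $j$ raises $q_j$ by $b/2$ and adds welfare $D_1^0(j)\,b/2$, i.e.\ welfare rate $D_1^0(j)/2$; since $D_1^0(1) = 1-(w-1)\varepsilon$ is the largest weight (for $\varepsilon < 1/w$), the optimal strategy fills column $1$ first. If $0 < B \le 2$ this yields $q_1 = B/2 \le 1$ and $OPT_{SW} = (1-(w-1)\varepsilon)\,B/2$. If $2 < B \le 2w$ we first set $q_1 = 1$ at cost $2$ and spend the rest on other columns, so $OPT_{SW} \ge 1-(w-1)\varepsilon$. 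If $B \ge 2w$ we set all $q_j = 1$ and $OPT_{SW} = 1$, which is also the universal upper bound.

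Finally I would take ratios and send $\varepsilon \to 0$. For $0 < B \le 2$: $P_f = OPT_{SW}/W_{fair}^+ = (1-(w-1)\varepsilon)\,w \to w$, so $P_f \ge w-\delta$ once $\varepsilon$ is small enough. For $2 < B \le 2w$: $P_f \ge (1-(w-1)\varepsilon)\cdot \tfrac{2w}{B} \to \tfrac{2w}{B}$, so $P_f \ge \tfrac{2w}{B}-\delta$ for $\varepsilon$ small. For $B \ge 2w$: $P_f = 1/1 = 1$. Choosing $\varepsilon$ in Example~\ref{ex: lb_construction} small enough (as a function of $w$, $B$, $\delta$) then gives a $k=2$ network with the claimed price of fairness. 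I do not expect a real obstacle here: the only subtlety is the rigidity observation in the second paragraph, and the rest is fractional-knapsack bookkeeping plus absorbing the $O(\varepsilon)$ slack into $\delta$.
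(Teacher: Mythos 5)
Your proof is correct and follows essentially the same route as the paper's: analyze Example~\ref{ex: lb_construction}, exploit the uniqueness of the maximin solution to pin down $W_{fair}^\pm = B/(2w)$, compute $OPT_{SW}$ in each budget regime, and send $\varepsilon \to 0$. Your explicit reduction to the scalar variables $q_j = M_1(1,j)$ and the rigidity argument for uniqueness of the maximin solution are a welcome tightening of a step the paper only asserts, but the substance of the argument is the same.
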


The proof follows from solving the social welfare maximization problem and the maximin value problem on Example~\ref{ex: lb_construction}. The full proof is provided in Appendix~\ref{app:fairness_lb}.

\subsection{Upper Bounds on $P_f^-$}

Importantly, in this section, we restrict ourselves to pipelines such that \emph{all} edges are malleable. In this case, we show upper bounds that tightly match the lower bounds of Section~\ref{sec:lower_bounds_pof}.

Our upper bounds will make use of the following claim, which bounds the maximum social welfare that can be achieved under budget $B$. 

\begin{lemma}\label{lem: ub_OPT}
\[
OPT_{SW} \leq \left\Vert R \right\Vert_{\infty}
\]
and
\[
OPT_{SW} \leq  W^0 + \frac{B}{2} \left\Vert R \right\Vert_{\infty},
\]
where $W^0 = R^\top M_{k-1}^0 \ldots M_1^0 D_1^0$ is the initial welfare.
\end{lemma}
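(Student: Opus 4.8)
The plan is to prove the two inequalities separately, both by elementary reasoning about the structure of the objective $R^\top M_{k-1}\cdots M_1 D_1^0$. For the first bound, $OPT_{SW}\le \|R\|_\infty$, I would observe that for any feasible transition matrices $(M_1,\ldots,M_{k-1})$, the vector $M_{k-1}\cdots M_1 D_1^0$ is a probability distribution over the vertices of layer $L_k$: each $M_t$ is left-stochastic and maps $\cD$ into $\cD$, and $D_1^0\in\cD$. Hence $R^\top M_{k-1}\cdots M_1 D_1^0$ is a convex combination of the entries of $R$, and is therefore at most $\max_v R(v) = \|R\|_\infty$ (using $R\ge 0$). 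This is immediate and needs no optimization over the budget.

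For the second bound, the key quantitative fact is that spending total budget $B$ can change the final-layer distribution by at most $B/2$ in $\ell_1$ distance — more precisely, I would show $\|M_{k-1}\cdots M_1 D_1^0 - M_{k-1}^0\cdots M_1^0 D_1^0\|_1 \le B/2$ for any feasible $(M_1,\ldots,M_{k-1})$. To get this, telescope the difference across layers: write the difference as a sum of $k-1$ terms, where the $t$-th term replaces $M_t$ by $M_t^0$ while keeping the already-replaced matrices to the left as the $M^0$'s and the not-yet-replaced ones to the right as the $M$'s. Each term has the form $(\text{left-stochastic product})\cdot(M_t - M_t^0)\cdot(\text{a probability distribution})$. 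Using the Lemma already proved in the excerpt (left-stochastic matrices are $\ell_1$-contractions), each such term is bounded in $\ell_1$ by $\|(M_t - M_t^0) D\|_1$ for the appropriate $D\in\cD$, which in turn is at most $\max_j \sum_i |M_t(i,j) - M_t^0(i,j)|\cdot$ (since $D$ is a distribution, it is a convex combination of columns) $\le \sum_{(i,j)} |M_t(i,j)-M_t^0(i,j)| = c(M_t,M_t^0)$. Actually the cleaner bound here is $\|(M_t-M_t^0)D\|_1 \le \max_j \|M_t(\cdot,j) - M_t^0(\cdot,j)\|_1$; summing columns only helps, so in any case $\sum_t \|(M_t-M_t^0)D_t\|_1 \le \sum_t c(M_t,M_t^0)\le B$. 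The factor $1/2$ comes from the sharper observation that for a single column, $\sum_i |M_t(i,j)-M_t^0(i,j)|$ counts both the mass that leaves and the mass that arrives, so the net displacement of probability is half the $\ell_1$ cost; pushing this through the telescoping sum gives the $B/2$ bound. Then by Hölder, $R^\top M_{k-1}\cdots M_1 D_1^0 \le R^\top M_{k-1}^0\cdots M_1^0 D_1^0 + \|R\|_\infty \cdot \|M_{k-1}\cdots M_1 D_1^0 - M_{k-1}^0\cdots M_1^0 D_1^0\|_1 \le W^0 + \frac{B}{2}\|R\|_\infty$, and taking the max over feasible solutions yields $OPT_{SW}\le W^0 + \frac{B}{2}\|R\|_\infty$.

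The main obstacle is getting the constant $1/2$ right in the telescoping argument rather than a looser constant $1$: one has to be careful that the per-layer cost $c(M_t,M_t^0) = \sum_{(i,j)}|M_t(i,j)-M_t^0(i,j)|$ double-counts moved probability mass, so that the induced change in any downstream distribution is at most $c(M_t,M_t^0)/2$ in $\ell_1$. I would handle this by proving the clean column-wise statement that for $v,v'\in\cD$, $\|v-v'\|_1 \le \|v-v'\|_1$ trivially but the displacement in the sense of "mass that needs to move" is $\frac12\|v-v'\|_1$, and more to the point that if $M_t,M_t^0$ are both left-stochastic then $\|M_t D - M_t^0 D\|_1 \le \frac12 \sum_{(i,j)}|M_t(i,j)-M_t^0(i,j)|$ for every $D\in\cD$ — this is because $(M_t - M_t^0)D$ has entries summing to zero, so its $\ell_1$ norm is twice its positive part, and the positive part is at most $\frac12\sum_{(i,j)}|M_t(i,j)-M_t^0(i,j)|$. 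Everything else is routine: the contraction lemma from the excerpt, the triangle inequality for the telescoping sum, Hölder's inequality, and finally taking suprema over the feasible region (which is nonempty and compact, as already noted).
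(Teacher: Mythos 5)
Your first inequality is correct and matches the paper's argument: $M_{k-1}\cdots M_1 D_1^0$ is a probability distribution, so $R^\top M_{k-1}\cdots M_1 D_1^0$ is a convex combination of entries of $R$ and is at most $\Vert R \Vert_\infty$.

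For the second inequality there is a genuine gap, and it is exactly at the point you flag as ``the main obstacle.'' Your intermediate claim
\[
\Vert M_t D - M_t^0 D \Vert_1 \;\le\; \tfrac{1}{2}\,\sum_{(i,j)} \left\vert M_t(i,j) - M_t^0(i,j)\right\vert \;=\; \tfrac{1}{2}\,c(M_t,M_t^0)
\]
is \emph{false}, and moreover your own justification does not establish it. You correctly observe that $v := (M_t - M_t^0)D$ has zero column sum, so $\Vert v \Vert_1 = 2\cdot(\text{positive part of } v)$, and that the positive part is at most $\tfrac{1}{2}c(M_t,M_t^0)$. But putting these two facts together gives only $\Vert v \Vert_1 \le c(M_t,M_t^0)$ --- you lose a factor of $2$ when you fold them into the claimed bound. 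A concrete counterexample: take $w=2$, $D = e_1$, $M_t^0$ the identity (column $1 = e_1$), and $M_t$ with column $1 = e_2$. Then $c(M_t,M_t^0) = 2$ but $\Vert M_t D - M_t^0 D\Vert_1 = \Vert e_2 - e_1\Vert_1 = 2$, which exceeds $\tfrac{1}{2}\cdot 2 = 1$. Consequently, telescoping and then applying H\"older to the $\ell_1$ distance only yields $OPT_{SW}\le W^0 + B\Vert R\Vert_\infty$, off by a factor of $2$ from the target.

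The fix --- and this is where the paper's route differs in an essential way --- is to exploit $R \ge 0$ \emph{before} taking norms, rather than appealing to H\"older on the $\ell_1$ difference. For a vector $v$ with entries summing to zero and $R\ge 0$, one has $R^\top v \le \Vert R\Vert_\infty \cdot (\text{positive part of } v)$, which is half as large as the generic H\"older bound $\Vert R\Vert_\infty \Vert v\Vert_1$. The paper bounds $R_{t+1}^\top (M_t - M_t^0) D_t$ directly by dropping the negative terms (valid since $R_{t+1}\ge 0$), leaving only $\sum_j D_t(j)\sum_{i\in S_j}(M_t(i,j)-M_t^0(i,j))$, and then uses zero column sums to identify this with half the per-layer cost, $c(M_t,M_t^0)/2 \le B_t/2$. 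If you want to keep your $\ell_1$/telescoping framing, the correct version is: the positive part of $M_{k-1}\cdots M_1 D_1^0 - M_{k-1}^0\cdots M_1^0 D_1^0$ is at most $B/2$ (positive parts are subadditive across the telescoping terms, and left-stochastic maps do not increase the positive part of a zero-sum vector), and then $R^\top(\cdot) \le \Vert R\Vert_\infty \cdot (\text{positive part})$ gives the desired $W^0 + \tfrac{B}{2}\Vert R\Vert_\infty$. As written, however, the argument does not close.
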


The proof of this lemma is straightforward and is deferred to Appendix~\ref{app:ub_OPT}. 

We will also need lower bounds on the social welfare achieved by any optimal solution to the maximin program. The first lower bound is a function of $B$ and $w$, but is independent of $W^0$.
\begin{lemma}\label{lem: lb_maxmin}
When all edges are malleable, for any $\left(M_1^f,\ldots,M_{k-1}^f\right) \in S^f$,
\[
W\left(M_1^f,\ldots,M_{k-1}^f\right) \geq \min\left(1,\frac{B}{2w}\right) \left\Vert R \right\Vert_{\infty}.
\]
\end{lemma}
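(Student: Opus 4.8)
The plan is to reduce the claim to a lower bound on $OPT_{MM}$ itself. For any $\left(M_1^f,\ldots,M_{k-1}^f\right) \in S^f$, by definition of Program~\eqref{maxmin_program} we have $\min_{j\in[w]} R^\top M_{k-1}^f\cdots M_1^f e_j = OPT_{MM}$, so every population $j$ receives expected reward at least $OPT_{MM}$. Averaging over $j \sim D_1^0$ then gives $W\left(M_1^f,\ldots,M_{k-1}^f\right) = \sum_{j} D_1^0(j)\, R^\top M_{k-1}^f\cdots M_1^f e_j \geq OPT_{MM}$. Hence it suffices to show $OPT_{MM} \geq \min\left(1, \frac{B}{2w}\right)\Vert R \Vert_{\infty}$, which I would do by exhibiting a single budget-feasible set of transition matrices attaining at least this maximin value.

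The construction modifies only the final transition matrix $M_{k-1}$ (legal since all edges are malleable), leaving $M_t = M_t^0$ for $t \le k-2$. Recall that by our normalization $\Vert R \Vert_{\infty} = R(1)$, the reward of node $1$ in $L_k$. Split the budget evenly across the $w$ columns of $M_{k-1}^0$, assigning $B/w$ to each node $u \in L_{k-1}$. Within column $u$, push probability mass toward node $1$: either raise entry $(1,u)$ by $B/(2w)$ while removing a total of $B/(2w)$ from the other entries of that column (total $\ell_1$ change $B/w$ for the column), or, if $M_{k-1}^0(1,u) + B/(2w) > 1$, instead set the column to put all its mass on node $1$, which costs strictly less. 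In either case $M_{k-1}(1,u) \geq \min\!\left(1, \frac{B}{2w}\right)$ for every $u$, and the total cost is at most $B$, so the resulting matrices are feasible.

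Finally I would verify the maximin value of this solution. For population $j$, write $q_j = M_{k-2}^0\cdots M_1^0 e_j \in \cD$ for its (unmodified) arrival distribution over $L_{k-1}$. Its expected reward is $R^\top M_{k-1} q_j = \sum_{u} q_j(u)\sum_i M_{k-1}(i,u)R(i) \geq \sum_u q_j(u)\, M_{k-1}(1,u)\, R(1) \geq \min\!\left(1, \tfrac{B}{2w}\right) R(1)\sum_u q_j(u) = \min\!\left(1,\tfrac{B}{2w}\right)\Vert R \Vert_{\infty}$, using $R(i) \geq 0$ for the first inequality and $\sum_u q_j(u) = 1$ for the last equality. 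Since this holds for every $j \in [w]$, the maximin value of this feasible solution, and therefore $OPT_{MM}$ and in turn $W\left(M_1^f,\ldots,M_{k-1}^f\right)$, is at least $\min\!\left(1,\tfrac{B}{2w}\right)\Vert R \Vert_{\infty}$.

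I do not expect a serious obstacle here; the only points needing care are the per-column budget accounting (the factor of $2$ arising because we simultaneously add mass at node $1$ and remove it from the other entries of the same column) and the saturation case $B/(2w) > 1$, which is handled by the $\min$ and is precisely where the ``all edges malleable'' hypothesis is used, since it lets us reshape the whole matrix $M_{k-1}$ freely.
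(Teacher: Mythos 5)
Your proof is correct and takes essentially the same approach as the paper: reduce to a lower bound on $OPT_{MM}$ by noting the welfare of any maximin-optimal solution is a convex combination (under $D_1^0$) of per-population rewards each equal to at least $OPT_{MM}$, then exhibit a budget-feasible modification of $M_{k-1}$ alone that routes at least $\min\left(1,\frac{B}{2w}\right)$ probability mass to the top-reward node from every node in $L_{k-1}$. The only cosmetic difference is that the paper raises each entry $M^0_{k-1}(1,u)$ that lies below $\frac{B}{2w}$ up to exactly $\frac{B}{2w}$ (and leaves the rest untouched), while you always add $\frac{B}{2w}$ additional mass unless this would overflow, thereby folding the $B > 2w$ case into a single saturation check rather than treating it separately as the paper does.
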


The proof of Lemma~\ref{lem: lb_maxmin} is deferred to Appendix \ref{app:lb_maxmin}. We provide a brief proof sketch below:
\begin{proof}[Proof sketch]
The budget $B$ can be fully invested in improving edges from the second-to-last layer $L_{k-1}$ to the last layer $L_k$. The idea is to increase the transition from any node $u \in L_{k-1}$ to the best node $v \in L_k$ with reward $\Vert R \Vert_{\infty}$, by an amount of $B/2w$ each. Doing so guarantees the result, noting that every starting node in the first layer transitions to a node in $L_{k-1}$ with probability $1$, then to reward $\Vert R \Vert_{\infty}$ on the last layer $L_k$ with probability at least $B/2w$. Importantly, note that this proof relies on the fact that the edges from the second-to-last to the last layer are malleable.
\end{proof}

The second lower bound we need shows that the social welfare achieved by a solution to Program~\eqref{maxmin_program} is lower-bounded by the initial social welfare $W^0 = R^\top M_{k-1}^0 \ldots M_1^0 D_1^0$.
\begin{lemma}\label{lem: lb_maxmin_w0}
When all edges are malleable, for any $\left(M_1^f, \ldots, M_{k-1}^f\right) \in S^f$, 
\[
W\left(M_1^f, \ldots, M_{k-1}^f\right) \geq W^0.
\]
\end{lemma}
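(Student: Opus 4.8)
The plan is to show that the initial transition matrices $(M_1^0,\ldots,M_{k-1}^0)$ are themselves feasible for the maximin program, and that their maximin value is at least as large as the social welfare $W^0$ of the initial network; since any $(M_1^f,\ldots,M_{k-1}^f) \in S^f$ is maximin-optimal, it must then achieve maximin value at least that of the do-nothing solution, and its \emph{social welfare} will in turn be bounded below by its maximin value, which chains back to $W^0$. Concretely, I would argue in two steps.

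First, I would observe that the do-nothing solution is budget-feasible ($\sum_t c(M_t^0,M_t^0) = 0 \le B$ and it trivially respects the non-malleability constraints), so it lies in $\cF(B,M_1^0,\ldots,M_{k-1}^0)$, and therefore $OPT_{MM} \ge \min_{j \in [w]} R^\top M_{k-1}^0 \cdots M_1^0 e_j$. The key quantitative claim is that this minimum over initial positions is itself at least $W^0 = R^\top M_{k-1}^0 \cdots M_1^0 D_1^0$. This holds because $W^0 = \sum_{j \in [w]} D_1^0(j) \, \big(R^\top M_{k-1}^0 \cdots M_1^0 e_j\big)$ is a convex combination (the weights $D_1^0(j)$ are nonnegative and sum to $1$) of the per-population expected rewards, hence is at least their minimum. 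So $OPT_{MM} \ge W^0$.

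Second, I would note that for \emph{any} feasible $(M_1,\ldots,M_{k-1})$ — in particular for a maximin-optimal $(M_1^f,\ldots,M_{k-1}^f)$ — the social welfare is again a convex combination (with weights $D_1^0(j)$) of the per-population expected rewards $R^\top M_{k-1}^f \cdots M_1^f e_j$, so $W(M_1^f,\ldots,M_{k-1}^f) \ge \min_{j \in [w]} R^\top M_{k-1}^f \cdots M_1^f e_j = OPT_{MM}$, where the last equality is precisely the statement that $(M_1^f,\ldots,M_{k-1}^f) \in S^f$ attains the maximin optimum. Combining the two steps gives $W(M_1^f,\ldots,M_{k-1}^f) \ge OPT_{MM} \ge W^0$, as desired.

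There is essentially no obstacle here: the argument is just ``an average is at least its minimum'' applied twice, together with feasibility of the do-nothing solution. One should double-check that $M_{k-1}^0 \cdots M_1^0 e_j$ is a genuine probability distribution so that no sign issues arise in the convex-combination step (this follows since each $M_t^0$ is left-stochastic and $e_j \in \cD$), and that the argument does not actually need malleability — it does not, though the lemma is stated in that regime for consistency with the surrounding results. Note also this gives a cleaner route than the proof sketch of Lemma~\ref{lem: lb_maxmin}, since we do not need to construct any explicit intervention.
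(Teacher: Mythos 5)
Your chain $W(M_1^f,\ldots,M_{k-1}^f) \ge OPT_{MM} \ge W^0$ breaks at the second inequality, and the error is visible in your own justification. You assert that $\min_{j} R^\top M_{k-1}^0 \cdots M_1^0 e_j \ge W^0$, but then justify it by observing that $W^0$ is a convex combination of the per-population rewards and ``hence is at least their minimum.'' That argument proves $W^0 \ge \min_j R^\top M_{k-1}^0 \cdots M_1^0 e_j$ --- the \emph{opposite} inequality --- and the two together certainly do not give $OPT_{MM} \ge W^0$.

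In fact $OPT_{MM} \ge W^0$ is false in general. Take $w=2$, $k=2$, $R = (1,0)^\top$, $D_1^0 = (0.99, 0.01)^\top$, all edges malleable, budget $B$ small, and
\[
M_1^0 = \begin{pmatrix} 1 & 0 \\ 0 & 1 \end{pmatrix}.
\]
Then $W^0 = 0.99$, but a feasible intervention can raise the worse population's reward by at most $B/2$, so $OPT_{MM} \le B/2$, which is far below $W^0$ when $B$ is small. (The lemma still holds here: the maximin-optimal move shifts some of population $2$'s mass onto the reward-$1$ node without hurting population $1$, so $W(M^f) = 0.99 + 0.01\cdot(B/2) > W^0$.) The inequality $W(M^f) \ge OPT_{MM}$ you prove in the second step is true but useless here, since $OPT_{MM}$ can be much smaller than $W^0$.

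The paper's proof establishes something much stronger than an aggregate comparison: it shows that any maximin-optimal solution is \emph{pointwise} at least as good as the do-nothing solution, i.e., $R^\top M_{k-1}^f\cdots M_1^f e_q \ge R^\top M_{k-1}^0\cdots M_1^0 e_q$ for \emph{every} starting position $q$, and then averages over $D_1^0$. To get this pointwise monotonicity it argues by exchange: if some column of some $M_t^f$ were changed in a way that decreases the continuation value of that node relative to $M_t^0$, you could revert that column to $M_t^0$, weakly improve every population's reward, and strictly free up budget; by Claim~\ref{clm: maxmin_equalspending} (which requires malleability), having leftover budget contradicts maximin-optimality unless the solution already attains $\Vert R\Vert_\infty$. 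That exchange-plus-budget-saturation argument is the substance of the proof, and there is no shortcut through the maximin value $OPT_{MM}$ alone.
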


 We defer the full proof of Lemma~\ref{lem: lb_maxmin_w0} to Appendix~\ref{app:lb_maxmin_w0} and provide a proof sketch below:
 
\begin{proof}[Proof sketch]
 The proof follows from the fact that increasing the expected reward of any given node $u$ in any layer $L_t$ in the network can be done by taking some of the transition probability from $u$ to a low-reward node and re-allocating it to the transition between $u$ and a higher reward node. Doing so does not decrease the expected reward of any other vertex in the network. In turn, it is always sub-optimal to invest budget into decreasing the expected reward of any node in the network.
\end{proof}

We can now use Lemmas~\ref{lem: ub_OPT},~\ref{lem: lb_maxmin} and~\ref{lem: lb_maxmin_w0} to derive nearly tight upper bounds on the price of fairness with respect to the worst maximin solution:
\begin{theorem}
For every instance of the problem in which edges are malleable, we have that
\begin{align*}
P_f^- \leq 
    \begin{cases} 
        w + 1 & \mbox{if } 0 < B \leq 2\\
        \frac{2w}{B} &\mbox{if } 2 < B \leq 2w\\
        1 &\mbox{if } B \geq 2w
    \end{cases} 
.
\end{align*}
\end{theorem}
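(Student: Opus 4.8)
The plan is to combine the three lemmas just stated into a single lower bound on $W_{fair}^-$, namely
\[
W_{fair}^- \;\geq\; \max\!\left(\,\min\!\left(1,\tfrac{B}{2w}\right)\left\Vert R\right\Vert_\infty,\; W^0\right),
\]
which holds because every $\left(M_1^f,\ldots,M_{k-1}^f\right)\in S^f$ satisfies both Lemma~\ref{lem: lb_maxmin} and Lemma~\ref{lem: lb_maxmin_w0} simultaneously, and then to bound $OPT_{SW}$ from above using the appropriate clause of Lemma~\ref{lem: ub_OPT} in each of the three budget regimes. So the proof is a three-way case analysis on $B$, identical in structure to the regimes in the statement.

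For $0<B\le 2$ we have $\tfrac{B}{2w}\le\tfrac1w\le 1$, so $\min(1,\tfrac{B}{2w})=\tfrac{B}{2w}$; I would use the second bound of Lemma~\ref{lem: ub_OPT}, $OPT_{SW}\le W^0+\tfrac{B}{2}\left\Vert R\right\Vert_\infty = W^0 + w\cdot\tfrac{B}{2w}\left\Vert R\right\Vert_\infty$, and then substitute $W^0\le W_{fair}^-$ and $\tfrac{B}{2w}\left\Vert R\right\Vert_\infty\le W_{fair}^-$ to get $OPT_{SW}\le (w+1)W_{fair}^-$, i.e.\ $P_f^-\le w+1$. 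For $2<B\le 2w$ we still have $\min(1,\tfrac{B}{2w})=\tfrac{B}{2w}$, so $W_{fair}^-\ge\tfrac{B}{2w}\left\Vert R\right\Vert_\infty$; combining this with the first bound of Lemma~\ref{lem: ub_OPT}, $OPT_{SW}\le\left\Vert R\right\Vert_\infty = \tfrac{2w}{B}\cdot\tfrac{B}{2w}\left\Vert R\right\Vert_\infty\le\tfrac{2w}{B}W_{fair}^-$, gives $P_f^-\le\tfrac{2w}{B}$. For $B\ge 2w$ we have $\min(1,\tfrac{B}{2w})=1$, so $W_{fair}^-\ge\left\Vert R\right\Vert_\infty\ge OPT_{SW}$ by Lemma~\ref{lem: ub_OPT}, hence $P_f^-\le 1$; since $P_f^-\ge 1$ always, this clause is tight with equality.

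There is no real obstacle here beyond bookkeeping: the argument is a direct consequence of the already-proved Lemmas~\ref{lem: ub_OPT}, \ref{lem: lb_maxmin}, and~\ref{lem: lb_maxmin_w0}, and all three use the malleability hypothesis in the same places those lemmas do. The one point worth stating explicitly in the write-up is \emph{why} the two lower bounds on $W_{fair}^-$ can be taken together: because $W_{fair}^-$ is the minimum of $W(\cdot)$ over a single set $S^f$, and each fixed element of $S^f$ obeys both Lemma~\ref{lem: lb_maxmin} and Lemma~\ref{lem: lb_maxmin_w0}, the minimum obeys the larger of the two bounds. Everything else is the elementary algebra shown above, and the resulting bounds match the lower bounds of Theorem~\ref{thm: fairness_lb} up to the additive $1$ in the small-budget regime and exactly in the other two.
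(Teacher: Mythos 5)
Your proof is correct and follows essentially the same route as the paper: it uses the same three lemmas and the same three-way case split on $B$. The only (mild) improvement is in the case $0 < B \le 2$, where you write $OPT_{SW} \le W^0 + w\cdot\tfrac{B}{2w}\lVert R\rVert_\infty$ and bound the two summands by $W_{fair}^-$ separately, thereby avoiding the paper's secondary subcase analysis comparing $W^0$ against $\tfrac{B}{2w}\lVert R\rVert_\infty$; this is a slightly cleaner presentation of the identical argument.
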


\begin{proof}
We divide the proof in three cases:
\begin{enumerate}
\item $B \geq 2w$. By Lemma~\ref{lem: lb_maxmin}, it must be the case that any optimal solution to Program~\eqref{maxmin_program} has welfare at least $\min\left(1,\frac{B}{2w}\right) \left\Vert R \right\Vert_{\infty} = \left\Vert R \right\Vert_{\infty}$. It is then immediately the case that $OPT_{SW} = \left\Vert R \right\Vert_{\infty}$ by Lemma~\ref{lem: ub_OPT} and $P_f = 1$.
\item $2 < B \leq 2w$. By Lemma~\ref{lem: ub_OPT}, we have $OPT_{SW} \leq \left\Vert R \right\Vert_{\infty}$. Further, by Lemma~\ref{lem: lb_maxmin}, we have that any solution to Program~\eqref{maxmin_program} has welfare at least $\frac{B}{2w} \left\Vert R \right\Vert_{\infty}$. This immediately yields the result.
\item $0 < B \leq 2$. By Lemma~\ref{lem: ub_OPT}, we have $OPT_{SW} \leq W^0 + \frac{B}{2} \left\Vert R \right\Vert_{\infty}$. By Lemmas~\ref{lem: lb_maxmin} and~\ref{lem: lb_maxmin_w0}, we have that the social welfare of any maximin solution is at least $W^0$ and at least $\frac{B}{2w} \left\Vert R \right\Vert_{\infty}$. Therefore, the price of fairness is upper-bounded on the one hand by 
\[
P_f^- \leq \frac{W^0 + \frac{B}{2} \left\Vert R \right\Vert_{\infty}}{W^0} = 1 + \frac{\frac{B}{2} \left\Vert R \right\Vert_{\infty}}{W^0}.
\]
and on the other hand by 
\[
P_f^- \leq \frac{W^0 + \frac{B}{2} \left\Vert R \right\Vert_{\infty}}{\frac{B}{2w} \left\Vert R \right\Vert_{\infty}} = w + \frac{W^0}{\frac{B}{2w} \left\Vert R \right\Vert_{\infty}}.
\]
When $W^0 \geq \frac{B}{2w} \left\Vert R \right\Vert_{\infty}$, the first bound gives
\[
P_f^- \leq 1 + \frac{\frac{B}{2} \left\Vert R \right\Vert_{\infty}}{\frac{B}{2w} \left\Vert R \right\Vert_{\infty}} = w + 1,
\]
and when $W^0 \leq \frac{B}{2w} \left\Vert R \right\Vert_{\infty}$, the second bound yields
\[
P_f^- \leq w + \frac{\frac{B}{2w} \left\Vert R \right\Vert_{\infty}}{\frac{B}{2w} \left\Vert R \right\Vert_{\infty}} = w + 1,
\]
which concludes the proof.
\end{enumerate}
\end{proof}

\section{Hardness of Approximation}

In this section, we show that the problem of finding the ex-post maximin value of a pipeline intervention problem instance within an approximation factor of $2$ is $\nph$-hard in the general case, where the width $w$ of the network is not bounded. More specifically, we show that no algorithm that has a time bound polynomial in $w,k$ and $B$ can give a $2$-approximation to the maximin value unless $P=NP$. This hardness result holds for $k$ as small as $17$. We remark that our result and proof can be immediately extended to show hardness of $C$-approximation, for any constant $C$, for an appropriate choice of constant depth $k$.


We show this hardness result via a reduction from a gap version of the vertex cover problem. The result of~\cite{dinur2005hardness} shows that it is $\nph$-hard to approximate the minimum vertex cover by a factor smaller than $1.306$. In particular, their result shows that the following gap version of vertex cover is $\nph$-hard: given $(\cG,\kappa)$, we wish to either know if the graph $\cG$ has a vertex cover of size $\kappa$, or has no vertex cover smaller than size $1.306 \kappa$. 

\paragraph{The Reduction}

Our reduction works as follows: we construct a pipeline intervention instance of constant width (17 layers) from the given graph. The first layer has a node corresponding to each edge $(u,v)$ of the original graph, and is connected by edges to nodes corresponding to vertices $u$ and $v$ on the second layer. We set up the instance so that positive probability mass is only ever added to a set of edge disjoint paths, where each path corresponds to a vertex in the original graph. These paths are shown by the dark, solid lines in Figure~\ref{fig:hardness_instance}. The main idea behind the reduction is the following - by observing how allocations finding the maximin value split the budget over these edge disjoint paths, we can find out which vertices would form a small vertex cover of the original graph. 

\begin{figure}[h]
    \centering
    \includegraphics[width=10cm]{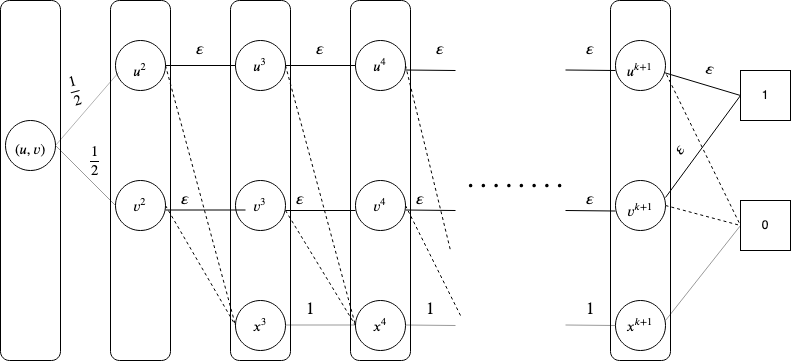}
    \caption{Constructed Instance of the Pipeline Intervention problem}
    \label{fig:hardness_instance}
\end{figure}

Formally, let the given graph $\cG=(\cV,\cE)$ we reduce from have $n$ vertices ($|\cV| =n $) and $m$ edges ($|\cE| = m$).  We construct of a pipeline intervention problem instance $\cI'$ with $k+2$ layers and width $w$, where $k =15$ and $w$ is polynomial in $n$. The instance $\cI'$ has an associated budget $B(\kappa, \varepsilon) = 2 k\kappa \varepsilon$ where $\varepsilon < \frac{1}{2}$. For the sake of clarity, we refer to the set of vertices $\cV$ in the vertex cover instance as ``vertices'' and the vertices in the instance $\cI'$ as ``nodes''. A complete description of instance $\cI'$ is as follows:
\begin{enumerate}
     \item The first layer, $L_1$, has exactly $m$ nodes, with each edge $(u,v)$ in graph $\cG$ having a unique corresponding node of the same label in layer $L_1$. 
    \item The second layer has exactly $n$ nodes, with each vertex $v$ in graph $\cG$, having a unique corresponding node in layer $L_2$ with label $v^2$.
    \item The next $k-1$ layers are of the following form - layer $L_i$, for $i=3$ to $k+1$, has $n+ 1$ nodes. The first $n$ nodes have labels from the set $\{v^i\}_{v \in V}$, i..e, each vertex $v$ in the original graph $\cG$ has a corresponding node $v^i$ in layer $L_i$. The last node is indexed by $x^i$ and exists to capture the ``leftover'' outward probability from the nodes $\{v^{i-1}\}_{v \in  V}$ in layer $L_{i-1}$.
    \item The final layer $L_{k+2}$ has two reward nodes - $y$, of reward $1$ and $z$, of reward $0$. 
\end{enumerate}  
We now describe the initial transition matrices.
\begin{enumerate}
    \item From layer $L_1$ to layer $L_2$: for every node $(u,v)$ in layer $L_1$, the outgoing probability is equally split between edges to nodes $u^1$ and $v^1$ in layer $L_2$, i.e., edges $((u,v),u^1)$ and $((u,v),v^1)$ each have probability $\frac{1}{2}$.
     \item From layer $L_i$ to layer $L_{i+1}$ for $i=2$ to $k$: For all vertices $v \in \cV$ (i.e., the original graph), the corresponding edge $(v^i,v^{i+1})$ (in our construction) has probability $\varepsilon$. The remaining outgoing probability out of node $v^i$ goes to the leakage node $x^{i+1}$. We call edges of the form $(v^i, x^{i+1})$ ``leakage'' edges. For $i \geq 3$, the edge $(x^i,x^{i+1})$ has all the outward probability, i.e., $1$,  from node $x^i$.
    \item From layer $L_{k+1}$ to layer $L_{k+2}$: each node in layer $L_{k+1}$ is connected to $z$, the zero reward node, with probability $1$.
\end{enumerate}

We let $P_v$ be the path going through nodes $v^2,v^3 \cdots v^{k+1}, y$ in our construction. We will refer to $\{P_v\}_{v \in \cV}$ as vertex paths. Let $E'$ be the set of edges found on paths $\{P_v\}_{v \in V}$. Let $E''$ contain of all the ``leakage'' edges in the instance $\cI'$ ,i.e., edges of the form $(v^i, x^{i+1})$ as well as all edges of the form $(v^{k+1},z)$. We stipulate, as part of the description of the instance, that $E' \cup E''$ is the set of malleable edges in $\cI'$ and that the probability mass on any other edge cannot be changed. This completes the description of instance $\cI'$.

\paragraph{Formal Hardness Result}
In this setup, we argue that any algorithm computing a maximin value of our instance will only use budget to improve the probability mass on the vertex paths. To begin, we observe that the only nodes with more than one outgoing malleable edge are the nodes on the vertex paths, exempting the reward node $y$. Hence, these are the only nodes whose associated outgoing transition matrices can be modified. Furthermore, these nodes have exactly one edge from $E'$, the edges on the vertex paths, and one leakage edge, implying that probability mass can only be transferred between these two edges. Note that the reward associated with starting at any leakage node is always $0$, since all outgoing edges from leakage nodes lead to the zero sink with probability $1$, and are not malleable.  
 In contrast, nodes on the vertex paths start with expected positive welfare initially. Therefore, it would be a (strictly) sub-optimal strategy to route any more probability mass towards a leakage node , i.e., increase the probability mass on any edge in $E''$ leading to reward $0$, since it would come at the cost of decreasing probability mass on an edge towards a node on a vertex path with strictly positive reward. Consequently, any algorithm that optimizes the maximin value would only increase the probability on the other malleable edges, i.e., edges in $E'$, by removing probability mass from edges $E''$. To give a more local picture, budget is spent on the following operation - increasing the probability mass on an edge of the form $(v^i,v^{i+1})$ and balancing the outgoing probability mass from vertex $v^i$ by correspondingly decreasing the probability mass on edge $(v^i,x^{i+1})$. Note that there are no malleable edges in the transition between layer $L_1$ and layer $L_2$, and hence the corresponding transition matrix remains unchanged.

We fix a threshold $T \triangleq \frac{(2 \varepsilon)^k}{4} $. The following lemma shows the desired reduction:
\begin{lemma}
\label{lemma:keyreduction}
If graph $\cG$ has a vertex cover of size $\kappa$, then the maximin value of the constructed instance is at least $2T$. Complementarily, if graph $\cG$ has no vertex cover smaller than $1.3605 \kappa$, then the maximin value of the constructed instance is less than $T$.
\end{lemma}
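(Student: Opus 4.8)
The plan is to prove both directions of the lemma by first putting an arbitrary budget-feasible intervention into a \emph{normal form} supported on the vertex paths, and then running a tight budget-accounting argument over these (edge-disjoint) paths. The normalization is essentially the content of the discussion preceding the lemma: the only malleable out-edges of a path node $v^i$ are its path edge (in $E'$) and its leakage edge (in $E''$), all its other out-edges are pinned at probability $0$, and moving mass back from a leakage edge onto the corresponding path edge weakly increases every starting node's expected reward while weakly decreasing the total cost. Hence, without loss of generality, a feasible solution is described by a probability $p_e\ge p_e^{0}$ for each edge $e$ on some path $P_v$, with $p_e^{0}=\varepsilon$ on the $k-1$ internal edges of $P_v$ and $p_e^{0}=0$ on the terminal edge $(v^{k+1},y)$. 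Under this normalization, the cost is exactly $2\sum_e(p_e-p_e^{0})$; the expected reward of a starting node $(u,v)\in\cE$ equals $\tfrac12\big(f(u)+f(v)\big)$, where $f(v)\triangleq\prod_{e\in P_v}p_e$ is the probability of reaching $y$ from the $L_2$-node of $v$ (mass that ever leaves a vertex path reaches only the zero sink); and therefore the maximin value of the solution is exactly $\min_{(u,v)\in\cE}\tfrac12\big(f(u)+f(v)\big)$. Because the paths $\{P_v\}_{v\in\cV}$ are edge-disjoint, the cost decomposes as $\sum_v\beta_v$, where $\beta_v$ is the budget spent on $P_v$; so $\sum_v\beta_v\le B$, and by AM--GM the largest value of $f(v)$ reachable with budget $\beta_v$ is $\big(\tfrac{\beta_v+2(k-1)\varepsilon}{2k}\big)^{k}$, attained by raising all $k$ edges of $P_v$ to the common value $\tfrac{\beta_v+2(k-1)\varepsilon}{2k}$ (which stays in $[p_e^0,1]$ for the budget range that arises, using $\varepsilon<\tfrac12$).

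For the completeness direction, suppose $\cG$ has a vertex cover $C$ with $|C|\le\kappa$. Put $\beta_v=B/|C|\ge B/\kappa=2k\varepsilon$ on $P_v$ for each $v\in C$, and $\beta_v=0$ otherwise. Each $v\in C$ then attains $f(v)\ge\big(\tfrac{(2k-1)\varepsilon}{k}\big)^{k}=(2\varepsilon)^{k}\big(1-\tfrac{1}{2k}\big)^{k}$, which for the calibrated value of $T$ and the chosen depth $k$ is at least $4T$. Since $C$ is a vertex cover, every edge $(u,v)\in\cE$ has an endpoint in $C$, so its value is at least $\tfrac12\cdot 4T=2T$; hence this solution, and therefore the instance $\cI'$, has maximin value at least $2T$.

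For the soundness direction, take the maximin-optimal solution of $\cI'$ (which exists), put it in normal form, and suppose its maximin value is at least $T$. Then every edge $(u,v)\in\cE$ has $\tfrac12\big(f(u)+f(v)\big)\ge T$, hence $\max\{f(u),f(v)\}\ge T$, so $C\triangleq\{v\in\cV:f(v)\ge T\}$ is a vertex cover of $\cG$. Inverting the tight AM--GM bound, achieving $f(v)\ge T$ on the $k$-edge path $P_v$ forces $\beta_v\ge 2kT^{1/k}-2(k-1)\varepsilon$; summing over $v\in C$, using edge-disjointness and $\sum_v\beta_v\le B=2k\kappa\varepsilon$, we get
\[
|C|\ \le\ \frac{2k\kappa\varepsilon}{\,2kT^{1/k}-2(k-1)\varepsilon\,}.
\]
Substituting $T^{1/k}=2\varepsilon/4^{1/k}$ and the concrete value of $k$ shows the right-hand side is strictly below $1.3605\kappa$. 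Thus $\cG$ has a vertex cover of size smaller than $1.3605\kappa$; contrapositively, if $\cG$ has no vertex cover smaller than $1.3605\kappa$, the maximin value of $\cI'$ is below $T$.

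The step I expect to be the main obstacle is the two-sided calibration of constants, which is precisely what forces the depth to be a (sufficiently large) constant such as $k=15$: completeness needs the per-path flow $(2\varepsilon)^{k}(1-\tfrac{1}{2k})^{k}$, reachable with budget $2k\varepsilon$, to be at least $4T$, while soundness needs reaching flow $T$ on a path to be costly enough that the total budget $B$ forces $|C|<1.3605\kappa$; together these pin the ratio $T/(2\varepsilon)^{k}$ from both sides, and $1.3605$ must stay below the vertex-cover inapproximability constant of~\cite{dinur2005hardness}. The only genuinely fiddly technical point is verifying that the AM--GM-balanced edge assignment never drives an internal path edge below its initial value $\varepsilon$ nor any edge above $1$ in the relevant budget regimes, which uses $\varepsilon<\tfrac12$ and $\beta_v=\Theta(k\varepsilon)$.
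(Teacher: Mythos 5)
Your approach mirrors the paper's quite closely: restrict to budget on the (edge-disjoint) vertex paths, use AM--GM to bound per-path flow as a function of the budget spent on that path, and round the budget split to a vertex cover in the soundness direction. Your normalization (writing $f(v)=\prod_{e\in P_v}p_e$, cost $=2\sum_e(p_e-p_e^0)$, and $\sum_v\beta_v\le B$) is cleaner and more explicit than the paper's, and your soundness bookkeeping (threshold on $f(v)\ge T$, invert AM--GM to get $\beta_v\ge 2kT^{1/k}-2(k-1)\varepsilon$) is a slightly different but equivalent route to the paper's rounding on $a_v$; it even gives a marginally tighter constant.

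However, there is a genuine gap in your completeness direction. With your own reading of the instance (the terminal edge $(v^{k+1},y)$ has $p_e^0=0$, so $\sum_e p_e^0=(k-1)\varepsilon$), spending $\beta_v=2k\varepsilon$ on $P_v$ gives, by AM--GM,
\[
f(v)\ \le\ \left(\frac{2k\varepsilon+2(k-1)\varepsilon}{2k}\right)^{k}\ =\ (2\varepsilon)^{k}\left(1-\tfrac{1}{2k}\right)^{k},
\]
and you assert this is ``at least $4T$.'' But with $T=(2\varepsilon)^k/4$ this reads $(1-\frac{1}{2k})^k\ge 1$, which is false for every $k$; for $k=15$, $(1-\frac1{30})^{15}\approx 0.60$, so $f(v)\approx 2.4T$ and the maximin you establish is only about $\tfrac12 f(v)\approx 1.2T<2T$. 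You flag the calibration as ``the main obstacle'' but present the bound as if it closes; it does not. The discrepancy is inherited from a small internal inconsistency in the paper's construction (item 3 sets $(v^{k+1},y)$ to initial probability $0$, while the paper's own completeness argument proceeds as if every path edge starts at $\varepsilon$ so that adding $\varepsilon$ yields $2\varepsilon$ on all $k$ edges and hence $f(v)=(2\varepsilon)^k=4T$). You explicitly commit to $p_e^0=0$ on the terminal edge but then state a numerical conclusion that is only correct under the other reading; this is a real error that needs to be resolved, either by adjusting the construction (start the terminal edge at $\varepsilon$, with the leakage edge at $1-\varepsilon$) or by recalibrating $T$ consistently in both directions.
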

Observe that the above lemma shows achieving a $2$-approximation to the maximin value is $\nph$-hard, since such an algorithm would be able to solve the gap version of vertex cover in polynomial time. Note that this also rules out an additive PTAS for the pipeline interventions problem, even for networks of constant depth , since our hard instance $\cI'$ has constant depth.

\begin{proof}
The forward direction of the proof is straightforward. Given that the original graph has a vertex cover $\cV^* \subseteq \cV$ of size $\kappa$, we show a candidate solution that guarantees a minimum reward of $2T$. Consider the subset $E^* \subseteq E$ which consists only of edges on the vertex paths indexed by vertices in the vertex cover $\cV^*$, i.e., $\{P_v\}_{v \in \cV^*}$. Since each such path has $k$ edges and there are $\kappa$ number of paths, $|E^*| = k \kappa$.
We spend a budget $2 \varepsilon$ to increase the probability mass one each edge of $E^*$ by $\varepsilon$ and decrease the probability mass of the corresponding leakage edge by $\varepsilon$. Note that this is always possible, since $\varepsilon + \varepsilon \le 1$ (i.e., we never try to increase the probability mass on an edge beyond $1$). This exactly utilizes our budget $B(\kappa, \varepsilon) = 2k \kappa \varepsilon$. The new probability on every edge of $E^*$ is now $2\varepsilon$. Consider any population starting on a node $(u,v)$ of the first layer. Since $\cV^*$ is a vertex cover, at least one of $u$ or $v$ is present in the set $\cV^*$. Without loss of generality, let us assume $v \in \cV^*$. The reward at node $v^2$ is now exactly $(2\varepsilon)^k$. Since the transition matrices for layer $L_1$ to $L_2$ are unchanged, a population starting at node $(u,v)$ arrives at node $v^2$ with probability $\frac{1}{2}$; thus, an agent starting at $(u,v)$ has expected reward at least $\frac{(2 \varepsilon)^k}{2} = 2T$.

Now, consider the case where the graph has no vertex cover of size $1.305 \kappa$. To prove a contradiction, let us assume that the maximin value is greater than or equal to $T$. We will show that we can recover a vertex cover of size less than size $1.305 \kappa$ from the maximin value solution, thus proving a contradiction. Let $W(l)$ represent the reward associated with starting at node $l$. We know that all the budget is spent on the paths $\{P_v\}_{v \in V}$. Let use write $2k \varepsilon a_v$ be the budget spent on improving path $P_v$; note that the net increase in probability mass across all edges of $P_v$ is $k \varepsilon a_v$. By the AM-GM inequality, we know that $W(v^2) \le (\varepsilon + a_v \varepsilon)^k$, with equality when $k a_v \varepsilon$ is split equally across edges on path $P_v$. A population starting at node $(u,v)$ on layer $L_1$ reaches node $u_2$ and node $v^2$ each with probability $\frac{1}{2}$ each. Thus, we have $W((u,v)) \le \frac{(\varepsilon + a_u \varepsilon)^k + (\varepsilon + a_v \varepsilon)^k}{2}$. Since this solution guarantees a reward of at least $T$ for every vertex within a total budget $B(\kappa, \varepsilon) = 2 k \kappa \varepsilon$, we have the following inequalities:
\begin{align*}
    \frac{(\varepsilon + a_u \varepsilon)^k + (\varepsilon + a_v \varepsilon)^k}{2} &\ge \frac{(2\varepsilon)^k}{4} \quad \forall (u,v) \in E
    \\ \sum_{v \in V} a_v \le \kappa \varepsilon
\end{align*}
Dividing the first inequality by $\varepsilon^k$, multiplying by $2$, and substituting $k =15$, we get:
\begin{align*}
 (1 + a_u)^{15} + (1 + a_v)^{15} &\ge 2^{14} \quad \forall (u,v) \in E
    \\ \sum_{v \in V} a_v \le \kappa
\end{align*}
We now generate a vertex set $\cV^*$ of original graph $\cG$ as follows - include vertex $v$ in $\cV^*$ if $a_v \ge 0.823$. We complete the proof by showing that $\cV^*$ is a vertex cover of $G$ and $|\cV^*| < 1.3605 \kappa$. The fact that $|\cV^*| < 1.3605 \kappa $ follows directly from the rounding scheme that is employed. Observe that each $a_v$ is scaled upward by a factor of at most $\frac{1}{0.823} < 1.22$. Thus, $|\cV^*| < 1.22 \kappa$. Now, assume that $\cV^*$ is not a vertex cover. Then, there exists an edge $(u,v) \in \cE$ in the original graph such that $a_u, a_v < 0.823$. Thus, we get:
\begin{align*}
    (1 + a_u)^{15} + (1 + a_v)^{15} &< 2 (1.823)^{15} < 16325< 16384 =   2^{14}
\end{align*}
This violates the reward guarantee for the population starting at node $(u,v)$, thus resulting in a contradiction.
\end{proof}

\bibliographystyle{plainnat}
\bibliography{bibliography.bib}

\appendix
\section{Preliminaries, continued}
\subsection{Proof of Claim~\ref{clm: budget_loss}}
\label{app:budget_loss}

We will prove both results at once, noting that they both directly follow from showing that there exists matrices $M_1^\varepsilon,\ldots,M_k^\varepsilon$ that are feasible for the discretized problems, such that for all $j \in [w]$,
\[
R^\top M_{k-1}^\varepsilon \ldots M_{1}^\varepsilon e_j \geq R^\top M^*_{k-1} \ldots M^*_{1} e_j - (k-1) \varepsilon,
\] 
where $M^*_1,\ldots,M^*_{k-1}$ is an optimal solution to Program~\eqref{SW_program}, respectively~\eqref{maxmin_program}. To do so, we first note that it is feasible for Program~\eqref{dscrt_SW_program} to pick a split of the budget $B_1^\varepsilon,\ldots,B_{k-1}^\varepsilon$ such that for all $t$, $B_t^\varepsilon \geq \max(c(M_t^*,M_t^0) - \varepsilon,0)$, by construction of $\cB(\varepsilon)$. We are going to construct a matrix $M^\varepsilon_t$ that is close to $M_t^*$ and requires budget at most $B_t^\varepsilon$. 

When $M_t^* = M_t^0$, one can just let $M^\varepsilon_t = M_t^0$. Now, suppose $c(M_t^*,M_t^0) > 0$. For every pair of nodes $u \in L_{t}$, we let $S_u^+$ the set of vertices $v \in L_{t+1}$ such that $M_t^*(v,u) > M_t^0(v,u)$ (i.e. the transition from $u$ to $v$ has higher probability in $M_t^*$ than in $M_0$), and $S_u^-$ the set of vertices $v \in L_{t+1}$ such that $M_t^*(v,u) < M_t^0(v,u)$. We note immediately that 
\[
c(M_t^*,M_t^0) = \sum_{u \in L_t} \left(
\sum_{v \in S_u^+} \left(M_t^*(v,u) - M_t^0(v,u)\right) + \sum_{v \in S_u^-} \left(M_t^0(v,u) - M_t^*(v,u)\right),
\right)
\]
Now, let us construct $M^\varepsilon_t \in \cM$ such that for every $u$, 
\[
M^\varepsilon_t(v,u) = M_t^*(v,u) - \alpha(v,u)~\forall v \in S_u^+
\]
and
\[
M^\varepsilon_t(v,u) = M_t^*(v,u) + \alpha(v,u)~\forall v \in S_u^-,
\]
where $\alpha(v,u) \geq 0$ for all $u,v$, $\sum_{u,v} \alpha(v,u) = \min(\varepsilon,c(M_t^*,M_t^0))$, $\sum_{v \in S_u^+} \alpha(v,u) = \sum_{v \in S_u^-} \alpha(v,u)$, and $M^\varepsilon_t(u,v) \geq M_t^0(u,v)$ for $v \in S_u^+$ and $M_t^\varepsilon (u,v) \leq M_t^0(u,v)$ for $v \in S_u^-$. Note that such $\alpha$'s exist by virtue of $\min(\varepsilon,c(M_t^*,M_t^0)) \leq c(M_t^*,M_t^0)$, which is the absolute value amount by which $M_t^*$ differs from $M_t^0$ coordinate-by-coordinate. Second, note that only malleable edges $(u,v)$ have $M_t^\varepsilon(v,u) \neq M_t^0(v,u)$, since we only modify malleable edges where $M_t^*(v,u) \neq M_t^0(v,u)$. Further, $M_t \in \cM$ since all the coefficients of $M_t$ remain between $0$ and $1$, and for all $u$,
\[
\sum_v M_t(v,u) = \sum_v M_t^*(v,u) + \sum_{v \in S_u^-} \alpha(v,u) - \sum_{v \in S_u^+} \alpha(v,u) = \sum_v M_t^*(v,u) = 1.
\]
Further, the cost of moving from $M_t^0$ to $M_t^\varepsilon$ is given by
\begin{align*}
c(M_t^\varepsilon,M_t^0) 
&= \sum_{u \in L_t} \left(
\sum_{v \in S_u^+} \left(M_t^\varepsilon(v,u) - M_t^0(v,u)\right) + \sum_{v \in S_u^-} \left(M_t^0(v,u) - M_t^\varepsilon(v,u)\right)
\right)
\\&= \sum_{u \in L_t} \left(
\sum_{v \in S_u^+} \left(M_t^*(v,u) - M_t^0(v,u) -\alpha(v,u) \right) + \sum_{v \in S_u^-} \left(M_t^0(v,u) - M_t^*(v,u) - \alpha(v,u)\right)
\right)
\\&= c(M_t^*,M_t^0) - \sum_{u,v} \alpha(v,u)
\\&= \max\left(0,c(M_t^*,M_t^0) - \varepsilon\right),
\end{align*}
noting that if $v$ was in $S_u^+$ (resp $S_u^-$), it is still the case that $M_t^\varepsilon(v,u) \geq M_t^0(v,u)$ (resp. $M_t^\varepsilon(v,u) \leq M_t^0(v,u)$). In turn, $M_t^\varepsilon$ requires at most budget $B_t^\varepsilon$, and $M_1^\varepsilon,\ldots,M_{k-1}^\varepsilon$ is a feasible solution for the discretized programs. Finally, for any transition matrices $M_1,\ldots,M_{k-1}$, letting $R_{t+1}^\top = R^\top M_{k-1} \ldots, M_{t+1}$ (trivially, $0 \leq R_{t+1} \leq \left\Vert R \right\Vert_{\infty}$) and $D_{t,j} = M_{t-1} \ldots M_1 e_j$ (trivially, $D_{t,j} \in \cD$), we note that (letting $\alpha(v,u) = 0$ where not defined) 
\begin{align*}
R_{t+1}^\top M_t D_{t,j}
&=\sum_{u \in L_t,~v \in L_{t+1}} M_t^\varepsilon(v,u) R_{t+1}(v) D_{t,j}(u)
\\&\geq \sum_{u,v} M_t^*(v,u) R_{t+1}(v) D_{t,j}(u)
- \sum_{u,v} \alpha(v,u) R_{t+1}(v) D_{t,j}(u)
\\&\geq \sum_{u,v} M_t^*(v,u) R_{t+1}(v) D_{t,j}(u)
-\left\Vert R \right\Vert_{\infty} \sum_{u,v} \alpha(v,u)
\\&\geq \sum_{u,v} M_t^*(v,u) R_{t+1}(v) D_{t,j}(u)
-\left\Vert R \right\Vert_{\infty} \varepsilon,
\end{align*}
where the first inequality uses that for all $u,v$, $M_t^\varepsilon(v,u) \geq M_t^*(u,v) - \alpha(v,u)$ by construction, the second inequality that $0 \leq \alpha(v,u)$, $0 \leq D_{t,j}(u) \leq 1$ and $0 \leq R_{t+1}(v) \leq \left\Vert R \right\Vert_{\infty}$, and the last inequality from the fact that $\sum_{u,v} \alpha(v,u) = \min(\varepsilon,c(M_t^*,M_t^0)) \leq \varepsilon$. The proof can be concluded noting that for all $t$, the above inequality implies
\begin{align*}
R^\top M_{k-1}^* \ldots M_{t+1}^* M_t^\varepsilon \ldots M_1^\varepsilon e_j
\geq R^\top M_{k-1}^* \ldots M_{t+2}^* M_{t+1}^\varepsilon M_t^\varepsilon \ldots M_1^\varepsilon e_j -\left\Vert R \right\Vert_{\infty} \varepsilon
\end{align*}
and applying this new inequality recursively. 

\subsection{Proof of Claim~\ref{clm:eps_net}}\label{app:eps_net}

This is a well-known result; we provide a proof for completeness. The first observation is that such a net can be constructed recursively, as follows. Start with an empty set $S$. Initialize by picking any point $D$ in $\cD$, and let $S = \{D\}$. Then, recursively keep finding points $D' \in \cD$ such that for all $D \in S$, $\left\Vert D - D' \right\Vert_1 > \varepsilon$, and augment $S := S \cup D'$. Finally, stop the algorithm when no such point $D' \in \cD$ exists. By construction, it must be that when the algorithm terminates, for all $D' \in \cD$, there exists $D \in S$ with $\left\Vert D - D' \right\Vert_1 \leq \varepsilon$. As such, $S$ constitutes an $\varepsilon$-net in $\ell_1$ distance for $\cD$.

 Second, we bound the number of steps needed. To do so, we remark that by construction, for all $D_1,D_2 \in S$, it must be the case that $\left\Vert D_1 - D_2 \right\Vert_1 > \varepsilon$; in turn, the $\ell_1$-balls of radius $\varepsilon/2$ around each of the elements of $S$ must be disjoint, and the sum of their volumes is less than the volume of $\cD$. Since the volume of the probability simplex is given by $\frac{1}{w!}$, and the volume of an $\ell_1$-ball of radius $r$ is given by $\frac{1}{w!} \left(2r\right)^w$, this yields that $|S| \times \frac{\varepsilon^w}{w!} \leq \frac{1}{w!}$, or equivalently $|S| \leq \left(\frac{1}{\varepsilon}\right)^w$.

\section{A More General Cost Model}\label{app:Lipschitz}

In this section, we show how to extend our algorithmic results to convex costs whose variations are lower-bounded. More specifically, we make the following assumptions on $c(M_t,M_t^0)$, the cost of transforming the initial transition matrix $M_t^0$ into an alternative $M_t \in \cM$. 

\begin{assumption}[Initial Condition]\label{as:zero}
$c(M_t^0,M_t^0) = 0$.
\end{assumption}
This encodes the natural assumption that not intervening on the transition matrix incurs no cost. 

\begin{assumption}[Convexity]\label{as:convex}
The function $M_t \rightarrow c(M_t,M_t^0)$ is convex for all $M_t^0 \in \cM$.
\end{assumption}
This is a standard assumption, that ensures the optimization problem solved for a single layer is convex and efficiently solvable.

\begin{assumption}[Linearly increasing costs]\label{as:linear_increase_cost}
There exists a constant $L > 0$ such that
\[
\left\vert c(M_t^2,M_t^0)  - c(M_t^1,M_t^0) \right\vert
\geq L  \sum_{(i,j) \in [w]^2} \left\vert M_t^2(i,j) - M_t^1(i,j)\right\vert
\]for any $M_t^0,~M_t^1,~M_t^2 \in \cM$ such that for all $(i,j) \in [w^2]$, either $M_t^2(i,j) \leq M_t^1(i,j) \leq M_t^0(i,j)$ or $M_t^2(i,j) \geq M_t^1(i,j) \geq M_t^0(i,j)$. 
\end{assumption}
This assumption formalizes the natural property that modifying $M_t^0$ does not come for free: the further away $M_t$ is from $M_t^0$, the higher the cost of the transformation from $M_t^0$ to $M_t$. The assumption requires that an increase of the distance between $M_t(i,j)$ and $M_t^0(i,j)$ by one translates into at least a $L$ increase in the incurred cost. The condition on $M_t^1$ and $M_t^2$ implies that $M_t^2$ is further away from $M_t^0$ than $M_t^1$ by an amount given exactly by $\sum_{(i,j) \in [w]^2} \left\vert M_t^2(i,j) - M_t^1(i,j)\right\vert$.

Note that this condition on $M_t^1$ and $M_t^2$ fixes the direction --- defined by the set of edges that are increased and the set of edges that are decreased --- in which we move from $M_t^0$ to $M_t$, and prevents comparisons between matrices that have been obtained by changing $M_t^0$ in \emph{different} directions. Such modifications in different directions can be incomparable in practice, which is why we make no assumption on how they compare in terms of cost. Note that the cost function $c(M_t,M_t^0)$ used in the main body of the paper immediately satisfies Assumptions~\ref{as:zero},~\ref{as:convex} (as a sum of convex functions), and~\ref{as:linear_increase_cost} (with equality for $L = 1$).

To argue that our results carry through to cost functions that satisfy Assumptions~\ref{as:zero},~\ref{as:convex},~\ref{as:linear_increase_cost}, we first note that by the same proof as that of Theorem~\eqref{alg: max_SW}, Algorithms~\eqref{alg: max_SW} and~\eqref{alg: max_MM} find solutions with social welfare at least $OPT^\varepsilon - 2(k - t) \varepsilon \Vert  R \Vert_{\infty}$ and maximin value at least $OPT_{MM}^\varepsilon - 2(k - t) \varepsilon \Vert  R \Vert_{\infty}$. Further, since the costs are convex, Programs~\eqref{SW_DP} and~\eqref{MMW_DP} are convex optimization programs that can be solved in polynomial running times $f(w),~g(w)$. The total running times of Algorithms~\eqref{alg: max_SW} and~\eqref{alg: max_MM} are otherwise unchanged.

To show that Algorithms~\eqref{alg: max_SW} and \eqref{alg: max_MM} efficiently find near-optimal solutions to Programs~\eqref{SW_program} and \eqref{maxmin_program}, it is therefore enough to bound the difference between $OPT^\varepsilon$ (resp. $OPT_{MM}^\varepsilon$) and $OPT$ (resp. $OPT_{MM}$). We do so in Claim~\ref{clm: lipschitz_budget_loss} below:
\begin{claim}\label{clm: lipschitz_budget_loss}
There exists a feasible solution $\left(M_1^\varepsilon,\ldots,M_{k-1}^\varepsilon\right)$ to Program~\eqref{dscrt_SW_program} (resp. Program~\eqref{dscrt_maxmin_program}) with objective value at least $OPT_{SW} - \frac{(k-1) \varepsilon}{L} \left\Vert R \right\Vert_{\infty}$ 
(resp. $OPT_{MM} - \frac{(k-1) \varepsilon}{L} \left\Vert R \right\Vert_{\infty}$).
\end{claim}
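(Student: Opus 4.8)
The plan is to mimic the proof of Claim~\ref{clm: budget_loss} given in Appendix~\ref{app:budget_loss}, replacing the $\ell_1$-distance accounting between transition matrices with the cost-function accounting permitted by Assumption~\ref{as:linear_increase_cost}. As before, let $M_1^*,\ldots,M_{k-1}^*$ be an optimal solution to Program~\eqref{SW_program} (resp.~\eqref{maxmin_program}), and let $B_t^* \triangleq c(M_t^*,M_t^0)$ be the cost it incurs on each layer. The goal is to build feasible matrices $M_t^\varepsilon$ for the discretized program that (i) cost at most some $B_t^\varepsilon \in \cB(\varepsilon)$ with $B_t^\varepsilon \le B_t^*$ and $\sum_t B_t^\varepsilon \le \sum_t B_t^* \le B$, and (ii) lose at most $\tfrac{\varepsilon}{L}\|R\|_\infty$ in objective value per layer, so that the total loss is $\tfrac{(k-1)\varepsilon}{L}\|R\|_\infty$.

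First I would fix the budget split: for each $t$, set $B_t^\varepsilon \triangleq \max\{x \in \cB(\varepsilon) : x \le B_t^*\}$, so that $B_t^* - \varepsilon \le B_t^\varepsilon \le B_t^*$ and hence $\sum_t B_t^\varepsilon \le \sum_t B_t^* \le B$; feasibility of this split in the discretized program follows as in Claim~\ref{clm: budget_loss}. Next, for each layer $t$, I would construct $M_t^\varepsilon$ by ``retracting'' $M_t^*$ toward $M_t^0$ along the coordinate directions: on each malleable edge $(v,u)$ where $M_t^*(v,u) > M_t^0(v,u)$ decrease the entry by some $\alpha(v,u) \ge 0$, and where $M_t^*(v,u) < M_t^0(v,u)$ increase it by $\alpha(v,u) \ge 0$, choosing the $\alpha$'s so that column sums are preserved ($\sum_{v \in S_u^+}\alpha(v,u) = \sum_{v \in S_u^-}\alpha(v,u)$ for each $u$), so that we never overshoot $M_t^0$ coordinate-wise, and so that the total retraction $\sum_{u,v}\alpha(v,u)$ equals $\min(\varepsilon/L,\ \sum_{u,v}|M_t^*(v,u)-M_t^0(v,u)|)$. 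Such $\alpha$'s exist exactly as argued in Appendix~\ref{app:budget_loss}, and $M_t^\varepsilon \in \cM$ and modifies only malleable edges for the same reasons.

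The two things to verify are the cost bound and the objective loss. For the cost: the pair $(M_t^\varepsilon, M_t^*)$ satisfies the hypothesis of Assumption~\ref{as:linear_increase_cost} relative to $M_t^0$ — on every edge, $M_t^\varepsilon(v,u)$ lies (weakly) between $M_t^0(v,u)$ and $M_t^*(v,u)$ by construction — so $c(M_t^*,M_t^0) - c(M_t^\varepsilon,M_t^0) \ge L \sum_{u,v}|M_t^*(v,u)-M_t^\varepsilon(v,u)| = L\sum_{u,v}\alpha(v,u)$. If the full retraction used $\varepsilon/L$, this gives $c(M_t^\varepsilon,M_t^0) \le c(M_t^*,M_t^0) - \varepsilon \le B_t^\varepsilon$; if it used the smaller value (i.e.\ $M_t^\varepsilon = M_t^0$), then $c(M_t^\varepsilon,M_t^0) = 0 \le B_t^\varepsilon$ by Assumption~\ref{as:zero}. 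Either way $M_t^\varepsilon$ respects budget $B_t^\varepsilon$, so $(M_1^\varepsilon,\ldots,M_{k-1}^\varepsilon)$ is feasible for the discretized program. For the objective loss: writing $R_{t+1}^\top = R^\top M_{k-1}^* \cdots M_{t+1}^*$ and $D_{t,j} = M_{t-1}^\varepsilon \cdots M_1^\varepsilon e_j$ (both with entries in the appropriate ranges, $0 \le R_{t+1}(v) \le \|R\|_\infty$ and $D_{t,j} \in \cD$), the calculation at the end of Appendix~\ref{app:budget_loss} shows $R_{t+1}^\top M_t^\varepsilon D_{t,j} \ge R_{t+1}^\top M_t^* D_{t,j} - \|R\|_\infty \sum_{u,v}\alpha(v,u) \ge R_{t+1}^\top M_t^* D_{t,j} - \tfrac{\varepsilon}{L}\|R\|_\infty$, since $\sum_{u,v}\alpha(v,u) \le \varepsilon/L$. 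Swapping $M_t^*$ for $M_t^\varepsilon$ one layer at a time and telescoping over the $k-1$ layers yields, for every $j$, $R^\top M_{k-1}^\varepsilon \cdots M_1^\varepsilon e_j \ge R^\top M_{k-1}^* \cdots M_1^* e_j - \tfrac{(k-1)\varepsilon}{L}\|R\|_\infty$; taking $j$ distributed as $D_1$ gives the social-welfare bound, and taking the minimum over $j$ gives the maximin bound.

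The main obstacle — really the only nontrivial point beyond bookkeeping — is the cost bound, i.e.\ checking that the retracted matrix $M_t^\varepsilon$ and the optimal matrix $M_t^*$ are ``comparable'' in the precise sense required by Assumption~\ref{as:linear_increase_cost} (monotone coordinate-wise between $M_t^0$ and each other in a consistent direction on every edge), so that the linear-growth lower bound on cost can legitimately be invoked. This is exactly why the construction retracts along the original signed directions $S_u^+, S_u^-$ rather than moving $M_t^*$ arbitrarily: it guarantees $M_t^2 = M_t^\varepsilon$ and $M_t^1 = M_t^*$ satisfy $M_t^0 \le M_t^\varepsilon \le M_t^*$ or $M_t^0 \ge M_t^\varepsilon \ge M_t^*$ edge by edge. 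Everything else is identical in structure to the proof of Claim~\ref{clm: budget_loss}, with the single substitution $\varepsilon \mapsto \varepsilon/L$ in the amount of retraction.
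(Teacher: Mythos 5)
Your proposal is in the right spirit and reaches the right bound, but it takes a genuinely different construction from the paper: you re-use the coordinate-wise ``$\alpha$-retraction'' from the proof of Claim~\ref{clm: budget_loss}, while the paper instead constructs $M_t^\varepsilon$ as a \emph{convex combination} $M_t^\varepsilon = \lambda M_t^0 + (1-\lambda)M_t^*$ with $\lambda = \varepsilon/(L\sum_{i,j}|M_t^*(i,j)-M_t^0(i,j)|)$ (capped at $1$). Both constructions place $M_t^\varepsilon$ coordinate-wise between $M_t^0$ and $M_t^*$, both give $\sum_{u,v}|M_t^*(v,u)-M_t^\varepsilon(v,u)| = \min(\varepsilon/L,\,\sum|M_t^*-M_t^0|)$, and the objective-loss and telescoping steps are identical. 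Where the paper's choice buys something is precisely at the step you identify as the crux: dropping the absolute value in Assumption~\ref{as:linear_increase_cost} to conclude $c(M_t^*,M_t^0) - c(M_t^\varepsilon,M_t^0) \geq L\sum|M_t^*-M_t^\varepsilon|$. The assumption as literally written only gives a lower bound on $|c(M_t^*,M_t^0)-c(M_t^\varepsilon,M_t^0)|$, so one must separately know $c(M_t^\varepsilon,M_t^0) \leq c(M_t^*,M_t^0)$. For the paper's $M_t^\varepsilon$ this is immediate from convexity (Assumption~\ref{as:convex}) and $c(M_t^0,M_t^0)=0$: $c(\lambda M_t^0 + (1-\lambda)M_t^*,M_t^0) \leq (1-\lambda)c(M_t^*,M_t^0) \leq c(M_t^*,M_t^0)$. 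Your $M_t^\varepsilon$, however, need not lie on the segment $[M_t^0,M_t^*]$ — only in the coordinate-wise box between them — so convexity along the segment cannot be invoked, and you would have to lean on the informal intent of Assumption~\ref{as:linear_increase_cost} (``further away means higher cost'') rather than its literal statement to justify the sign. This is a real, if subtle, gap in your version that the paper's convex-combination construction avoids for free; if you want to keep the retraction construction you should add a line arguing why $c$ is monotone under the coordinate-wise ordering (or simply switch to the segment, as the paper does). Two minor points: (i) in your last paragraph the roles are written backwards — to invoke the assumption you need $M_t^1 = M_t^\varepsilon$ (the closer matrix) and $M_t^2 = M_t^*$ (the farther one), not the other way around as you state there, though your earlier inequality uses the correct assignment; (ii) you should note, as the paper implicitly does, that $\sum_{i,j}|M_t^*(i,j)-M_t^0(i,j)| \leq c(M_t^*,M_t^0)/L$ (apply Assumption~\ref{as:linear_increase_cost} with $M_t^1=M_t^0$), which is what guarantees that when the retraction amount is the full $\varepsilon/L$ we indeed have $c(M_t^*,M_t^0)\geq\varepsilon$ and hence $B_t^\varepsilon \geq 0$, so the budget arithmetic closes.
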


\begin{proof}
We show that there exists matrices $M_1^\varepsilon,\ldots,M_k^\varepsilon$ that are feasible for the discretized problem of Program~\eqref{dscrt_SW_program}, such that for all $j \in [w]$, $R^\top M_{k-1}^\varepsilon \ldots M_{1}^\varepsilon e_j \geq R^\top M^*_{k-1} \ldots M^*_{1} e_j - (k-1) \varepsilon$, where $M^*_1,\ldots,M^*_{k-1}$ is an optimal solution to Program~\eqref{SW_program}, respectively~\eqref{maxmin_program}. We let $B_t = c(M_t^*,M_t^0)$ for simplicity of notations. We also write $\varepsilon' = \frac{\varepsilon}{L}$, and consider a budget split for Program~\eqref{dscrt_SW_program} such that for all $t$, $B_t^\varepsilon \geq \max(B_t - \varepsilon,0)$. Such a budget split is feasible by construction of $\cB(\varepsilon)$.

We now construct $M_t^\varepsilon$ as follows: we let $M_t^\varepsilon \triangleq \lambda M_t^0 + (1-\lambda) M_t^*$ where $\lambda = \frac{\varepsilon'}{\sum_{i,j} \left\vert M_t^*(i,j) - M_t^0(i,j)\right\vert}$ if $\varepsilon' \leq \sum_{i,j} \left\vert M_t^*(i,j) - M_t^0(i,j)\right\vert$, and $\lambda = 1$ otherwise. In the first case, note that since $\lambda \in [0,1]$, it must be that for all $(i,j)$, either $M_t^*(i,j) \geq \lambda M_t^0(i,j) + (1-\lambda) M_t^*(i,j) \geq M_t^0(i,j)$ or $M_t^*(i,j) \leq \lambda M_t^0(i,j) + (1-\lambda) M_t^*(i,j) \leq M_t^0(i,j)$. We can therefore apply Assumption~\ref{as:linear_increase_cost} to show that 
\begin{align*}
    c(M_t^\varepsilon,M_t^0) 
    &\leq c(M_t^*,M_t^0) - L \sum_{i,j} \left\vert M_t^*(i,j) - M_t^\varepsilon(i,j)\right\vert 
    \\&= B_t - L \sum_{i,j} \left\vert M_t^*(i,j) - \lambda M_t^0(i,j) - (1-\lambda) M_t^*(i,j) \right\vert
    \\&= B_t - L \lambda \sum_{i,j} \left\vert M_t^*(i,j) - M_t^0(i,j)\right\vert
    \\&= B_t - L \varepsilon'
    \\&= B_t - \varepsilon.
\end{align*}
Therefore, $M_t^\varepsilon$ is feasible for budget $B_t^\varepsilon$. In the second case, $\lambda = 1$, hence $M_t^\varepsilon = M_t^0$ and is feasible for budget $0$ hence $B_t^\varepsilon$.

It remains to show that $M_t^\varepsilon$ yields a good approximation to $M_t$. To see this, for any transition matrices $M_1,\ldots,M_{k-1}$, let $R_{t+1}^\top = R^\top M_{k-1} \ldots, M_{t+1}$ (trivially, $0 \leq R_{t+1} \leq \left\Vert R \right\Vert_{\infty}$) and $D_{t,j} = M_{t-1} \ldots M_1 e_j$ (trivially, $D_{t,j} \in \cD$). We note that 
\begin{align*}
R_{t+1}^\top (M_t^*-M_t^\varepsilon) D_{t,j}
&= \lambda R_{t+1}^\top (M_t^*-M_t^0) D_{t,j}
\\&= \lambda \sum_{u,v} R_{t+1}(v) (M_t^*(v,u)-M_t^0(v,u)) D_{t,j}(u)
\\&\leq \lambda \Vert R \Vert_{\infty} \sum_{u,v} \left\vert M_t^*(v,u)-M_t^0(v,u) \right\vert.
\end{align*}
There are now two cases. Either i) $M_t^* = M_t^0$, in which case $M_t^* = M_t^\varepsilon$ hence $R_{t+1}^\top (M_t^*-M_t^\varepsilon) D_{t,j} = 0$, or ii) $\lambda = \frac{\varepsilon'}{\sum_{u,v} \left\vert M_t^*(v,u)-M_t^0(v,u) \right\vert}$, and we obtain $R_{t+1}^\top (M_t^*-M_t^\varepsilon) D_{t,j} \leq \Vert R \Vert_{\infty} \varepsilon' = \Vert R \Vert_{\infty} \frac{\varepsilon}{L}$ by the above equation. In turn, for all $t$, $R_{t+1}$, and $D_{t,j}$, we have that 
$
R_{t+1}^\top M_t^* D_{t,j} -  \Vert R \Vert_{\infty} \frac{\varepsilon}{L} \leq R_{t+1}^\top M_t^\varepsilon D_{t,j}
$. 
Therefore, we have that
\begin{align*}
R^\top M_{k-1}^* \ldots M_{t+1}^* M_t^\varepsilon \ldots M_1^\varepsilon e_j
\geq R^\top M_{k-1}^* \ldots M_{t+2}^* M_{t+1}^\varepsilon M_t^\varepsilon \ldots M_1^\varepsilon e_j -\left\Vert R \right\Vert_{\infty} \frac{\varepsilon}{L},
\end{align*}
and the result can be obtained via a straightforward induction on $t$.
\end{proof}

\section{A Separation between Ex-ante and Ex-post Maximin Values}\label{app:separation}
In this section, we show there is a separation between ex-ante and ex-post maximin  welfare. We do so by constructing a specific instance $\cI$ for which the ex-ante maximin value is strictly larger than the ex-post maximin value:

\begin{figure}[h]
    \centering
    \includegraphics[width=10cm]{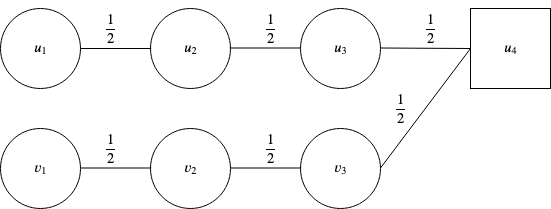}
    \caption{Part $\cI_1$ of Instance $\cI$. Edges not explicitly drawn have transition probability $0$.
    }
    \label{fig:separation_instance}
\end{figure}

We begin by giving a complete description of our instance $\cI$. The instance consists of two parts, $\cI_1$ (depicted in Figure~\ref{fig:separation_instance}) and $\cI_2$. Note that the transition probabilities from any given node in $\cI_1$, shown in Figure~\ref{fig:separation_instance}, do not sum to $1$. Part $\cI_2$ of the instance,  described below, serves to complete these transitions and ensure that the total outgoing probability of any node is $1$. We assume every edge starting from a node in $\cI_1$ is malleable (this includes edges pointing to $\cI_2)$, and every edge starting from a node in $\cI_2$ is non-malleable (this includes edges pointing to $\cI_1$).

Our proof primarily focuses on part $\cI_1$, as we argue that a centralized designer should only ever invest his budget into increasing the weight on edges with both ends in $\cI_1$ (as long as the budget is not too big).

Formally, the instance consists of four layers ($L_1$ to $L_4$) and a small enough total budget $B$. The vertices $u_1$ to $u_4$ and $v_1$ to $v_3$, combined with all edges between them can be thought of as the part $\cI_1$ of instance $\cI$. All the remaining vertices and edges are considered to be part of $\cI_2$.
\begin{enumerate}
    \item The first layer consists only of vertices $u_1$ and $v_1$.
    \item The second layer consists of $3$ vertices in total. There are two special vertices -  $u_2$ and $v_2$, seen in Figure~\ref{fig:separation_instance}. The remaining vertex is called $x$ and is part of $\cI_2$.
    \item The third layer consists of $3$ vertices in total. There are two special vertices -  $u_3$ and $v_3$, seen in Figure~\ref{fig:separation_instance}. The remaining vertex is labeled $y$ and is part of $\cI_2$.
    \item The fourth layer, which is the reward layer, consists of two vertices, vertex $u_4$ of reward $1$ and vertex $z$, of reward $0$.
\end{enumerate}
The initial transition matrices are given as follows: 
\begin{enumerate}
    \item From layer $L_1$ to layer $L_2$ - the edge $(u_1,u_2)$ has probability $\frac{1}{2}$, and the remaining output probability goes to vertex $x$, i.e. $(u_1,x)$ has probability $\frac{1}{2}$.
    Similarly, the edge $(v_1,v_2)$ has probability $\frac{1}{2}$, and the remaining probability is such that $(v_1,x)$ has probability $1/2$.
    \item  From layer $L_2$ to layer $L_3$ - the edge $(u_2,u_3)$ has probability $\frac{1}{2}$, and $(u_2,y)$ has the remaining probability $\frac{1}{2}$. Similarly, edge $(v_2,v_3)$ has probability $\frac{1}{2}$, and the remaining outgoing probability is such that $(v_2,y)$ has probability $\frac{1}{2}$. Edge $(x,y)$ has probability $1$.
    \item  From layer $L_3$ to layer $L_4$ - the edge $(u_3,u_4)$ has probability $\frac{1}{2}$ and the edge $(u_3,z)$ has the remaining outgoing probability $\frac{1}{2}$ from vertex $u_3$. Similarly, the edge $(v_3,u_4)$ has probability $\frac{1}{2}$ and the edge $(v_3,z)$ has the remaining outgoing probability $\frac{1}{2}$ from vertex $v_3$. Edge $(y,z)$ has probability $1$.
\end{enumerate}

We refer to the path $(u_1,u_2,u_3,u_4)$ as path $P_1$ or the upper path and path $(v_1,v_2,v_3,u_4)$ as path $P_2$ or the lower path.
Our proof relies on the following two lemmas:
\begin{lemma}
The ex-ante maximin value that can be guaranteed for instance $\cI$ is at least 
\[
U \triangleq \frac{1}{2} \left(\frac{1}{8}+\left(\frac{1}{2}+\frac{B}{6}\right)^3\right).
\]
\end{lemma}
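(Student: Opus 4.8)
The plan is to exhibit an explicit randomized (ex-ante) solution — a uniform mixture of two deterministic budget-feasible modifications — and show that it achieves value exactly $U$ at \emph{both} initial positions $u_1$ and $v_1$, so that the minimum over initial positions, hence $OPT_{RMM}$, is at least $U$.

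First I would define the two deterministic modifications. Let $M^A \in \cF(B,M_1^0,\ldots,M_{k-1}^0)$ be the modification that, along the upper path $P_1=(u_1,u_2,u_3,u_4)$, increases each of the three path edges $(u_1,u_2),(u_2,u_3),(u_3,u_4)$ by $\tfrac{B}{6}$ while decreasing the corresponding ``leakage'' edge out of the same node — $(u_1,x)$, $(u_2,y)$, $(u_3,z)$ — by $\tfrac{B}{6}$, and leaves every other entry of every transition matrix unchanged. Define $M^B$ symmetrically on the lower path $P_2=(v_1,v_2,v_3,u_4)$, decreasing the leakage edges $(v_1,x),(v_2,y),(v_3,z)$. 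I would then check feasibility: all edges we touch lie in $\cI_1$ and are therefore malleable; every modified probability stays in $[0,1]$ since $\tfrac12+\tfrac B6\le 1$ for $B$ small enough; and the cost is $c(M^A)=\sum_{t} c(M_t,M_t^0)=3\cdot\big(\tfrac B6+\tfrac B6\big)=B$, since nudging one path edge up by $\tfrac B6$ and its leakage partner down by $\tfrac B6$ costs $\tfrac B3$ in $\ell_1$, and these three nudges occur at the three distinct layer transitions. Hence $M^A,M^B \in \cF(B,M_1^0,\ldots,M_{k-1}^0)$, and so the distribution $\overline{\Delta M}$ that plays each of them with probability $\tfrac12$ lies in $\Delta\cF(B,M_1^0,\ldots,M_{k-1}^0)$.

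Next I would compute the per-population values. Because the reward vector is the indicator of $u_4$, the quantity $R^\top M_{k-1}\cdots M_1 e_j$ equals the probability that an individual born at $j$ reaches $u_4$. Under $M^A$: from $u_1$ the only route to $u_4$ is $u_1\to u_2\to u_3\to u_4$, which now has probability $\big(\tfrac12+\tfrac B6\big)^3$; from $v_1$, no edge out of a $v$-node, $x$, or $y$ was modified, so the value is the initial $\tfrac12\cdot\tfrac12\cdot\tfrac12=\tfrac18$ (the only $v_1$-to-$u_4$ route is $v_1\to v_2\to v_3\to u_4$, and every alternative leads into $\cI_2$ and on to the zero sink $z$). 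Symmetrically, under $M^B$ the value from $v_1$ is $\big(\tfrac12+\tfrac B6\big)^3$ and from $u_1$ it is $\tfrac18$. The structural fact making this work is that $\cI_1$ decomposes into two vertex-disjoint path gadgets (sharing only the sink $u_4$, which carries no malleable out-edge) and that all of $\cI_2$'s out-edges are non-malleable, so intervening on one path cannot change reachability of $u_4$ from the other path's source. Then, by linearity of expectation,
\[
R^\top \E_{M\sim\overline{\Delta M}}\!\left[M_{k-1}\cdots M_1\right] e_{u_1} = \tfrac12\big(\tfrac12+\tfrac B6\big)^3 + \tfrac12\cdot\tfrac18 = U,
\]
and identically $R^\top \E_{M\sim\overline{\Delta M}}[M_{k-1}\cdots M_1] e_{v_1} = U$, so $\min_{j} R^\top \E_{M\sim\overline{\Delta M}}[M_{k-1}\cdots M_1]\,e_j = U$ and $OPT_{RMM}\ge U$.

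Every computation here is routine; the only points requiring care are the bookkeeping of the $\ell_1$ cost of $M^A$ and $M^B$ (counting both the path-edge increase and the leakage-edge decrease at each of the three layer transitions so the total is exactly $B$), and the verification that the two path gadgets are genuinely non-interfering — which is exactly where the non-malleability of all $\cI_2$ out-edges, and the fact that $P_1$ and $P_2$ share no internal node, is used.
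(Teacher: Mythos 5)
Your proof is correct and follows essentially the same approach as the paper's: a uniform mixture of two deterministic budget-feasible solutions, one pushing all $B$ along the upper path $P_1$ and the other symmetrically along $P_2$, each split evenly across the three layer transitions. Your writeup is actually slightly more explicit than the paper's (which has a couple of $v_1$-vs-$u_2$ typos and is vaguer about where the balancing probability is taken from), in that you name the leakage edges $(u_1,x),(u_2,y),(u_3,z)$ being decreased and carefully tally the $\ell_1$ cost to exactly $B$.
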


\begin{proof}
We show a candidate solution that gives expected reward $U$ to both agents. We assign a budget of $B/3$ each for the transition between $L_1$ and $L_2$, the transition between $L_2$ and $L_3$, and the transition between $L_3$ and $L_4$. We construct a feasible transition matrix $M_1$ by decreasing the budget invested between any two layers in $\cI_2$ by $B/6$ (which can be done for $B$ small enough), and increasing the probabilities of edges $(u_1,u_2)$, $(u_2,u_3)$, and $(u_3,u_4)$ by $B/6$. We leave other edges in $\cI_1$ untouched. $M_1$ yields reward $\left(\frac{1}{2} + \frac{B}{6}\right)^3$ for $u_1$ and $1/8$ for $u_2$. Symmetrically, we construct a solution $M_2$ with expected reward $1/8$ for $u_1$ and $\left(\frac{1}{2} + \frac{B}{6}\right)^3$ for $u_2$. The solution that picks $M_1$ with probability $1/2$ and $M_2$ with probability $1/2$ guarantees maximin value of $U$.
\end{proof}

We now show that the ex-post maximin value is strictly smaller than $U$. We will reason exclusively on $\cI_1$, noting that when the budget $B$ is small and $w$ is large enough, investing any money on edges not contained within $\cI_1$ is sub-optimal. This can be seen immediately: 
since one cannot invest in outgoing edges from nodes $x$ and $y$ in $\cI_2$, these nodes point to $z$ with probability $1$ (as $(x,y)$ and $(y,z)$ are not malleable) and have reward $0$; on the other hand, the input probability and the reward of any vertex in $\cI_1$ is strictly positive for $B$ small enough. In turn, for $B$ small enough, nodes in $\cI_1$ always have strictly higher rewards and input probabilities than any node in $\cI_2$ for any feasible allocation, and it is optimal to invest in improving only the transition probabilities from nodes in $\cI_1$ to nodes in $\cI_1$. 

To reason about how to optimally use the budget to improve edges in $\cI_1$, we introduce new notations. We let $W(v)$ denote the expected reward obtained by an agent starting at any vertex $v$ in the instance. We define $w^i_u \triangleq W(u_i)$ and $w^i_v \triangleq W(v_i)$ for $i \in \{1,2,3,4\}$. We let $B_1$, $B_2$, $B_3$ denote the budget split across the three transitions, and let $B_{\ge i}$ represent the budget spent to the right of layer $L_i$. Note that, in our instance, $B_{\ge 3} = B_3$. We now state the following the key lemma and show how it implies the desired separation.
\begin{lemma}\label{lemma:sumwelfarebound}
\begin{align*}
w^1_u + w^1_v &\le 2U.
\end{align*} 
Further, $w^1_u + w^1_v = 2U$ only holds when the budget is split equally across the transition between layers, i.e., $B_1 = B_2 = B_3 = \frac{B}{3}$; in this case,
\begin{align*}
w^2_u + w^2_v \leq \frac{1}{4} + \left(\frac{1}{2} + \frac{B}{6}\right)^2,
\end{align*}
\begin{align*}
w^2_u, w^2_v \leq \left(\frac{1}{2} + \frac{B}{6}\right)^2.
\end{align*}
\end{lemma}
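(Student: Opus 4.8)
The plan is to bound $W(u_t)$ and $W(v_t)$ layer by layer, running from the reward layer back to $L_1$, by simultaneously controlling the two quantities $\sigma_t := W(u_t)+W(v_t)$ and $\mu_t := \max\{W(u_t),W(v_t)\}$. The single engine of the argument is an elementary fact about how much a bounded reallocation of a probability vector can raise an expectation, combined with the observation (already established above) that in \emph{any} feasible solution every node of $\cI_2$ has welfare $0$.

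The reallocation fact: if $\delta\in\reals^n$ satisfies $\sum_j\delta_j=0$ and $r\in\reals^n$, then $\langle \delta,r\rangle\le \tfrac12\|\delta\|_1\,(\max_j r_j-\min_j r_j)$, which follows by writing $\delta=\delta^+-\delta^-$ with $\delta^\pm\ge 0$ and $\sum_j\delta^+_j=\sum_j\delta^-_j=\tfrac12\|\delta\|_1$. Now fix a feasible solution with budget split $B_1,B_2,B_3$, write $\beta_t:=B_t/2$, and let $r_{t+1}$ be the ``welfare vector'' on $L_{t+1}$: its $u_{t+1}$- and $v_{t+1}$-coordinates are $W(u_{t+1}),W(v_{t+1})$ and its other coordinates are $0$ (for $t=3$ the only nonzero coordinate is $R(u_4)=1$). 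The $u_t$-column of $M_t-M_t^0$ is a zero-sum vector $\delta_{u_t}$ with $\|\delta_{u_t}\|_1\le c(M_t,M_t^0)\le B_t$, and the initial $u_t$-column puts mass $\tfrac12$ on $u_{t+1}$ and $\tfrac12$ on a node of $\cI_2$, so $\langle(\text{initial }u_t\text{-column}),r_{t+1}\rangle=\tfrac12 W(u_{t+1})$ (and $=\tfrac12$ when $t=3$), and likewise with $v$ in place of $u$. Applying the reallocation fact with $\min_j (r_{t+1})_j=0$,
\[
W(u_t)\le \tfrac12 W(u_{t+1})+\beta_t\,\mu_{t+1},\qquad W(v_t)\le \tfrac12 W(v_{t+1})+\beta_t\,\mu_{t+1},
\]
so $\mu_t\le(\tfrac12+\beta_t)\,\mu_{t+1}$; summing the two displays and using $\|\delta_{u_t}+\delta_{v_t}\|_1\le B_t$ gives $\sigma_t\le\tfrac12\sigma_{t+1}+\beta_t\,\mu_{t+1}$. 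The base case $t=3$ gives directly $\mu_3\le\tfrac12+\beta_3$ and $\sigma_3\le 1+\beta_3$. Unrolling yields $\mu_2\le(\tfrac12+\beta_2)(\tfrac12+\beta_3)$, $\sigma_2\le\tfrac12(1+\beta_3)+\beta_2(\tfrac12+\beta_3)$, $\mu_1\le\prod_{t=1}^3(\tfrac12+\beta_t)$, and
\[
\sigma_1\le \tfrac14+\tfrac{e_1}{4}+\tfrac{e_2}{2}+e_3,
\]
where $e_1,e_2,e_3$ are the elementary symmetric polynomials in $\beta_1,\beta_2,\beta_3$. Specializing $B_1=B_2=B_3=B/3$ (i.e. $\beta_t=B/6$) in the $\mu_2$ and $\sigma_2$ bounds gives $w^2_u,w^2_v\le(\tfrac12+\tfrac B6)^2$ and $w^2_u+w^2_v\le \tfrac12(1+\tfrac B6)+\tfrac B6(\tfrac12+\tfrac B6)=\tfrac14+(\tfrac12+\tfrac B6)^2$, which are the last two claims.

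To finish the first claim, set $S:=\beta_1+\beta_2+\beta_3=\tfrac12(B_1+B_2+B_3)\le B/2$. By Cauchy--Schwarz $e_2\le S^2/3$ and by AM--GM $e_3\le(S/3)^3$, so $\sigma_1\le\tfrac14+\tfrac S4+\tfrac{S^2}{6}+\tfrac{S^3}{27}$; every term here is nondecreasing in $S\ge 0$, so substituting $S=B/2=3\cdot\tfrac B6$ gives $\sigma_1\le\tfrac14+\tfrac34\cdot\tfrac B6+\tfrac32(\tfrac B6)^2+(\tfrac B6)^3=\tfrac18+(\tfrac12+\tfrac B6)^3=2U$. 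For the equality characterization: if $\sigma_1=2U$ then every inequality in this chain is tight, so $S=B/2$ (the whole budget is spent) and $e_2=S^2/3$, $e_3=(S/3)^3$, which force $\beta_1=\beta_2=\beta_3$; hence $\beta_t=B/6$, i.e. $B_1=B_2=B_3=B/3$.

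The part I expect to require the most care is making sure the argument genuinely covers \emph{all} feasible interventions --- in particular ``cross-path'' moves such as shifting mass from $u_2$ toward $v_3$, or spending budget inside $\cI_2$. The reallocation fact is exactly what neutralizes these: it bounds $W(u_t)$ for an arbitrary feasible column-perturbation of $M_t$ using only $\|\delta_{u_t}\|_1\le B_t$ and the fact that every $\cI_2$-node carries welfare $0$, so no structural assumption about where the budget is routed is needed beyond what was established before the lemma. The remaining work is bookkeeping: verifying $\langle(\text{initial }u_t\text{-column}),r_{t+1}\rangle=\tfrac12 W(u_{t+1})$ at each of the three transitions (including the merge of both paths at $u_4$), and the identity $\tfrac14+\tfrac34\beta+\tfrac32\beta^2+\beta^3=\tfrac18+(\tfrac12+\beta)^3$.
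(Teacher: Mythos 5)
Your proof is correct. The core recursion is the same as the paper's: the paper proves an inductive Lemma tracking exactly the two quantities you call $\sigma_t$ and $\mu_t$ (stated there as $w^i_u+w^i_v$ and $\max\{w^i_u,w^i_v\}$), and your inequality $\sigma_t\le\tfrac12\sigma_{t+1}+\beta_t\mu_{t+1}$ is identical, after unwinding notation, to the paper's displayed bound $w^i_u+w^i_v\le(\tfrac12+\tfrac{B_i}{2})w^{i+1}_u+\tfrac12 w^{i+1}_v$. Where you diverge is in the bookkeeping: the paper argues at each inductive step that the remaining budget $B_{\ge i}$ should be split equally, and folds this optimization into the induction hypothesis expressed in terms of $B_{\ge i}$; you instead carry the per-layer amounts $\beta_t$ symbolically all the way to the root, recognize the result as $\tfrac14+\tfrac{e_1}{4}+\tfrac{e_2}{2}+e_3$ in the elementary symmetric polynomials of the $\beta_t$, and finish by Maclaurin-type inequalities. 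The two are mathematically equivalent, but your endgame makes the equality case (which forces the balanced split $B_1=B_2=B_3$) fall out immediately from the tightness conditions of Maclaurin/AM--GM, rather than requiring a uniqueness-of-maximizer argument at each layer. A second, smaller improvement is that your ``reallocation fact'' $\langle\delta,r\rangle\le\tfrac12\|\delta\|_1(\max_j r_j-\min_j r_j)$ gives a rigorous, allocation-agnostic justification for the per-layer bound, replacing the paper's informal claim that ``all budget must be spent on edges from $u_i$ or $v_i$ to $u_{i+1}$''; in particular it handles cross-path shifts (e.g.\ $u_2\to v_3$) without any case analysis, since the inner product is bounded by $\mu_{t+1}$ regardless of which coordinate the mass is routed to.
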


We defer the proof of Lemma~\ref{lemma:sumwelfarebound} to Appendix~\ref{app:sumwelfarebound}, and conclude the proof of separation by showing that Lemma~\ref{lemma:sumwelfarebound} implies the following corollary:
\begin{corollary}
The ex-post maximin value is strictly less than $U$.
\end{corollary}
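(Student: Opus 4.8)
The plan is to assume, toward a contradiction, that some feasible \emph{deterministic} solution attains ex-post maximin value exactly $U$, and to use the rigidity in the equality clause of Lemma~\ref{lemma:sumwelfarebound} to force an impossible prescription on how the layer-$1$ budget would have to be spent.

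\textbf{Reducing to the equality regime.} By Lemma~\ref{lemma:sumwelfarebound}, every feasible solution satisfies $\min(w^1_u,w^1_v)\le\tfrac12(w^1_u+w^1_v)\le U$, so the ex-post maximin value is at most $U$; it remains to rule out equality. Suppose a feasible $(M_1,M_2,M_3)$ has $\min(w^1_u,w^1_v)=U$. Then $w^1_u=w^1_v=U$, hence $w^1_u+w^1_v=2U$, so the equality clause of Lemma~\ref{lemma:sumwelfarebound} applies: $B_1=B_2=B_3=B/3$, and, writing $q\triangleq\tfrac12+\tfrac B6$, we have $w^2_u,w^2_v\le q^2$ and $w^2_u+w^2_v\le\tfrac14+q^2$.

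\textbf{Decomposing the first-layer welfare.} Every node of $\cI_2$ reachable from layer $2$ (in particular $x$) leads to the zero-reward sink $z$ along non-malleable edges and thus has reward $0$; moreover $M_1^0$ places mass only on the edges $(u_1,u_2),(u_1,x),(v_1,v_2),(v_1,x)$. Hence $w^1_u=p_u\,w^2_u+s_u\,w^2_v$ and $w^1_v=p_v\,w^2_v+s_v\,w^2_u$, where $p_u=M_1(u_2,u_1)$, $s_u=M_1(v_2,u_1)$, $p_v=M_1(v_2,v_1)$, $s_v=M_1(u_2,v_1)$. Lowering $M_1(u_2,u_1)$ or $M_1(v_2,v_1)$ below $\tfrac12$ only wastes budget, so we may assume $p_u,p_v\ge\tfrac12$; then the cost bound $c(M_1,M_1^0)\le B_1=B/3$, together with the structure of $M_1^0$ (raising a probability-$\tfrac12$ edge costs twice the increase, and placing mass $s$ on a zero-edge costs at least $s$), gives $(p_u-\tfrac12)+(p_v-\tfrac12)+s_u+s_v\le\tfrac B6$.

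\textbf{Squeezing out a contradiction.} Assume WLOG $w^2_u\le w^2_v$, and write $p_u=\tfrac12+\delta_u$, $p_v=\tfrac12+\delta_v$ with $\delta_u,\delta_v\ge0$. Then, bounding the coefficients of $w^2_v$ from above and using $w^2_u\le w^2_v$, $w^2_u+w^2_v\le\tfrac14+q^2$, $w^2_v\le q^2$, and the layer-$1$ budget bound,
\[
w^1_u+w^1_v=\tfrac12(w^2_u+w^2_v)+(\delta_u+s_v)\,w^2_u+(\delta_v+s_u)\,w^2_v\le\tfrac12\bigl(\tfrac14+q^2\bigr)+\tfrac B6\,q^2=\tfrac18+q^2\bigl(\tfrac12+\tfrac B6\bigr)=\tfrac18+q^3=2U.
\]
So $w^1_u+w^1_v=2U$ forces every inequality above to be tight: for $B>0$ this yields $w^2_v=q^2$, $w^2_u=\tfrac14$ (hence $w^2_u<w^2_v$), $\delta_u=0$, $s_v=0$, and $\delta_v+s_u=\tfrac B6$. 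Imposing $w^1_v=(\tfrac12+\delta_v)\,q^2=U$ then determines, using $U=\tfrac1{16}+\tfrac{q^3}{2}$,
\[
\delta_v=\frac{U}{q^2}-\frac12=\frac{1}{16q^2}-\frac14+\frac{B}{12}.
\]
Expanding $\tfrac1{16q^2}=\tfrac14-\tfrac B6+O(B^2)$ (from $q=\tfrac12+\tfrac B6$) gives $\delta_v=-\tfrac B{12}+O(B^2)<0$ for all sufficiently small $B>0$, contradicting $\delta_v\ge0$. (If instead $w^2_v\le w^2_u$, the symmetric argument forces $\delta_u<0$; the degenerate sub-case $w^2_u=w^2_v$ forces $q^2=\tfrac14$, i.e.\ $B=0$.) Hence no feasible deterministic solution attains $U$, so the ex-post maximin value is strictly less than $U$.

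\textbf{Main obstacle.} The delicate step is the last paragraph: converting the $\ell_1$-budget constraint at layer $1$ into the linear inequality on $(p_u,p_v,s_u,s_v)$, and then extracting strict rigidity from a chain of inequalities that is already tight at $B=0$. The argument closes only because the equality clause of Lemma~\ref{lemma:sumwelfarebound} simultaneously pins down the equal split $B_1=B_2=B_3=B/3$ and the two second-layer bounds; dropping either of these leaves the relevant constraints mutually consistent and yields no contradiction.
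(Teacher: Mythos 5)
Your proposal is correct and follows essentially the same route as the paper: reduce to the tight case $w^1_u+w^1_v=2U$, invoke the rigidity clauses of Lemma~\ref{lemma:sumwelfarebound} to pin down the budget split and the second-layer values, and then show the layer-$1$ budget cannot equalize $w^1_u$ and $w^1_v$ at $U$ (the paper bounds $w^1_v$ directly while you compute the required $\delta_v$ and show it is negative, but these are the same calculation). Your rigidity step at layer $1$ is somewhat more explicit than the paper's, which asserts informally that an optimal maximin solution should route all layer-$1$ budget to the edges into the better second-layer node; the content is the same.
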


\begin{proof}
Observe that the minimum reward over all agents, $\min\{w^1_u,w^1_v\}$, is upper bounded by $\frac{w^1_u + w^1_v}{2}$. Thus, by Lemma~\ref{lemma:sumwelfarebound}, we have that $\min\{w^1_u,w^1_v\} \le \frac{w^1_u + w^1_v}{2} \le U$. If $w^1_u + w^1_v < 2U$, the result holds. Hence, we only need to consider the case when $w^1_u + w^1_v = 2 U$. In this case, the budget split is given by $B_1 = B_2 = B_3 = B/3$ by Lemma~\ref{lemma:sumwelfarebound}. We assume with loss of generality that $w^2_u \geq w^2_v$ (otherwise invert the roles of $u_2$ and $v_2$); note that when the budget is small enough, it must be the case that $w^2_u,w^2_v$ are necessarily bigger than the rewards of node $x$ in layer $2$ in $\cI_2$ --- as $B$ tends to $0$, the rewards on nodes in the second layer in $\cI_1$ tend to $1/4$, while the reward of node $x$ remains $0$ always by non-malleability of $(x,y)$ and $(y,z)$). This directly implies that there exists an optimal maximin solution in which all of the budget $B/6$ allocated to improving edges (remembering that we need to spend half the budget, i.e. $B/6$, decreasing edges in $\cI_2$ for our matrix to remain stochastic) is allocated to $(u_1,u_2)$ and $(v_1,u_2)$. Let $\delta \in [0,1]$ be such that the probability transition of edge $(u_1,u_2)$ is set to $\frac{1}{2} + \delta \frac{B}{6}$, and the probability transition for $(v_1,u_2)$ is set to $(1-\delta) \frac{B}{6}$. We have that there exists $\delta$ such that
\begin{align*}
w_u^1 + w_v^1 
&\leq \left(\frac{1}{2} + \delta \frac{B}{6}\right) w_u^2 + (1-\delta) \frac{B}{6} w_u^2 + \frac{1}{2} w_v^2
\\&= \left(\frac{1}{2} + \frac{B}{6}\right) w_u^2 + \frac{1}{2} w_v^2,
\end{align*}
noting that all other transitions must go to $x$ and yield reward $0$, as with probability $1$, $x$ goes to $y$ and $y$ goes to $z$ with reward $0$ by non-malleability. Since $w_u^2 \leq \left(\frac{1}{2} + \frac{B}{6}\right)^2$, it must be that $w_u^2 = \left(\frac{1}{2} + \frac{B}{6}\right)^2$ and $w_v^2 = 1/4$, otherwise we would have $w_u^1 + w_v^1 < 2U$ which is a contradiction. This implies in particular that 
\begin{align*}
w_v^1 
&\leq \frac{B}{6} w_u^2 + \frac{1}{2} w_v^2
\\& = \frac{B}{6} \left(\frac{1}{2} + \frac{B}{6}\right)^2 + \frac{1}{8}.
\end{align*}
A simple calculation shows that for $B$ small enough, 
\[
\frac{B}{6} \left(\frac{1}{2} + \frac{B}{6}\right)^2 + \frac{1}{8} 
< \frac{1}{2} \left(\frac{1}{2} + \frac{B}{6}\right)^3 + \frac{1}{16} = U,
\]
which concludes the proof.
\end{proof}

\subsection{Proof of Lemma~\ref{lemma:sumwelfarebound}}\label{app:sumwelfarebound}
We prove the lemma by proving a similar result at each layer by induction, starting backward from the penultimate layer. 

\begin{lemma}
\label{lemma:generalsumwelfarebound}
For any layer $L_i$ with $i \in \{1,2,3\}$, we have 
\begin{align*}w^i_u + w^i_v &\le \left( \frac{1}{2^{4-i}} + \left(\frac{1}{2}+\frac{B_{\ge i}}{2(4-i)}\right)^{4-i} \right) 
\\ \max\{w^i_u,w^i_v\} &\le \left(\frac{1}{2}+\frac{B_{\ge i}}{2(4-i)}\right)^{4-i}
\end{align*} 
The first inequality is tight at layer $L_i$ only when $B_{\ge i}$ is split equally across the transition between layers to the right of $L_i$.
\end{lemma}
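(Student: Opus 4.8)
The plan is to prove Lemma~\ref{lemma:generalsumwelfarebound} by backward induction on the layer index $i$, from $i=3$ down to $i=1$. Lemma~\ref{lemma:sumwelfarebound} then follows at once: the $i=1$ case of the lemma is exactly $w^1_u+w^1_v\le 2U$ with the stated tightness clause, and substituting the resulting equal split $B_1=B_2=B_3=B/3$ (so $B_{\ge 2}=2B/3$ and hence $\tfrac{B_{\ge 2}}{4}=\tfrac{B}{6}$) into the $i=2$ case gives the two bounds on $w^2_u,w^2_v$. Throughout I use that, by non-malleability of all edges out of $\cI_2$, the node $x$ in layer $L_2$ and the node $y$ in layer $L_3$ each have expected reward $0$ (they reach the zero sink $z$ with probability $1$). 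Writing $o$ for the third node of $L_{i+1}$ (so $o=x$ when $i=1$ and $o=y$ when $i=2$), this gives, for $i\in\{1,2\}$, $w^i_u=M_i(u_{i+1},u_i)\,w^{i+1}_u+M_i(v_{i+1},u_i)\,w^{i+1}_v$ and $w^i_v=M_i(u_{i+1},v_i)\,w^{i+1}_u+M_i(v_{i+1},v_i)\,w^{i+1}_v$, while at $i=3$ we have $w^3_u=M_3(u_4,u_3)$ and $w^3_v=M_3(u_4,v_3)$.

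\emph{Base case $i=3$.} The only edges out of $u_3$ are $(u_3,u_4)$ and $(u_3,z)$, each starting at probability $\tfrac12$; raising $M_3(u_4,u_3)$ to $\tfrac12+\delta_u$ costs $2\delta_u$, and symmetrically for $v_3$, so feasibility forces $2\delta_u+2\delta_v\le B_{\ge 3}$. Hence $\max\{w^3_u,w^3_v\}\le\tfrac12+\tfrac{B_{\ge 3}}{2}$ and $w^3_u+w^3_v\le 1+\tfrac{B_{\ge 3}}{2}$, which is exactly $\tfrac{1}{2^1}+\bigl(\tfrac12+\tfrac{B_{\ge 3}}{2\cdot 1}\bigr)^1$; the tightness clause is vacuous since there is a single transition to the right of $L_3$. \emph{Inductive step.} Assume the lemma at $L_{i+1}$; set $m=3-i$ and $C=\tfrac12+\tfrac{B_{\ge i+1}}{2m}$, so the hypothesis reads $w^{i+1}_u+w^{i+1}_v\le\tfrac{1}{2^m}+C^m$ and $\max\{w^{i+1}_u,w^{i+1}_v\}\le C^m$; by symmetry of $\cI$ under swapping the $u$- and $v$-paths, assume $w^{i+1}_u\ge w^{i+1}_v$. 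For the maximum bound, $w^i_u\le\bigl(M_i(u_{i+1},u_i)+M_i(v_{i+1},u_i)\bigr)w^{i+1}_u=(1-M_i(o,u_i))\,w^{i+1}_u$; the coefficient starts at $\tfrac12$ and each unit of increase costs at least $2$, so it is at most $\tfrac12+\tfrac{B_i}{2}$, and likewise for $w^i_v$. The core sub-claim is: over all modifications of the $L_i\!\to\!L_{i+1}$ matrix of cost $\le B_i$ (with $B$ small, so $B_i\le 2$), $w^i_u+w^i_v\le\bigl(\tfrac12+\tfrac{B_i}{2}\bigr)w^{i+1}_u+\tfrac12 w^{i+1}_v$. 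Writing $P_u,P_v$ for the total probability mass placed on the $u_{i+1}$- and $v_{i+1}$-edges out of $\{u_i,v_i\}$, we have $w^i_u+w^i_v=P_uw^{i+1}_u+P_vw^{i+1}_v$; raising $P_u$ by $\delta$ costs at least $2\delta$, and since $w^{i+1}_u\ge w^{i+1}_v\ge 0$ an exchange argument over the six relevant edges shows the best use of budget is to move mass from the zero-reward node $o$ onto the $u_{i+1}$-edges, leaving $P_v=\tfrac12$ and $P_u\le\tfrac12+\tfrac{B_i}{2}$ (here $B_i\le 2$ ensures this zero-reward mass is not exhausted).

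It remains to combine these with the induction hypothesis. Since $\tfrac12\ge 0$, the hypothesis sum bound gives $\tfrac12 w^{i+1}_v\le\tfrac12\bigl(\tfrac{1}{2^m}+C^m-w^{i+1}_u\bigr)$, so
\[ w^i_u+w^i_v\le\Bigl(\tfrac12+\tfrac{B_i}{2}\Bigr)w^{i+1}_u+\tfrac12\Bigl(\tfrac{1}{2^m}+C^m-w^{i+1}_u\Bigr)=\tfrac{1}{2^{m+1}}+\tfrac12 C^m+\tfrac{B_i}{2}\,w^{i+1}_u\le\tfrac{1}{2^{m+1}}+\Bigl(\tfrac12+\tfrac{B_i}{2}\Bigr)C^m, \]
using $w^{i+1}_u\le C^m$ and $\tfrac{B_i}{2}\ge 0$. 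The weighted AM--GM inequality applied to $\tfrac12+\tfrac{B_i}{2}$ with weight $\tfrac{1}{m+1}$ and to $C$ with weight $\tfrac{m}{m+1}$ yields $\bigl(\tfrac12+\tfrac{B_i}{2}\bigr)C^m\le\bigl(\tfrac{1}{m+1}(\tfrac12+\tfrac{B_i}{2})+\tfrac{m}{m+1}C\bigr)^{m+1}=\bigl(\tfrac12+\tfrac{B_{\ge i}}{2(m+1)}\bigr)^{m+1}$, where the last equality is because $(\tfrac12+\tfrac{B_i}{2})+m(\tfrac12+\tfrac{B_{\ge i+1}}{2m})=\tfrac{m+1}{2}+\tfrac{B_{\ge i}}{2}$. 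As $m+1=4-i$, this establishes the first displayed inequality of the lemma at $L_i$, and the second follows the same way from $\max\{w^i_u,w^i_v\}\le(\tfrac12+\tfrac{B_i}{2})C^m$. For tightness: equality in AM--GM forces $\tfrac12+\tfrac{B_i}{2}=C$, i.e.\ $B_i=\tfrac{B_{\ge i+1}}{m}$; together with the inductive tightness clause ($B_{\ge i+1}$ split equally, each of the $m$ transitions right of $L_{i+1}$ getting $\tfrac{B_{\ge i+1}}{m}$), this forces all $m+1=4-i$ transitions to the right of $L_i$ to receive equal budget.

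The step I expect to be most delicate is the single-layer sub-claim: one must argue carefully, via a small linear-programming / exchange argument over the four "forward" edges $(u_i,u_{i+1}),(u_i,v_{i+1}),(v_i,u_{i+1}),(v_i,v_{i+1})$ and the two edges into the zero-reward node $o$, that --- given $w^{i+1}_u\ge w^{i+1}_v$ --- diverting mass from $o$ onto the $u_{i+1}$-edges weakly dominates every other use of the layer-$i$ budget, and to pin down the threshold on $B$ below which this mass is never used up (so that the clean bound $P_u\le\tfrac12+\tfrac{B_i}{2}$ holds rather than a saturated variant). Everything else is bookkeeping together with the single weighted AM--GM estimate.
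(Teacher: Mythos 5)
Your proof is correct and follows the same overall backward-induction structure as the paper's, but the final combining step is done differently, and your version is cleaner. After deriving (as the paper also does, just more tersely) the per-layer inequality $w^i_u + w^i_v \le \bigl(\tfrac12 + \tfrac{B_i}{2}\bigr) w^{i+1}_u + \tfrac12 w^{i+1}_v$ and substituting both parts of the induction hypothesis, both proofs arrive at
\[
w^i_u + w^i_v \le \frac{1}{2^{4-i}} + \Bigl(\frac12 + \frac{B_i}{2}\Bigr)\Bigl(\frac12 + \frac{B_{\ge i+1}}{2(3-i)}\Bigr)^{3-i}.
\]
The paper then argues, separately for $i=2$ and $i=1$, that the right-hand side is uniquely maximized over budget splits at the equal split, without supplying the elementary calculus, and then substitutes that maximizer. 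You instead apply weighted AM--GM with weights $\tfrac{1}{4-i}$ and $\tfrac{3-i}{4-i}$ to bound the product by $\bigl(\tfrac12 + \tfrac{B_{\ge i}}{2(4-i)}\bigr)^{4-i}$ in one stroke, for all $i$ simultaneously, and the equality case of AM--GM immediately gives the tightness clause ($B_i = B_{\ge i+1}/(3-i)$, combined inductively). This is a genuine, if modest, improvement: it avoids the case-by-case optimization the paper leaves implicit, proves the bound and the uniqueness of the maximizer in a single uniform step, and would extend without change to deeper instances of the same gadget. Your handling of the per-layer exchange argument (why budget is optimally spent moving mass off the zero-reward node $o$ onto the $u_{i+1}$ edges, and the role of the small-$B$ hypothesis so that this mass isn't exhausted) is also more explicit than the paper's ``it is easy to see'' remark, which is worth keeping.
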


We note that the above lemma applied at layer $1$ directly gives that $w_u^1 + w_v^1 \geq 2U$ only when $B_1 = B_2 = B_3 = \frac{B}{3}$ when equality holds, and all vertices in $\cI_2$ have $0$ reward. The second part of Lemma~\ref{lemma:sumwelfarebound} holds from applying Lemma~\ref{lemma:generalsumwelfarebound} at layer $2$ with $B_2 = B_3 = \frac{B}{3}$ or equivalently, $B_{\geq 2} = \frac{2B}{3}$.

\begin{proof}
Note that this lemma is trivially true for any layer $L_i$ when $B_{\ge i} = 0$. Henceforth, we only look at layer $L_i$ when $B_{\ge i} > 0$.

We begin by proving the lemma statement for Layer $L_3$ and work backwards towards layer $L_1$. Note that $w^3_u + w^3_v$ is maximized only by spending $B_{\ge 3} = B_3$ on edges from either $u_3$ or $v_3$ to the reward node $u_4$ with reward $1$. This gives us $w^3_u+w^3_v \le \frac{1}{2}+\frac{1}{2}+\frac{B_{\ge 3}}{2}$. Without loss of generality (due to the symmetry in the instance), let $w^3_u \ge w^3_v$. Then $w^3_u$ is maximized by spending all the budget $B_{\ge 3}$ on edge $(u_3,u_4)$, i.e., path $P_1$ (in the other case, all the budget is spent on path $P_2$). Thus, $\max\{w^3_u,w^3_v\} \le \frac{1}{2}+\frac{B_{\ge 3}}{2}$. Thus, both parts of the lemma are true for Layer $L_3$.

Now, let us assume our induction hypothesis holds at layer $L_{i+1}$. Consider layer $L_i$. Let us assume w.l.o.g that $w^{i+1}_u \ge w^{i+1}_v$. To maximize $w^i_u+w^i_v$, it is easy to see that all budget must be spent on edges from $u_i$ or $v_i$ to $u_{i+1}$ --- since $x$, $y$, and $z$ always have reward $0$. Thus, we get:
\begin{align}\label{eq:bound_induction}
\begin{split}
    w^i_u + w^i_v
    &\le \left(\frac{1}{2}+\frac{B_i}{2}\right) w^{i+1}_u + \frac{1}{2} w^{i+1}_v \\
    &\le \left(\frac{1}{2}+\frac{B_i}{2}\right) w^{i+1}_u 
    + \frac{1}{2} 
    \left(\frac{1}{2^{4-i-1}} + \left(\frac{1}{2} + \frac{B_{\geq i+1}}{2(4-i-1)}\right)^{4-i-1} - w_u^{i+1}\right)  \\
    &= \frac{B_i}{2}w_u^{i+1} + \frac{1}{2^{4-i}}  
    + \frac{1}{2} \left(\frac{1}{2} + \frac{B_{\geq i+1}}{2(4-i-1)}\right)^{4-i-1}\\
    &\leq \frac{1}{2^{4-i}}   
    +\frac{B_i}{2} \left(\frac{1}{2} + \frac{B_{\geq i+1}}{2(4-i-1)}\right)^{4-i-1}
    + \frac{1}{2} \left(\frac{1}{2} + \frac{B_{\geq i+1}}{2(4-i-1)}\right)^{4-i-1}\\
    &= \frac{1}{2^{4-i}}   
    +\left(\frac{1}{2} + \frac{B_i}{2}\right) \left(\frac{1}{2} + \frac{B_{\geq i + 1}}{2(4-i-1)}\right)^{4-i-1},
    \end{split}
\end{align}
where the second and second-to-last inequalities follow from our induction hypothesis. When $i = 2$, the above bound becomes 
\[
\frac{1}{4} + \left(\frac{1}{2} + \frac{B_2}{2}\right) \left(\frac{1}{2} + \frac{B_3}{2}\right),
\]
which is uniquely maximized (given total budget $B_{\geq 2}$ for layers more than $2$) if and only if $B_2 = B_3 = B_{\geq 3}$. When $i = 1$, this bound becomes 
\[
\frac{1}{8} + \left(\frac{1}{2} + \frac{B_1}{2}\right) \left(\frac{1}{2} + \frac{B_2 + B_3}{4}\right)^2,
\]
which is similarly uniquely maximized (when $B = B_1 + B_2 + B_3$) if and only if $B_1 = \frac{B_2 + B_3}{2}$, i.e. only if $B_1 = B/3$, $B_2 + B_3 = 2B/3$. In both cases, the unique maximizer satisfies $B_i = \frac{B_{\geq i}}{4-i}$ and $\frac{B_{\geq i+1}}{4-i-1} = B_i$. Therefore,
\begin{align*}
    w^i_u + w^i_v
    &\leq \frac{1}{2^{4-i}}   
    +\left(\frac{1}{2} + \frac{B_{\geq i}}{2(4-i)}\right) \left(\frac{1}{2} + \frac{B_{\geq i}}{2(4-i)}\right)^{4-i-1}
    \\&= \frac{1}{2^{4-i}}   
    +\left(\frac{1}{2} + \frac{B_{\geq i}}{2(4-i)}\right)^{4-i},
\end{align*}
and this equality can only be tight when i) $B_i = \frac{B_{\geq i}}{4-i}$ (by the unique maximizer argument above) and ii) the second inequality in Equation~\eqref{eq:bound_induction} is tight, which means 
\[
B_{\geq i+1} = B_{\geq i} - B_i = \frac{4-i-1}{4-i} B_{\geq i}
\]
is split equally across the $4-i-1$ transitions between layers to the right of $L_{i+1}$, by induction hypothesis. This immediately implies that $B_i = \ldots = B_3 = \frac{B_{\geq i}}{4-i}$ when the inequality is tight.

We conclude our proof by showing an upper bound on $w_u^i,w_v^i$. By induction hypothesis, 
\[
w_u^{i+1},w_v^{i+1} \leq \left(\frac{1}{2}+\frac{B_{\ge i+1}}{2(4-i-1)}\right)^{4-i-1}.
\]
It immediately implies that 
\[
w_u^{i+},w_v^{i} \leq \left(\frac{1}{2} + \frac{B_{i}}{2}\right) \left(\frac{1}{2}+\frac{B_{\ge i+1}}{2(4-i-1)}\right)^{4-i-1},
\]
noting that the maximum transition probability from either $w_u^i$ or $w_v^i$ to the best of $w_u^{i+1},~w_v^{i+1}$ is at most $\frac{1}{2} + \frac{B_i}{2}$ (half the budget must be spent decreasing other edges, and half the budget $B_i$ is spent increasing transitions to the best node in $\cI_1$ on layer $i+1$). By the exact same argument as for the first part of the lemma, this is upper-bounded by $\left(\frac{1}{2} + \frac{B_{\geq i}}{2(4-i)}\right)^{4-i}$. Hence the induction hypothesis holds at layer $i$.
\end{proof}

\section{Omitted Proofs}
\subsection{Proof of Theorem~\ref{thm:MMW_approximation}: Algorithmic Guarantees for Ex-Post-Fairness}\label{app:maximin_program}

Recall that $OPT_{MM}^\varepsilon$ is the optimum maximin value under discretized splits of the budget. Let $M_1^\varepsilon, \ldots, M_{k-1}^\varepsilon$ be a set of transition matrices with expected reward for each starting position $i$ lower-bounded by by $R^\top M_{k-1}^\varepsilon \ldots M_1^\varepsilon e_i \geq OPT_{MM}^\varepsilon \triangleq OPT_{MM} - (k-1) \varepsilon \left\Vert R \right\Vert_{\infty}$ that is feasible with respect to budget split $B_1^\varepsilon,\ldots, B_{k-1}^\varepsilon$. Note that such matrices exist by Claim~\ref{clm: budget_loss}. Let $E_t \in \cD^w$ denote the population-wise probability distribution that is induced by these transition matrices on layer $t$, i.e.
\[
E_t^j = M^\varepsilon_{t-1} \ldots M^\varepsilon_1 e_j~\forall j \in [w].
\]

To prove the result, we will show by induction that for all $B_{\geq t} \geq B^\varepsilon_{\geq t}$, for $A_t \in \cA(\varepsilon)$ such that $\Vert A^j_t - E^j_t \Vert_1 \leq \varepsilon~~\forall j \in [w]$, we have
\[
R^\top M(B_{\geq t},A_t) A^j_t \geq OPT_{MM}^\varepsilon - 2(k - t) \varepsilon \Vert  R \Vert_{\infty}, ~~\forall j \in [w],
\]
i.e., a population-wise welfare approximation guarantee starting at any layer $t$. Since we can take $B_{\geq 1} = B^\varepsilon$, this directly implies
\[
R^\top M(B^\varepsilon,A_1) e_j \geq OPT_{MM}^\varepsilon - 2(k-1) \varepsilon \left\Vert R \right\Vert_{\infty}~\forall j \in [w].
\]
Combined with Claim~\ref{clm: budget_loss}, which states that $OPT_{MM}^\varepsilon \geq OPT_{MM} - (k-1) \varepsilon \left\Vert R \right\Vert_{\infty}$, we obtain the result. 

Let us now provide our inductive proof. First, consider the transition from layer $L_{k-1}$ to layer $L_k$. Note that 
\[
OPT_{MM}^\varepsilon \le  R^\top M_{k-1}^\varepsilon\ldots M_1^\varepsilon e_j = R^\top M_{k-1}^\varepsilon E_{k-1}^j~~\forall j \in [w]
\]
using the fact that $E_{t+1}^j = M_t^\varepsilon E_t^j$. Let $A_{k-1} \in \cA(\varepsilon)$ be such that $\Vert A^j_{k-1} - E^j_{k-1} \Vert \leq \varepsilon~~\forall j \in [w]$. Note that such a $A_{k-1}$ always exists (by definition of $\cA(\varepsilon)$), and is considered by Algorithm~\ref{alg: max_MM}. By Corollary~\ref{cor:approx_loss}, 
\[
R^\top M_{k-1}^\varepsilon A^j_{k-1} \geq R^\top M_{k-1}^\varepsilon E^j_{k-1} - \varepsilon \Vert  R \Vert_{\infty} \geq OPT_{MM}^\varepsilon  - \varepsilon \Vert  R \Vert_{\infty}~~\forall j \in [w].
\]
Further, $M_{k-1}^\varepsilon$ is feasible for Program~\eqref{MMW_DP} with respect to $B_{\geq k-1}, B_{\geq k} = 0$, given $B_{\geq k-1} \geq B_{\geq k-1}^\varepsilon$. As such, for $B_{\geq k-1} \geq B_{\geq k-1}^\varepsilon$, by optimality of $M(B_{\geq k-1},A_{k-1})$, we have that
\[
\min_{j \in [w]} R^\top M(B_{\geq k-1},A_{k-1}) A^j_{k-1}  \geq \min_{j \in [w]} R^\top M_{k-1}^\varepsilon A^j_{k-1},
\]
and in turn
\[
R^\top M(B_{\geq k-1},A_{k-1}) A^j_{k-1} 
\geq OPT_{MM}^\varepsilon - \varepsilon \Vert  R \Vert_{\infty} ~~\forall j \in [w].
\]

Now, suppose the induction hypothesis holds at layer $t+1$. I.e., for all $B_{\geq t+1} \geq B^\varepsilon_{\geq t+1}$, for $A_{t+1} \in \cA(\varepsilon)$ such that $\Vert A^j_{t+1} - E_{t+1}^j \Vert_1 \leq \varepsilon~~\forall j \in [w]$,
\[
R^\top M(B_{\geq t+1},A_{t+1}) A^j_{t+1} \geq OPT_{MM}^\varepsilon - 2(k - t - 1) \varepsilon \Vert  R \Vert_{\infty}~~\forall j \in [w].
\]
For any given $B_{\geq t} \geq B_{\geq t}^\varepsilon$, note that one can set $B_{\geq t+1} = B_{\geq t+1}^\varepsilon$ and have $B_t \geq B_t^\varepsilon$; hence, $M_t^\varepsilon$ is feasible for Program~\eqref{MMW_DP} with respect to $B_t \geq B_t^\varepsilon,B^\varepsilon_{\geq t+1}$. Consider $A_t \in \cA(\varepsilon)$ such that $\Vert A^j_{t} - E_{t}^j \Vert_1 \leq \varepsilon~~\forall j \in [w]$. Note that such a $A_{t}$ always exists (by definition of $\cA(\varepsilon)$, and is considered by Algorithm~\ref{alg: max_MM}. Since we have $\Vert A^j_t -E_t^J \Vert_1 \leq \varepsilon$ and $\Vert A^j_{t+1} - M_t^\varepsilon E_t^j \Vert_1 \leq \varepsilon$ $\forall j \in [w]$, applying Corollary~\ref{cor:approx_loss} yields that $\forall j \in [w]$,  
\begin{align*}
R^\top M(B^\varepsilon_{\geq t+1}, A_{t+1}) M^\varepsilon_t A^j_t 
&\geq R^\top M(B^\varepsilon_{\geq t+1}, A_{t+1}) M^\varepsilon_t E_t^j - \varepsilon \Vert  R \Vert_{\infty}
\\ &\geq R^\top M(B^\varepsilon_{\geq t+1}, A_{t+1}) A^j_{t+1} - 2 \varepsilon \Vert  R \Vert_{\infty}.
\end{align*}
Using the induction hypothesis, we obtain that
\[
R^\top M(B^\varepsilon_{\geq t+1}, D_{t+1}) M^\varepsilon_t A^j_t \geq OPT_{MM}^\varepsilon - 2(k - t) \varepsilon \Vert  R \Vert_{\infty}~~\forall j \in [w],
\]
which can be rewritten as 
\[
\min_{j \in [w]} R^\top M(B^\varepsilon_{\geq t+1}, D_{t+1}) M^\varepsilon_t A^j_t \geq OPT_{MM}^\varepsilon - 2(k - t) \varepsilon \Vert  R \Vert_{\infty}.
\]
In particular, by optimality of $M(B_{\geq t},A_t)$, it must be the case that
\[
\min_{j \in [w]} R^\top M(B_{\geq t}, A_{t}) A^j_t \geq OPT_{MM}^\varepsilon - 2(k - t) \varepsilon \Vert  R \Vert_{\infty},
\]
which concludes the proof of the accuracy guarantee. The running time is obtained noting that at each time step $t$, we solve one Program~\ref{MMW_DP} for each of the (at most) $\frac{B}{\varepsilon}$ possible budget splits of $B_{\geq t}$ and for each of the $\left(\left(\frac{1}{\varepsilon}\right)^w\right)^w =
\left(\frac{1}{\varepsilon}\right)^{w^2}$ population-wise probability distributions in $\cA(\varepsilon)$ on both layer $L_t$ and layer $L_{t+1}$; i.e., in a given time step, the algorithm solves $O\left( \frac{B}{\varepsilon} \left(\frac{1}{\varepsilon}\right)^{w^4}\right)$ optimization programs. Then, the algorithm finds the solution of all of these programs with the best objective value, which can be done in time linear in the number of such solutions, i.e. $O \left(\frac{B}{\varepsilon} \left(\frac{k}{\varepsilon}\right)^{w^4}\right)$. The algorithm does so over $k-1$ time steps.

\subsection{Proof of Lemma~\ref{lem:exante_accuracy}: Algorithmic Guarantees for Ex-Ante Fairness}\label{app:exante_accuracy}

The proof follows that of Theorem 1 of \citet{FS96}. Note that we can rewrite the objective in the normal form given in Corollary 2 of \citet{FS96}, by letting the payoff matrix $G$ be such that $G\left(\left(M_1,\ldots,M_{k-1}\right),q\right) = R^\top M_{k-1} \ldots M_1 e_q$ when the designer plays $\left(M_1,\ldots,M_{k-1}\right) \in \cF$ and the learner plays $q \in [w]$. Noting that $G$ has entries bounded between $0$ and $\Vert R \Vert_{\infty}$, we can apply Corollary 2 of \citet{FS96} to loss $R^\top M^t_{k-1} \ldots M^t_1 D^t$ with an appropriate renormalization to show the following low-regret statement:
\[
\frac{1}{T} \sum_{t=1}^T R^\top M^t_{k-1} \ldots M^t_1 D^t \leq \min_{D \in \cD} \frac{1}{T} \sum_{t=1}^T R^\top M^t_{k-1} \ldots M^t_1 D 
+\left( \sqrt{2 \frac{\ln w}{T}} + \frac{\ln w}{T} \right) \Vert R \Vert_{\infty},
\]
Since $R^\top M^t_{k-1} \ldots M^t_1 D$ is linear in $D$, we have
\[
\min_{D \in \cD} \frac{1}{T} \sum_{t=1}^T R^\top M^t_{k-1} \ldots M^t_1 D = \min_{q \in [w]} \frac{1}{T} \sum_{t=1}^T R^\top M^t_{k-1} \ldots M^t_1 e_q,
\]
and the following low-regret statement
\begin{align}\label{eq: regret_MW}
\frac{1}{T} \sum_{t=1}^T R^\top M^t_{k-1} \ldots M^t_1 D^t 
\leq 
\min_{q \in [w]} \frac{1}{T} \sum_{t=1}^T R^\top M^t_{k-1} \ldots M^t_1 e_q 
+ \left( \sqrt{2 \frac{\ln w}{T}} + \frac{\ln w}{T} \right) \Vert R \Vert_{\infty}.
\end{align}

We can now show the result, using a similar argument to that of Theorem 1 of \citet{FS96}. To do so, we let $\bar{D} \in \cD$ be the probability distribution given by $\bar{D} \triangleq \frac{1}{T} \sum_{t=1}^T D^t$. We have that
\begin{align*}
    &\min_{q \in [w]} R^\top \E_{M \sim \overline{\Delta M}} \E \left[M_{k-1} \ldots M_1\right] e_q
    \\&= \min_{q \in [w]} \frac{1}{T} \sum_{t=1}^T R^\top M^t_{k-1} \ldots M^t_1 e_q \tag{by definition of $\overline{\Delta M}$}
    \\&\geq \frac{1}{T} \sum_{t=1}^T R^\top M^t_{k-1} \ldots M^t_1 D^t 
    - \left( \sqrt{2 \frac{\ln w}{T}} + \frac{\ln w}{T} \right) \Vert R \Vert_{\infty}
    \tag{by Equation~\eqref{eq: regret_MW}}
    \\&\geq  \frac{1}{T} \sum_{t=1}^T \max_{M \in \cF} R^\top M_{k-1} \ldots M_1 D^t
    -\varepsilon
    -\left( \sqrt{2 \frac{\ln w}{T}} + \frac{\ln w}{T} \right) \Vert R \Vert_{\infty} \tag{$M^t$ $\varepsilon$-approx. best response to $D^t$}
    \\&\geq \max_{M \in \cF} \frac{1}{T} \sum_{t=1}^T R^\top M_{k-1} \ldots M_1 D^t
    -\varepsilon
    -\left( \sqrt{2 \frac{\ln w}{T}} + \frac{\ln w}{T} \right) \Vert R \Vert_{\infty}
    \tag{Max of sum less than sum of max}
    \\& \geq \max_{\Delta M \in \Delta \cF} \frac{1}{T} \sum_{t=1}^T  \E_{M \sim \Delta M} \left[R^\top M_{k-1} \ldots M_1 D^t\right] 
    -\varepsilon \tag{Expectation over $\Delta M$ less than best realization of $\Delta M$, and the realization is in $\cF$}
    -\left( \sqrt{2 \frac{\ln w}{T}} + \frac{\ln w}{T} \right) \Vert R \Vert_{\infty}
    \\& = \max_{\Delta M \in \Delta \cF} R^\top \E_{M \sim \Delta M} \left[M_{k-1} \ldots M_1\right] \bar{D}
    -\varepsilon
    -\left( \sqrt{2 \frac{\ln w}{T}} + \frac{\ln w}{T} \right) \Vert R \Vert_{\infty}
    \\&\geq \min_{D \in \cD}  \max_{M \in \Delta\cF} R^\top \E_{M \sim \Delta M} \left[M_{k-1} \ldots M_1\right] D
    -\varepsilon
    -\left( \sqrt{2 \frac{\ln w}{T}} + \frac{\ln w}{T} \right) \Vert R \Vert_{\infty}
    \\&\geq \max_{\Delta M \in \Delta \cF} \min_{D \in \cD} R^\top \E_{M \sim \Delta M} \left[M_{k-1} \ldots M_1\right] D
    -\varepsilon
    -\left( \sqrt{2 \frac{\ln w}{T}} + \frac{\ln w}{T} \right) \Vert R \Vert_{\infty} \tag{Max-min inequality}
    \\&= \max_{\Delta M \in \Delta \cF} \min_{q \in [w]} R^\top \E_{M \sim \Delta M} \left[M_{k-1} \ldots M_1\right] e_q
    -\varepsilon
    -\left( \sqrt{2 \frac{\ln w}{T}} + \frac{\ln w}{T} \right) \Vert R \Vert_{\infty}.
\end{align*}
This concludes the proof.

\subsection{Omitted Proofs for Price of Fairness}\label{app:price_fairness}

\subsubsection{Proof of the Lower Bound of Theorem~\ref{thm: fairness_lb}}\label{app:fairness_lb}

Note that $P_f \geq 1$ is always true, by definition. The proof of the other two cases when $B \leq 2w$ is based on Example~\ref{ex: lb_construction}. We divide the analysis of the construction into the following cases:
\begin{enumerate}
\item $B \leq 2$. It is easy to see that $OPT_{SW} = \frac{B}{2} (1-(w-1) \varepsilon)$, and is achieved by the following transition matrix:
\begin{align*}
M^*_1 = \left(
\begin{matrix}
B/2 & 0 & \ldots & 0
\\1 - B/2 & 1 & \ldots & 1
\end{matrix}
\right)
\end{align*}
Now note that the maximin solution is unique (and in particular, is the maximim solution with the highest social welfare), and this unique maximin solution splits the budget evenly among the starting nodes and yields social welfare $\frac{B}{2w}$, via transition matrix
\begin{align*}
M_1^{f} = \left(
\begin{matrix}
\frac{B}{2w} & \ldots & \frac{B}{2w} 
\\1- \frac{B}{2w} & \ldots & 1- \frac{B}{2w}
\end{matrix}
\right)
\end{align*}
Therefore, we have that 
\[
P_f^+(\varepsilon) = \frac{B(1- w\varepsilon)/2}{\frac{B}{2w}} = w(1 - w \varepsilon),
\]
and
\[
\lim_{\varepsilon \to 0} P_f(\varepsilon) = w.
\]

\item Now, consider the case when $2 \leq B \leq 2w$. On the one hand, note that $OPT_{SW} \geq 1 - (w-1)\varepsilon$, as setting
\begin{align*}
M^*_1 = \left(
\begin{matrix}
1 & 0 & \ldots & 0
\\ 0 & 1 & \ldots & 1
\end{matrix}
\right)
\end{align*}
only requires a budget of $2$ hence is feasible for Program~\eqref{SW_program}. The unique maximin solution is still given by $M_1^f$ and has welfare $\frac{B}{2w}$. As such, we have 
\[
P_f(\varepsilon) \geq \frac{1- w \varepsilon}{\frac{B}{2w}} = 2w \frac{1 - (w-1)\varepsilon}{B}.
\]
In particular, taking $\varepsilon \to 0$, we get that a lower bound on the price of fairness is given by $\frac{2w}{B}$.
\end{enumerate}
The proof for $B \geq 2w$ is immediate, noting that 
\begin{align*}
M^*_1 = \left(
\begin{matrix}
1 & 1 & \ldots & 1
\\ 0 & 0 & \ldots & 0
\end{matrix}
\right)
\end{align*}
is feasible. As such $OPT_{SW} = 1$, and $M^*_1$ is a maximin solution with welfare $1$.

\subsubsection{Proof of Lemma~\ref{lem: ub_OPT}}\label{app:ub_OPT}

\begin{proof}
The first part of the claim is immediate from noting that given an optimal solution $M_1^*,\ldots,M_{k-1}^*$ to Program~\eqref{SW_program}, 
\[
OPT_{SW} = R^\top M_{k-1}^* \ldots M_1^* D_1^0
\leq \left\Vert R \right\Vert_{\infty} \left\Vert M_{k-1}^* \ldots M_1^* D_1^0 \right\Vert
= \left\Vert R \right\Vert_{\infty},
\]
where the last equality follows from $M_{k-1}^* \ldots M_1^* D_1^0$ being a probability distribution.

For the second part of the claim, consider any feasible solution $M_1,\ldots,M_{k-1}$ with corresponding split $B_1, \ldots, B_{k-1}$ of the budget. I.e., $B = \sum_{t = 1}^{k-1} B_t$, and $\sum_i \sum_j \left\vert M_t(i,j) - M_t^0(i,j) \right\vert \leq B$ for all $t$. Note that at layer $L_{t}$, for any input distribution $D_{t}$, and vector $R_{t+1}$ with non-negative coordinates at layer $t+1$, we have that 
\begin{align*}
R_{t+1}^\top \left(M_{t} - M_{t}^0\right) D_{t}
&= \sum_{i=1}^w R_{t+1}(i) \sum_{j=1}^w \left(M_t(i,j) - M_t^0(i,j)\right)  D_t(j)
\\&= \sum_{j=1}^w D_t(j) \sum_{i=1}^w R_{t+1}(i) \left(M_t(i,j) - M_t^0(i,j)\right)  
\\&\leq \sum_{j=1}^w D_t(j) 
\sum_{i \in S_j} R_{t+1}(i) \left(M_t(i,j) - M_t^0(i,j)\right)  
\\&\leq \left\Vert R_{t+1} \right\Vert_{\infty }
\sum_{j=1}^w D_t(j) \sum_{i \in S_j} M_t(i,j) - M_t^0(i,j)
\end{align*}
where $S_j = \{i:~ M_t(i,j) - M_t^0(i,j) \geq 0\}$ and where the second-to-last inequality follows from the fact that $R_{t+1}(i) \geq 0$. As $M_t,M_t^0 \in \cM$, we have that
\[
\sum_{i=1}^w  M_t(i,j) - M_t^0(i,j) = 1 - 1 = 0,
\]
which implies that 
\begin{align*}
    \sum_{i \in S_j}  M_t(i,j) - M_t^0(i,j)
    = \sum_{i = 1}^w  M_t(i,j) - M_t^0(i,j) - \sum_{i \notin S_j} M_t(i,j) - M_t^0(i,j)
    = - \sum_{i \notin S_j} M_t(i,j) - M_t^0(i,j).
\end{align*}
In turn, we have that 
\begin{align*}
    \frac{B_t}{2} 
    \geq \frac{c(M_t,M_t^0)}{2}
    &=\frac{1}{2}\sum_j \sum_i \left\vert M_t(i,j) - M_t^0(i,j) \right\vert 
    \\&= \frac{1}{2} \sum_j \left(\sum_{i \in S_j} M_t(i,j) - M_t^0(i,j) -  \sum_{i \notin S_j} M_t(i,j) - M_t^0(i,j) \right)
    \\&= \sum_j \sum_{i \in S_j} M_t(i,j) - M_t^0(i,j).
\end{align*}
This can also be seen noting that increasing edges of $M_t^0$ by a total amount of $\delta B_t$ requires decreasing other edges by a total amount of $\delta B_t$ for $M_t$ to be a stochastic matrix and so requires a total budget of $2\delta B_t$, which in turn implies that $\delta \leq \frac{1}{2}$ necessarily. This implies that
\begin{align*}
R_{t+1}^\top \left(M_{t} - M_{t}^0\right) D_{t}
&\leq \left\Vert R_{t+1} \right\Vert_{\infty }
\sum_{j=1}^w D_t(j) \sum_{i \in S_j} M_t(i,j) - M_t^0(i,j)
\\&\leq \left\Vert R_{t+1} \right\Vert_{\infty }
\sum_{j=1}^w \sum_{i \in S_j} M_t(i,j) - M_t^0(i,j)
\\&\leq \frac{B_t}{2} \left\Vert R_{t+1} \right\Vert_{\infty }
\end{align*}
Applying the above inequality recursively, we have that 
\begin{align*}
    R^\top M_{k-1} \ldots M_{1} D_1^0 
    &\leq \frac{B_{k-1}}{2} \left\Vert R \right\Vert_{\infty} 
    + R^\top M^0_{k-1} M_{k-2} \ldots M_{1} D_1^0
    \\&\leq \frac{B_{k-1} + B_{k-2}}{2} \left\Vert R \right\Vert_{\infty}
    +  R^\top M^0_{k-1} M^0_{k-2} M_{k-3} \ldots M_{1} D_1^0
    \\&~~~~~~~~~~~~~~~~~~~~~~~~~~~\vdots
    \\&\leq \frac{\sum_t B_{t}}{2} \left\Vert R \right\Vert_{\infty}
    +  R^\top M^0_{k-1} \ldots M^0_{1} D_1^0
    \\& = \frac{B}{2} + W^0.
\end{align*}
\end{proof}

\subsubsection{Proof of Lemma~\ref{lem: lb_maxmin}}\label{app:lb_maxmin}

First, we consider the case when $\frac{B}{2w} \leq 1$. We focus on the transition from the second-to-last layer $L_{k-1}$ to the last layer $L_k$. Remember that on layer $L_k$, $R(1) =\Vert R \Vert_{\infty}$. We re-number (w.l.o.g.) the nodes on layer $k-1$ so that $M^0_{k-1}(1,i) \geq \frac{B}{2w}$ for all nodes $i \in [l]$, and $M^0_{k-1}(1,i) < \frac{B}{2w}$ for all nodes $i \in \{l+1,\ldots,k\}$, for some $l \in \{0,\ldots,w\}$. Let us set
\begin{align*}
M_{k-1} = M_{k-1}^0 +  \left(
\begin{matrix}
0 & 0 & \ldots & 0 & \frac{B}{2w} - M^0_{k-1}(1,l+1) & \ldots & \frac{B}{2w} - M^0_{k-1}(1,w)
\\0 & 0 & \ldots & 0 & -\alpha_{2,l+1} & \ldots & -\alpha_{2,w}& 
\\ \vdots & \vdots & \vdots & 0 & \vdots & \cdots & \vdots
\\0 & 0 & \ldots &  0 & -\alpha_{w,l+1} & \ldots & -\alpha_{w,w}& 
\end{matrix}
\right),
\end{align*} and
\begin{align*}
M_t = M_t^0~\forall t < k-1,
\end{align*}
where the $\alpha_{i,j}$'s are chosen to guarantee $M^0_{k-1}(i,j) \geq \alpha_{i,j} \geq 0$ for all $i,j \in [w]$ and $\sum_{i=2}^w \alpha_{i,j} = \frac{B}{2w} - M^0_{k-1}(1,j) > 0$ for all $j > l$. Such a choice of $\alpha_{i,j}$'s exists as 
\[
\sum_{j=2}^w M^0_{k-1}(i,j) = 1 - M^0_{k-1}(1,j) \geq /(2w) - M^0_{k-1}(1,j).
\]
Now, note that $M_{k-1} \in \cM$, as all coefficients are between $0$ and $1$ and the elements within the same column still sum to $1$ by choice of $\alpha_{i,j}$'s. Finally, 
\[
c(M_{k-1},M_{k-1}^0) 
= \sum_{j = l+1}^w 2 \left(\frac{B}{2w} - M^0_{k-1}(1,j)\right)
\leq \sum_{j = l+1}^w 2 \frac{B}{2w}
\leq B.
\]
Therefore, $(M_{k-1},\ldots,M_1)$ is a feasible solution for Programs~\ref{SW_program} and~\ref{maxmin_program} under budget $B$.

Now, note that by construction, $M_{k-1}(1,j) \geq \frac{B}{2w}$ for all $j \in [w]$. In turn, this implies that for any distribution $D_{k-1}$,
\begin{align*}
R^\top M_{k-1} D_{k-1} 
= \sum_{i=1}^w R(i) \left(M_{k-1} D_{k-1}\right)(i)
&= \sum_{i=1}^w R(i) \sum_{j=1}^w M_{k-1}(i,j) D_{k-1}(j)
\\& \geq R(1) \sum_{j=1}^w M_{k-1}(1,j) D_{k-1}(j)
\\&\geq R(1) \frac{B}{2w} \sum_{j=1}^w D_{k-1}(j)
\\&= \frac{B}{2w} \left\Vert R \right\Vert_{\infty}, 
\end{align*}
since $\sum_{j=1}^w D_{k-1}(j) = 1$ by virtue of $D_{k-1}$ being a probability distribution, and because $R(1) = \left\Vert R \right\Vert_{\infty}$ by choice of node indexing. In particular, for all $j$, $D_{k-1} = M_{k-2} \ldots M_1 e_j$ is a probability distribution, hence 
\[
R^\top M_{k-1} \ldots M_1 e_j \geq \frac{B}{2w} \left\Vert R \right\Vert_{\infty}.
\]
Therefore, there exists a solution to Program~\eqref{maxmin_program} that has value $\frac{B}{2w} \left\Vert R \right\Vert_{\infty}$, implying any optimal solution to Program~\eqref{maxmin_program} has value at least $\frac{B}{2w} \left\Vert R \right\Vert_{\infty}$. In turn, such an optimal solution $M^f_1, \ldots,M^f_{k-1} \in S^f$ must have welfare 
\begin{align*}
R^\top M^f_{k-1} \ldots M^f_1 D_1^0 
= \sum_{j=1}^w D_1^0(j) R^\top M^f_{k-1} \ldots M^f_1 e_j
\geq \sum_{j=1} D_1^0(j) \frac{B}{2w} \left\Vert R \right\Vert_{\infty} 
=\frac{B}{2w} \left\Vert R \right\Vert_{\infty},
\end{align*}
which concludes the proof when $B \leq 2w$. 

When $B > 2w$, let $B' = 2w$. By the above, any optimal solution to maximin Program~\eqref{maxmin_program} with budget $B' = 2w$ has value at least $\frac{B'}{2w} \left\Vert R \right\Vert_{\infty} = \left\Vert R \right\Vert_{\infty}$. This immediately implies that any optimal solution to Program~\eqref{maxmin_program} with budget $B$ also has value at least $\left\Vert R \right\Vert_{\infty}$, since any feasible solution for budget $B'$ is feasible for budget $B$. In turn, any optimal solution to maximin Program~\eqref{maxmin_program} under budget $B > 2w$ must have welfare at least $\left\Vert R \right\Vert_{\infty}$.

\subsubsection{Proof of Lemma~\ref{lem: lb_maxmin_w0}}\label{app:lb_maxmin_w0}

The proof of the lemma uses the following Claim~\ref{clm: maxmin_equalspending}, that shows that an optimal solution to Program~\eqref{maxmin_program} spends all the budget $B$:
\begin{claim}\label{clm: maxmin_equalspending}
Let $(M_1,\ldots,M_{k-1})$ be a solution to maximin Program~\eqref{maxmin_program} with social welfare strictly less than $\Vert R \Vert_{\infty}$. If all edges are malleable, it must be the case that $\sum_{t=1}^{k-1} c(M_t,M_t^0) = B$.
\end{claim}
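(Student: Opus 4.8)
The plan is a perturbation argument by contradiction. Suppose $(M_1,\ldots,M_{k-1})$ is an optimal solution to Program~\eqref{maxmin_program} with social welfare strictly less than $\Vert R \Vert_{\infty}$, but $\sum_{t=1}^{k-1} c(M_t,M_t^0) < B$; let $\Delta \triangleq B - \sum_{t=1}^{k-1} c(M_t,M_t^0) > 0$ be the leftover budget. I will spend $\Delta$ on the last transition matrix $M_{k-1}$ only --- which is legal since all edges, in particular those from $L_{k-1}$ to $L_k$, are malleable --- in a way that strictly raises the maximin value, contradicting optimality.

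First I would record two elementary facts. (i) The maximin value is at most the social welfare: $W(M_1,\ldots,M_{k-1}) = \sum_j D_1^0(j)\, R^\top M_{k-1}\cdots M_1 e_j$ is a convex combination, with the strictly positive weights $D_1^0(j)$, of the per-population rewards, so $OPT_{MM} = \min_j R^\top M_{k-1}\cdots M_1 e_j \le W(M_1,\ldots,M_{k-1}) < \Vert R \Vert_{\infty}$. (ii) For each node $u \in L_{k-1}$ we have $R^\top M_{k-1} e_u \le \Vert R \Vert_{\infty}$, with equality if and only if $M_{k-1}$ sends all of $u$'s outgoing mass to node $1$ (the unique highest-reward node of $L_k$, since the rewards are distinct). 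Write $D_{k-1,j} \triangleq M_{k-2}\cdots M_1 e_j$ for the distribution population $j$ induces on $L_{k-1}$.

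Now the construction: let $U \triangleq \{u \in L_{k-1} : R^\top M_{k-1} e_u < \Vert R \Vert_{\infty}\}$. For each $u \in U$ fix a node $v_u \neq 1$ with $M_{k-1}(v_u,u) > 0$ (one exists by fact (ii)), and choose $\eta_u \in (0,\, M_{k-1}(v_u,u)]$ small enough that $\sum_{u \in U} 2\eta_u \le \Delta$. Define $M_{k-1}'$ from $M_{k-1}$ by moving mass $\eta_u$ from entry $(v_u,u)$ to entry $(1,u)$ for each $u \in U$, leaving all other entries and all other matrices $M_t$ (for $t<k-1$) unchanged. I would check that $M_{k-1}' \in \cM$ (each column sum is preserved and entries stay in $[0,1]$ by the choice of $\eta_u$), and that $(M_1,\ldots,M_{k-2},M_{k-1}')$ is budget-feasible: by the triangle inequality $c(M_{k-1}',M_{k-1}^0) \le c(M_{k-1},M_{k-1}^0) + c(M_{k-1}',M_{k-1}) \le c(M_{k-1},M_{k-1}^0) + \sum_{u \in U} 2\eta_u$, so the total cost is at most $(B-\Delta) + \Delta = B$.

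Finally I would verify the improvement. For every $u$, $R^\top M_{k-1}' e_u = R^\top M_{k-1} e_u + \eta_u\big(R(1) - R(v_u)\big)$ when $u \in U$ --- a strictly positive increment, since $R(1) = \Vert R \Vert_{\infty} > R(v_u)$ --- and $R^\top M_{k-1}' e_u = R^\top M_{k-1} e_u$ otherwise; in all cases $R^\top M_{k-1}' e_u \ge R^\top M_{k-1} e_u$. Hence $R^\top M_{k-1}' D_{k-1,j} \ge R^\top M_{k-1} D_{k-1,j}$ for every population $j$. If $R^\top M_{k-1} D_{k-1,j} < \Vert R \Vert_{\infty}$, then by fact (ii) some node in the support of $D_{k-1,j}$ lies in $U$, so the inequality is strict and $R^\top M_{k-1}' D_{k-1,j} > R^\top M_{k-1} D_{k-1,j} \ge OPT_{MM}$; if instead $R^\top M_{k-1} D_{k-1,j} = \Vert R \Vert_{\infty}$, then trivially $R^\top M_{k-1}' D_{k-1,j} \ge \Vert R \Vert_{\infty} > OPT_{MM}$. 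Since there are only $w$ populations, $\min_j R^\top M_{k-1}' D_{k-1,j} > OPT_{MM}$, contradicting the optimality of $(M_1,\ldots,M_{k-1})$. The main point to be careful about is making the strict improvement reach the worst-off population(s): this is exactly why I perturb \emph{every} below-maximum node of $L_{k-1}$ simultaneously, so that every population whose reward is below $\Vert R \Vert_{\infty}$ --- and in particular, thanks to fact (i), every population attaining $OPT_{MM}$ --- is strictly improved, while keeping each $\eta_u$ small enough to preserve feasibility and the stochasticity of $M_{k-1}'$.
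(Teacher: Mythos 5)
Your proof is correct and follows essentially the same perturbation argument as the paper's: assume leftover budget, shift a small amount of mass in column $u$ of $M_{k-1}$ from a suboptimal row to row $1$ for every column $u$ with sub-maximum expected reward, verify feasibility via the triangle inequality, and conclude a strict increase in the minimum. The only cosmetic differences are that the paper spreads the mass decrease over possibly several rows while you pick a single $v_u$ per column, and you make explicit (via your fact (i)) the step that $OPT_{MM} < \Vert R\Vert_\infty$, which the paper leaves implicit.
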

The proof idea is simple: if $\sum_{t=1}^{k-1} c(M_t,M_t^0) < B$, the leftover budget can be used to improve the scial welfare, unless this social welfare already is the maximum achievable value of $\Vert R \Vert_{\infty}$.

\begin{proof}
By contradiction, suppose that $B > \sum_{t=1}^{k-1} c(M_t,M_t^0)$. Remember the numbering of nodes on layer $L_k$ is chosen such that $R(1) = \left\Vert R \right\Vert_{\infty}$. Let us pick all $j$ such that $M_{k-1}(1,j) < 1$ (if such a $j$ exists); note that since $M_{k-1}$ is stochastic, there also exists $q \neq 1$ such that $M_{k-1}(q,j) > 0$. For $\varepsilon$ arbitrarily small, there hence exists a matrix $M'_{k-1}(\varepsilon) \in \cM$ such that for all such $j$, $M'_{k-1}(1,j) = M_{k-1}(1,j) + \varepsilon$ and $\sum_{q \neq 1} M'_{k-1}(q,j) = \sum_{q \neq 1} M_{k-1}(q,j) - \varepsilon$, and such that $M'_{k-1}(q,j) = M_{k-1}(q,j)$ for all $q \in [w]$ and all $j$ with $M_{k-1}(1,j) = 1$.

Now, note that $c(M'_{k-1},M^0_{k-1})
\leq c(M'_{k-1},M_{k-1}) + c(M_{k-1},M^0_{k-1})$ with 
$
\lim_{\varepsilon \to 0} c(M'_{k-1},M_{k-1}) = 0
$. 
In turn, this implies that for $\varepsilon$ small enough,
$
\sum_{t=1}^{k-2} c(M_t,M_t^0) + c(M_t',M_t^0) \leq  B
$, 
hence $\left(M_1,\ldots,M_{k-2},M_{k-2}'\right)$ is feasible for Program~\eqref{maxmin_program}. Further, by construction, for all $j$ with $M_{k-1}(1,j) < 1$, we have
\begin{align*}
R^\top M'_{k-1} e_j - R^\top M_{k-1} e_j 
\geq \left(R(1) - \max_{q \neq 1} R(q)\right) \varepsilon
> 0,
\end{align*}
and for all $j$ with $M_{k-1}(1,j) = 1$, we have $R^\top M'_{k-1} e_j = R^\top M_{k-1} e_j$. Since for all starting nodes $i \in L_1$ such that $R^\top M_{k-1} \ldots M_{1} e_i < \left\Vert R \right\Vert_{\infty}$, there must exist $j$ such that $M_{k-1}(1,j) < 1$ and $\left(M_{k-2} \ldots M_1 \right)(j,i) > 0$ (otherwise $\left(M_{k-1} \ldots M_1 \right)(1,i) = 1$ and node $i$ obtains reward $\left\Vert R \right\Vert_{\infty}$), it immediately follows that for all such $i$,
\[
R^\top M'_{k-1} M_{k-2} \ldots M_1 e_i  > R^\top M_{k-1} M_{k-2} \ldots M_1 e_i.
\]
This contradicts $(M_1,\ldots,M_{k-1})$ being an optimal solution for Program~\eqref{maxmin_program}.
\end{proof}

We are now ready to prove Lemma~\ref{lem: lb_maxmin_w0}. Let $M_1^f,\ldots,M_{k-1}^f$ be an optimal solution for Program~\eqref{maxmin_program}. Note that if the maximin value of this solution is $\left\Vert R \right\Vert_{\infty}$, then the solution necessarily has welfare $\left\Vert R \right\Vert_{\infty} \geq W^0$, which concludes the proof. So, without loss of generality, we can assume there exists at least one starting node $q$ such that 
 \[
 R^\top M_{k-1}^f \ldots M_1^f e_q < \left\Vert R \right\Vert_{\infty}.
\]
Fix a layer $t$, and let $R_{out}^\top \triangleq R^\top M_k^f \ldots M_{t+1}^f$ and $M_{in} = M_{t-1}^f \ldots M_1^f$. Note that the utility obtained by the $q$-th node in the starting layer is immediately given by
\begin{align*}
R_i = \sum_{i=1}^w M_{in}(i,q) \sum_{j=1}^w M^f_t(j,i) R_{out}(j).
\end{align*}
Indeed, starting from node $q$ in the first layer, an individual transitions to node $i$ on layer $t$ with probability $M_{in}(i,q)$, then to node $j$ with reward $R_{out}(j)$ on layer $t+1$ with probability $M_t(j,i)$. 

Suppose by contradiction that for some $i'$, $\sum_{j=1}^w M^f_t(j,i') R_{out}(j) < \sum_{j=1}^w M^0_t(j,i') R_{out}(j)$ (necessarily, $M^f_t(j,i') \neq M^0_t(j,i')$ for some $j$). We will construct a set of transition matrices that achieves the same maximin value, but requires budget strictly less than $B$. To do so, let $M_t'$ be such that $M_t'(j,i') = M_t^0(j,i')$ for all $j$ and $M_t'(j,i) = M_t^f(j,i)$ for all $i \neq i'$, for all $j$. First, $c(M_t',M_t^0) < c(M_t^f,M_t^0)$, as
\begin{align*}
\sum_{i,j} \left\vert M_t'(j,i) - M_0(j,i)\right\vert
&= \sum_{i \neq i', j} \left\vert M_t^f(j,i) - M_0(j,i)\right\vert
\\&<  \sum_{i \neq i',j} \left\vert M_t^f(j,i) - M_0(j,i)\right\vert 
+\sum_{j} \left\vert M_t^f(j,i') - M_0(j,i')\right\vert
\\& = c(M_t,M_t^0)
\end{align*}
where the strict inequality follows from the fact that $M^f_t(j,i') \neq M^0_t(j,i')$ for some $j$. Second, for all $q$, we immediately have that as the $M_{in}(i,q)$ are non-negative, 
\[
\sum_{i=1}^w M_{in}(i,q) \sum_{j=1}^w M'_t(j,i) R_{out}(j) \geq \sum_{i=1}^w M_{in}(i,q) \sum_{q=1}^w M^f_t(j,i) R_{out}(j),
\]
since for all $i \neq i'$ we have $\sum_{j=1}^w M'_t(j,i) R_{out}(j) = \sum_{j=1}^w M^f_t(j,i) R_{out}(j)$, and by construction 
\\$\sum_{j=1}^w M'_t(j,i') R_{out}(j) = \sum_{j=1}^w M^0_t(j,i') R_{out}(j) > \sum_{j=1}^w M^f_t(j,i') R_{out}(j)$ for $i'$. In particular, this implies that $(M_1^f,\ldots,M_{t-1}^f,M_t',M_{t+1}^f,\ldots,M_{k-1}^f)$ is an optimal solution to Program~\eqref{maxmin_program} that uses budget strictly less than $B$. This contradicts Claim~\ref{clm: maxmin_equalspending}, that shows that the leftover budget can then be used to increase the optimal value of Program~\ref{maxmin_program}, implying that $M^f$ cannot be an optimal solution. Therefore, it must be the case that for all $i \in [w]$, for all $t \in [k-1]$,
\[
\sum_{j=1}^w M_t^f(j,i) R^\top M_{k-1}^f \ldots M_{t+1}^f e_j \geq \sum_{j=1}^w M_t^0(j,i) R^\top M_{k-1}^f \ldots M_{t+1}^f e_j,
\]
or equivalently
\begin{align}\label{eq: increase_utility}
R^\top M_{k-1}^f \ldots M_{t+1}^f M_t^f e_i \geq R^\top M_{k-1}^f \ldots M_{t+1}^f M_t^0 e_i. 
\end{align}
Applying this with $t = 1$, we have that for all starting $q$ on layer $L_1$,
\[
R^\top M_{k-1}^f \ldots M_{1}^f e_q \geq R^\top M_{k-1}^f \ldots M_{2}^f M_1^0 e_q.
\]
Now, suppose by induction that for all $i \in [w]$,
\[
R^\top M_{k-1}^f \ldots M_1^f e_i \geq R^\top M_{k-1}^f \ldots M_{t}^f  M_{t-1}^0 \ldots M_1^0 e_i.
\]
It follows that
\begin{align*}
R^\top M_{k-1}^f \ldots M_1^f e_q
&\geq R^\top M_{k-1}^f \ldots M_{t}^f  M_{t-1}^0 \ldots M_1^0 e_q
\\&= R^\top M_{k-1}^f \ldots M_{t}^f \sum_i \left(M_{t-1}^0 \ldots M_1^0 e_q\right)(i) e_i
\\& = \sum_i \left(M_{t-1}^0 \ldots M_1^0 e_q\right)(i) R^\top M_{k-1}^f \ldots M_{t}^f e_i
\\& \geq \sum_i \left(M_{t-1}^0 \ldots M_1^0 e_q\right)(i) R^\top M_{k-1}^f \ldots M_{t}^0 e_i
\\&=R^\top M_{k-1}^f \ldots M_{t+1}^f  M_t^0 \ldots M_1^0 e_q
\end{align*}
where the second-to-last equation follows from Equation~\eqref{eq: increase_utility}. Therefore, by induction, we have that for all $q$,
\[
R^\top M_{k-1}^f \ldots M_1^f e_q \geq R^\top M_{k-1}^0 \ldots M_1^0 e_q,
\]
directly implying that 
\[
R^\top M_{k-1}^f \ldots M_1^f D_1^0 \geq R^\top M_{k-1}^0 \ldots M_1^0 e_i D_1^0 = W^0.
\]

\end{document}